\date{}
\definecolor{colorA}{rgb}{0.0, 0.5, 0.9}
\definecolor{colorB}{rgb}{0.0, 0.42, 0.24}
\definecolor{colorC}{rgb}{0.0, 0.0, 0.9}
\definecolor{colorD}{cmyk}{ 0.0, 0.62, 0.75, 0.0 }
\definecolor{orange}{cmyk}{0.00, 0.35, 1.00, 0.00}
\definecolor{deepgreen}{cmyk}{1.00, 0.00, 1.00, 0.20}
\def\thm@space@setup{\thm@preskip=5pt
\thm@postskip=5pt}
\renewenvironment{proof}[1][\proofname]%
{\par\pushQED{\qed}%
 \normalfont \partopsep=\z@skip \topsep=\z@skip
  \trivlist
\item[\hskip\labelsep
 \itshape #1\@addpunct{.}]\ignorespaces}%
{\popQED\endtrivlist\@endpefalse}
\titlespacing*{\section}{0pt}{5pt}{7pt}
\titlespacing*{\subsection}{0pt}{8pt}{5pt}
\numberwithin{equation}{section}
\newtheorem{definition}[equation]{Definition}
\newtheorem{theorem}[equation]{Theorem}
\newtheorem{proposition}[equation]{Proposition}
\newtheorem{prop-def}[equation]{Proposition with Definition}
\newtheorem{corollary}[equation]{Corollary}
\newtheorem{conjecture}[equation]{Conjecture}
\newtheorem{remark}[equation]{Remark}
\newtheorem{lemma}[equation]{Lemma}
\newtheorem{example}[equation]{Example}
\newtheorem{working-hypothesis}[equation]{Working Hypothesis}
\newcommand{\tp}[1]{\prescript{t}{}{\kern1pt\!#1}} % requires "mathtools" package
\newcommand{\rev}[1]{\prescript{\mathrm{rev}}{}{\hskip-0.5pt#1}} % requires "mathtools" package
\newcommand{\strutf}[2]{\vrule height #1 depth #2 width 0pt}
\newcommand{\wt}{\mathrm{wt}}
\newcommand{\rslt}{\mathrm{rslt}}
\renewcommand{\leq}{\varleq}
\renewcommand{\geq}{\vargeq}
\renewcommand{\vec}[1]{\ensuremath{\mathchoice
                    {\mbox{\boldmath$\displaystyle#1$}}
                    {\mbox{\boldmath$\textstyle#1$}}
                    {\mbox{\boldmath$\scriptstyle#1$}}
                    {\mbox{\boldmath$\scriptscriptstyle#1$}}}}%
\newcommand{\iu}{\vec{i}}
\newcommand{\Res}{\mathrm{Res}}
\newcommand{\Der}{\mathrm{Der}}
\newcommand{\ph}{\mathrm{ph}}
\long\def\@makecaption#1#2{%
  \vskip\abovecaptionskip
  \iftdir\sbox\@tempboxa{#1\hskip1zw#2}%
    \else\sbox\@tempboxa{#1\hskip1zw #2}%
  \fi
  \ifdim \wd\@tempboxa >\hsize
    \iftdir #1\hskip1zw#2\relax\par
      \else #1\hskip1zw #2\relax\par\fi
  \else
    \global \@minipagefalse
    \hbox to\hsize{\hfil\box\@tempboxa\hfil}%
  \fi
  \vskip\belowcaptionskip}
\def\tbcaption{\addtocounter{equation}{1}\def\@captype{table}\caption}
\def\fgcaption{\def\@captype{figure}\caption}
\newcounter{item}
\newenvironment{oitem}{%
\begin{list}{{\rm (\arabic{item})}}
{\usecounter{item}
 \setlength{\topsep}{0pt}
 \setlength{\parskip}{0pt}
 \setlength{\partopsep}{0pt}
 \setlength{\leftmargin}{20pt}
 \setlength{\labelsep}{5pt}
 \setlength{\labelwidth}{30pt}
 \setlength{\rightmargin}{0pt}
 \setlength{\parsep}{0.0pt}
 \setlength{\itemsep}{0.0pt}
 \setlength{\itemindent}{0pt}}}
{\end{list}}
\newenvironment{Aitem}{%
\begin{list}{{\rm (A\arabic{item})}}
{\usecounter{item}
 \setlength{\topsep}{0pt}
 \setlength{\parskip}{0pt}
 \setlength{\partopsep}{0pt}
 \setlength{\leftmargin}{28pt}
 \setlength{\labelsep}{5pt}
 \setlength{\labelwidth}{30pt}
 \setlength{\rightmargin}{0pt}
 \setlength{\parsep}{0.0pt}
 \setlength{\itemsep}{0.0pt}
 \setlength{\itemindent}{0pt}}}
{\end{list}}
\title{Theory of Heat Equations for Sigma Functions
  \footnote{Version : \today}}
\author[1]{J. Chris Eilbeck} 
\affil[1]{\footnotesize Department of
Mathematics and the Maxwell
Institute for Mathematical Sciences\\
Heriot-Watt University, Edinburgh, UK\\
{\tt J.C.Eilbeck@hw.ac.uk}} 
\author[2]{John Gibbons}
\affil[2]{\footnotesize Department of Mathematics, Imperial College,
London, UK\\
{\tt j.gibbons@imperial.ac.uk}} 
\author[3]{Yoshihiro \^Onishi}
\affil[3]{\footnotesize Department of Mathematics, 
Faculty of Science and Technology\\
Meijo University, Tenpaku, Nagoya, Japan\\
{\tt yonishi@meijo-u.ac.jp}}
\author[4]{\\ Seidai Yasuda}
\affil[4]{\footnotesize 
Department of Mathematics, 
Hokkaido University, Sapporo, Japan\\
{\tt sese@math.sci.hokudai.ac.jp}} % s-yasuda@math.sci.osaka-u.ac.jp}}
\begin{document}
\raggedbottom
\allowdisplaybreaks
\pagestyle{plain}
\maketitle
\begin{abstract}
  Let \,\(e\) \,and \,\(q\) \,be fixed co-prime integers satisfying
  \(1<e<q\).  Let \(\mathscr{C}\) be a certain family of deformations of
  the curve \(y^e=x^q\).  That family is called the \((e,q)\)-curve
  and is one of the types of curves called plane telescopic curves.
  Let \(\varDelta\) be the discriminant of \(\mathscr{C}\).  Following
  pioneering work by Buchstaber and Leykin (BL), we determine the
  canonical basis \(\{ L_j \}\) of the space of derivations tangent to
  the variety \(\varDelta=0\) and describe their specific properties.
  Such a set \(\{ L_j \}\) gives rise to a system of linear partial
  differential equations (heat equations) satisfied by the function
  \(\sigma(u)\) associated with \(\mathscr{C}\), and eventually gives
  its explicit power series expansion.  This is a natural
  generalisation of Weierstrass' result on his sigma function.  We
  attempt to give an accessible description of various aspects of the
  BL-theory.  Especially, the text contains detailed proofs for
  several useful formulae and known facts since we know of no works
  which include their proofs.
  \\
  2010 {\it Mathematics subject classification}. \ 33E99,\ 14K25,\
  14H42,\ 11G05
\end{abstract}
\newpage
\noindent
\textbf{\Large Introduction}%
\\[7pt]% \vskip 0pt
Classically, the Weierstrass function \(\sigma(u)\) 
is defined through the Weierstrass elliptic function \(\wp(u)\) as follows:
\begin{equation*}
  \sigma(u)=u\exp\bigg(\int_0^u\int_0^u\Big(\tfrac1{u^2}
  -\wp(u)\Big)du\,du\bigg). 
\end{equation*}
The modern approach is to define the sigma function starting from a
general elliptic curve.  
However, in this introduction, we treat only the curve defined by
\begin{equation}\label{curve_of_genus1}
y^2=x^3+\mu_4x+\mu_6 \ \ \ \mbox{(Weierstrass form)}. 
\end{equation}
(We refer the reader to \cite{eo_2019} for the case of the most
general elliptic curve.)
For this curve, we define the function \(\sigma(u)\) by 
\begin{equation}\label{weierstrass_sigma}
\sigma(u)=\Big(\frac{2\pi}{\omega'}\Big)^{1/2}\varDelta^{-\frac18}\,
\exp\big(-\tfrac12{\omega'}^{-1}\eta'u^2\big)
\cdot
\vartheta\bigg[\,
  \begin{matrix}
  \frac12 \\[0pt] \frac12
  \end{matrix}\,
\bigg]\!({\omega'}^{-1}u,\omega''/\omega'),
\end{equation}
where \(\varDelta=-16(4{\mu_4}^3+27{\mu_6}^2)\)
is the discriminant of the curve, 
and \(\omega'\), \(\omega''\), \(\eta'\), and \(\eta''\)
are the periods of the two differential forms 
\begin{equation*}
\dfrac{dx}{2y}, \dfrac{xdx}{2y}
\end{equation*}
with respect to a pair of fixed standard closed paths 
\(\alpha_1\) and \(\beta_1\) 
which represents a symplectic basis of the first homology group, 
though \(\eta''\) does not appear explicitly. 
The last part of (\ref{weierstrass_sigma}) is 
Jacobi's theta series defined by
\begin{equation}\label{jacobi_theta}
\vartheta\Big[\,
  \begin{matrix}
  b \\[-3pt] a  
  \end{matrix}\,
\Big]\!(z,\tau)=\sum_{n\in\mathbb{Z}}
\exp2\pi\iu\big(\tfrac12\tau(n+b)^2+(n+b)(z+a)\big)
\ \ \ \ (a,\ b\in\tfrac12\mathbb{Z}).
\end{equation}
From now on, we suppose the function \(\sigma(u)\)
is defined by (\ref{weierstrass_sigma}).  
In using this definition, it is not clear that \(\sigma(u)\)
is independent of the choice of \(\alpha_1\) and \(\beta_1\).
Indeed, both of the later part (Jacobi's theta series) of (\ref{weierstrass_sigma}) 
and former part are not invariant when we choose another pair of \(\alpha_1\) and \(\beta_1\).
However these changes offset each other and \(\sigma(u)\) itself is invariant.
\par 
Using the Dedekind eta function \(\eta(\tau)\) (not to be confused with the periods \(\eta\)
above and in Section \ref{the_curve}), the discriminant \(\varDelta\)
of the curve above is given by
\(\varDelta=\big(\frac{2\pi}{\omega'}\big)^{12}\eta(\omega''/\omega')^{24}\),
and the first terms in (\ref{weierstrass_sigma}) can be explicitly
written as
\begin{equation}\label{pre_factor}
\Big(\frac{2\pi}{\omega'}\Big)^{1/2}\varDelta^{-\frac18}
= -\frac{\ \omega'\,}{2\pi}\eta(\omega''/\omega')^{-3}.
\end{equation}
Although \(\varDelta\)
is invariant with respect to a change of \(\alpha_1\)
and \(\beta_1\), 
both sides of (\ref{pre_factor}) and \(\omega'\) 
are not invariant.  
The function \(\sigma(u)\) has a power series expansion 
at the origin as follows:
\begin{equation}\label{23sigma_expansion}
  \begin{aligned}
    \sigma(u)&=
{\smash
    u\sum_{\substack{n_4,n_6\geqq0\\}}b(n_4,n_6)
    \frac{(\mu_4u^4)^{n_4}(\mu_6u^6)^{n_6}}{(1+4n_4+6n_6)!}
}% END of smash
\\[-3pt]
    &\ \ \ =u 
    +2\mu_4\frac{u^5}{5!}
    +24\mu_6\frac{u^7}{7!}
    -36\mu_4^2\frac{u^9}{9!}
    -288\mu_4\mu_6\frac{u^{11}}{11!}
    +\cdots,
  \end{aligned}
\end{equation}
where \(b(n_4,n_6)\in\mathbb{Z}\).  
(This expansion also shows the independence of \(\sigma(u)\) 
with respect to the choice of \(\alpha_1\) and \(\beta_1\)).  
\par
In this work, 
we are interested in the \textit{sigma function}, 
which is a multivariate entire function, associated to a curve 
which we call \((n,s)\)-\textit{curves} or 
{\it plane telescopic curves} (see Section \ref{the_curve} for definitions). 
One of the motivations of a theory of heat equations 
is to get a recurrence relation for the expansion coefficients of the sigma function, 
like that of the \(b(n_4,n_6)\)s. 
But, there is another motivation as follows. 
For such a general non-singular curve, 
there is an intrinsic or axiomatic definition 
(see \ref{char_sigma}) 
of the sigma function. 
It would be useful if we have an expression, like (\ref{weierstrass_sigma}), 
for the generalised sigma function. 
Indeed it is not so difficult to show that 
the natural generalisation of the right hand side of (\ref{weierstrass_sigma}),
but dropping the factor corresponding to (\ref{pre_factor}), satisfies 
some of the conditions in \ref{char_sigma}.  
Before \cite{bl_2008}, except for curves of genus one and two, 
the validity of the expression including the natural generalisation of 
the factor (\ref{pre_factor}) was not yet proved completely.  
We shall discuss this motivation again at the end of this introduction.
\par
It is well known that the sigma function for a general non-singular
algebraic curve is expressed by Riemann's theta series with a
characteristic coming from the Riemann constant of the curve
multiplied by some exponential factor and some constant factor.  This
constant factor might be a natural generalisation of
(\ref{pre_factor}).  But, there seems to be no proof of the
determination of this constant factor except in genus one and two (we
mention this again later).  To fix the last constant is another
motivation of the theory, which is described around Lemma 4.17 of
\cite{bl_2005}.
\par
We now review the classical theory of the heat equations for
\(\sigma(u)\).  Let \(z\) and \(\tau\) be complex numbers with the
imaginary part of \(\tau\) positive.  We define
\(L=4\pi\iu\frac{\partial}{\partial\tau}\) \ and \
\(H=\frac{\partial^2}{\partial{z}^2}\).  Then Jacobi's theta function
(\ref{jacobi_theta}) satisfies the following equation, which is known
as the heat equation,
\begin{equation}\label{heq_jacobi_theta}
(L-H)\,\vartheta\Big[\,\begin{matrix}b \\[-3pt] a \end{matrix}\,\Big]\!(z,\tau)=0. 
\end{equation}
Weierstrass' result in \cite{weierstrass_1882}, which is displayed as
(\ref{intro_heat23}) below, is regarded as an interpretation of
(\ref{heq_jacobi_theta}) in the form attached to his function
\(\sigma(u)\).  Strictly speaking, he did not derive it directly,
but only by elementary and quite technical integrations from the well-known
differential equation \(\wp'(u)^2=4\wp(u)^3-g_2\wp(u)-g_3\), 
where \(g_2=-4\mu_4\) and \(g_3=-4\mu_6\), satisfied
by the \(\wp(u)=-\frac{d^2}{\,du^2\,}\log\sigma(u)\). 
He eventually obtained the recurrence relation 
\begin{equation}\label{intro_recurrence}
\begin{aligned}
b(n_4,& n_6)=\tfrac23(4n_4+6n_6-1)
(2n_4+3n_6-1)b(n_4-1,n_6)\\
&-\tfrac{8}{3}(n_6+1)b(n_4-2,n_6+1) +12(n_4+1)b(n_4+1,n_6-1)
\end{aligned}
\end{equation}
for the coefficients of the expansion
(\ref{23sigma_expansion}) of \(\sigma(u)\) (see also \S\ref{section_23-curve} and \cite{onishi_2015b}). 
\par
Frobenius and Stickelberger approached (\ref{intro_recurrence}) via a
different method in their paper \cite{fs_1882} which was published
in the same year as \cite{weierstrass_1882}. 
Using the expansion 
\begin{equation}\label{wp_expansion}
  \wp(u)=\frac1{u^2}+\frac{g_2}{20}u^2+\frac{g_3}{28}u^4 
  +\frac{{g_2}^2}{1200}u^6+\cdots, 
\end{equation}
and the corresponding expansion of the Weierstrass function
\begin{equation*}
\zeta(u)=\frac1u-\int_0^u\big(\wp(u)-\frac1{u^2}\big)du
\end{equation*}
with 
\begin{equation}
\label{g2_g3}
g_2=60\!\!\!\!\!\!\sum_{(n',n'')\neq(0,0)}\!\frac{1}{(n'\omega'+n''\omega'')^4}, \ \ 
g_3=140\!\!\!\!\!\!\sum_{(n',n'')\neq(0,0)}\!\frac{1}{(n'\omega'+n''\omega'')^6},
\end{equation}
they obtained the formulae
  \begin{align*}
    \omega'\frac{\partial g_2}{\partial \omega'}
    +\omega''\frac{\partial g_2}{\partial \omega''}&=-4g_2,\ \ \ \
    \omega'\frac{\partial g_3}{\partial \omega'}
    +\omega''\frac{\partial g_3}{\partial \omega''}=-6g_3,\\
    \eta'\frac{\partial g_2}{\partial \omega'} +\eta''\frac{\partial
      g_2}{\partial \omega''}&=-6g_3, \ \ \ \ \eta'\frac{\partial
      g_3}{\partial \omega'} +\eta''\frac{\partial g_3}{\partial
      \omega''}=-\frac13{g_2}^2,
  \end{align*}
where \(\eta'=\zeta(u+\omega')-\zeta(u)\), \(\eta''=\zeta(u+\omega'')
 -\zeta(u)\) which are independent of \(u\), 
and  (see (\ref{23L_operators}))
\begin{equation}\label{L_for_genus1}
  \begin{aligned}
    \omega'\frac{\partial}{\partial \omega'}
    +\omega''\frac{\partial}{\partial \omega''}
    &=-4g_2\frac{\partial}{\partial g_2}
    -6g_3\frac{\partial}{\partial g_3}, \\
    \eta'\frac{\partial}{\partial \omega'}
    +\eta''\frac{\partial}{\partial \omega''}
    &=-6g_3\frac{\partial}{\partial g_2}
    -\frac13{g_2}^2\frac{\partial}{\partial g_3}.
  \end{aligned}
\end{equation}
In our notation of the present paper, these operators (\ref{L_for_genus1}) 
are denoted by \(4L_0\) and \(4L_2\), respectively. 
Moreover, they found (p.318 in \cite{fs_1882}) 
\begin{equation*}
  \omega'\frac{\partial\eta'}{\partial\omega'}+\omega''\frac{\partial\eta'}
  {\partial\omega''}=\eta', \ \ \ 
  \omega'\frac{\partial\eta''}{\partial\omega'}+\omega''\frac{\partial\eta''}
  {\partial\omega''}=\eta''.
\end{equation*}
and gave, on p.326 of \cite{fs_1882}, 
exactly the same system of heat equations in \cite{weierstrass_1882}. 
Observing the work of Weierstrass from the viewpoint of the paper \cite{bl_2008}, 
the left hand sides of (\ref{L_for_genus1}) correspond to
\(L=4\pi\iu\frac{\partial}{\partial\tau}\), namely the operations with
respect to the period integrals \(\tau\) or
\(\{\omega',\ \omega'',\ \eta',\ \eta''\}\) of the curve, 
which adopt to the expression (\ref{weierstrass_sigma}); 
while the right hand sides of (\ref{L_for_genus1}) are 
an interpretation of such operations in order to adopt to 
the expansion (\ref{23sigma_expansion}) of (\ref{weierstrass_sigma}) 
given by \cite{weierstrass_1882}.  
Although we suspect there are fruitful correspondences between Weierstrass,
Frobenius, and Stickelberger, the authors have no details of these.

It appears difficult to generalise the Weierstrass method to the
higher genus cases.  There is some hint in the work of
Frobenius-Stickelberger to generalise the result to these cases.  In
order to do so, it seems necessary to have generalisation of relations
(\ref{L_for_genus1}).  But we do not have naive generalisations of
(\ref{wp_expansion}) and (\ref{g2_g3}).

Recently Buchstaber and Leykin were able to generalise the above
results to the sigma functions of higher genus curves (\cite{bl_2008},
see also \cite{bl_2002,bl_2004,bl_2005}).  In \cite{bl_2008},
Buchstaber and Leykin generalise (\ref{L_for_genus1}) to higher genus
curves by using the first de Rham cohomology \(H_{\mathrm{dR}}^1\) of
the curve over the base ring, that is the space of the differential
forms of the second kind modulo the exact forms (see Section
\ref{Frob_Stick}).  The paper \cite{bl_2008} is our main reference for
our work.  Understanding that paper requires some background on the
basic theory of heat equations and singularity theory, so we will
summarise their arguments and those of Frobenius and Stickelberger, with
hopefully accessible explanations.

We shall explain their method by taking the curve (\ref{curve_of_genus1})
as an example.  Firstly, we introduce a certain heat equation (the primary
heat equation) satisfied by the function defined in (\ref{G_b})
below, which is a generalisation of the individual terms of the series
appearing in the definition (\ref{weierstrass_sigma}) of \(\sigma(u)\).
Let us take the subalgebra \(\vec{L}\) generated by 
\(L_0\) and \(L_2\) over \(\mathbb{Q}[\mu_4,\,\mu_6]\) in the Lie algebra 
generated by \(\frac{\partial}{\partial \mu_4}\) and \(\frac{\partial}{\partial \mu_6}\). 
Thanks to a lemma due to Chevalley (Lemma \ref{chevalley}) and 
the \textit{horizontal derivation formula} given 
in \cite{oss2024} which is explained in Section \ref{horizontal_formula}, 
we see that the Lie algebra generated by \(L_0\) and \(L_2\) 
over \(\mathbb{Q}[\mu_4,\mu_6]\) acts on \(H_{\mathrm{dR}}^1\).
Take a symplectic basis \((\frac{dx}{2y},\ \frac{xdx}{2y})\) of \(H_{\mathrm{dR}}^1\) with
respect to a naturally defined inner product in \(H_{\mathrm{dR}}^1\) (See (\ref{symplectic})).  
Let
\begin{equation*}
  \Gamma^L = \bigg[\,\begin{array}{cc}
 -\beta  &\ \alpha \\[-2pt]
 -\gamma &\ \beta
\end{array}\,\bigg]
\end{equation*}
be the representation matrix of the action of 
an operator \(L\in\vec{L}\) 
(see (\ref{gauss_manin_connection})), 
which is called a {\it Gauss-Manin connection} in \cite{bl_2008}.  
Then we see that \(\alpha\), \(\beta\), and \(\gamma\) belong to \(\mathbb{Q}[\mu_4,\mu_6]\).
Taking integrals along the set of closed paths \(\alpha_1\) and \(\beta_1\)
which make a symplectic homology basis of the curve, 
we see that the action of \(L\) gives a linear transformation of 
the period matrix (see (\ref{per_matrix})) 
\begin{equation*}
  \varOmega=\bigg[\,
    \begin{array}{cc}
      \omega' &\ \omega'' \\[-2pt]
      \eta'   &\ \eta''
\end{array}\,\bigg]
\end{equation*}
with respect to the basis \((\frac{dx}{2y},\ \frac{xdx}{2y})\) and 
paths \(\alpha_1\), \(\beta_1\). 
This transformation is also represented by \(\Gamma^L\) as \(L(\varOmega)=\Gamma^L\,\varOmega\) 
(see (\ref{bridge_formula})). 
Moreover, we introduce another operator
\begin{equation*}
H^L=\tfrac12\,[\,\tfrac{\partial}{\partial u}\ \ u\,]
    \bigg[
      \begin{array}{rc}
        \alpha  & \ \beta \\
        \beta   & \ \gamma
      \end{array}
      \bigg]
    \bigg[
      \begin{array}{c}
        \tfrac{\partial}{\partial u} \\ u
      \end{array}
     \bigg]
  =\frac12\Big(\,{\alpha}\,\frac{\partial^2}{\partial u^2}
  +2{\beta}\,u\frac{\partial}{\partial u}
  +{\gamma}\,u^2
  +\beta\,\Big). 
\end{equation*}
Then the function 
\begin{equation}\label{G_b}
\begin{aligned}
G(b,u,\varOmega)
&=\bigg(\frac{2\pi}{\omega'}\bigg)^{\frac12}\,
\exp\big(-\tfrac12 \eta' \omega'^{-1} u^2\big)\\
&\times
  \exp\Big(\,2\pi i\big(\,\tfrac12 \omega'^{-1} \omega''{b''}^2
+ b''(\omega'^{-1}u+b')\,\big)\Big),
\end{aligned}
\end{equation}
where \,\(b=[\,b'\ \ b''\,]\) is an arbitrary constant vector,   
satisfies the heat equation 
\begin{equation}\label{heat_eq_23}
(L-H^L)G(b,u,\varOmega)=0. 
\end{equation}
We call this (and its generalisation) the {\it primary heat equation} (Theorem \ref{primary_heq}).  
While checking the validity of (\ref{heat_eq_23}) is rather complicated, 
no details are given by Buchstaber and Leykin, 
and the description of this equation in \cite{bl_2008} is 
not entirely consistent. 
We denote the expression of the right hand side
(\ref{weierstrass_sigma}) without \(\varDelta^{-\frac18}\)
by \(\tilde{\sigma}(u)\) (see (\ref{pre_classical_sigma})).  
According to the above equation and the fact that \(\tilde{\sigma}(u)\)
is an infinite sum of the \(G(b,u,\varOmega)\)s for various \(b'\)
and \(b''\), we see that \((L-H^L)\,\tilde{\sigma}(u)=0\).
\par
At the next stage, we shall check the operators \(L_0\) and \(L_2\) 
give rise to a system of heat equations which are satisfied 
by the right hand side of (\ref{weierstrass_sigma}), 
including the factor \(\varDelta^{-\frac18}\). 
Indeed, these operators are tangent to 
the singular locus given by \(\varDelta=0\) 
(see the former part of Subsection \ref{operators_and_discriminant}). 
The paper \cite{bl_2008}
uses knowledge of singularity theory and succeeds in
generalising nicely the result of Weierstrass and
Frobenius-Stickelberger.
So we explain techniques from singularity theory 
to calculate \,\(\varDelta\) (Lemma \ref{jet}(1)) 
as well as the operators tangent to the variety defined by \(\varDelta=0\).
This stage is carried out in Subsections \ref{alg_heat_eq} and
\ref{operators_and_discriminant} and the result is given in
(\ref{ell_operators}).  
\par
In summary, the system of heat equations for (\ref{curve_of_genus1}) 
obtained by Weierstrass is 
{\small
\begin{equation}\label{intro_heat23}
\begin{aligned}
(L_0-H^{L_0})\,\sigma(u)
&=\left(4{\mu_4}\frac{\partial}{\partial\mu_4}
  +6{\mu_6}\frac{\partial}{\partial\mu_6}
  -u\frac{\partial}{\partial u}+1\,\right)\sigma(u)=0,\\
(L_2-H^{L_2})\,\sigma(u)
&=\left(6{\mu_6}\frac{\partial}{\partial\mu_4}
    -\frac43{\mu_4}^2\frac{\partial}{\partial\mu_6}
    -\frac12\frac{\partial^2}{\partial u^2}
    +\frac16{\mu_4}u^2\,\right)\sigma(u)=0,
\end{aligned}
\end{equation}
}% END of small
which is reproved as (\ref{heat23}).  
For the general curves, the corresponding results are given as 
Theorem \ref{heq_for_hat_sigma} in the text.

It is very important to determine whether the  system of 
heat equations we obtain characterises the sigma function.  
For the genus one case, it was seen by Weierstrass 
that the recurrence (\ref{intro_recurrence}) determines 
all the coefficients if we give an arbitrary value for \(b(0,0)\).
That is, the solution space of (\ref{intro_recurrence}), as well as
(\ref{intro_heat23}), is of dimension one.

For a general non-singular curve, we consider the multivariate
function \(\sigma(u)\) defined similarly to (\ref{weierstrass_sigma}).
We can check that the operators we obtain (of Theorem
\ref{heq_for_hat_sigma}) kill \(\sigma(u)\).  
However, {\it it is not clear whether the solution space is of dimension one}\, 
over the base field.  
The authors could not find any reason which suggests that 
the solution space is one dimensional.  
Nevertheless, we shall show that, 
for any curve of genus less than or equal to three, 
the solution space is one dimensional by giving 
an explicit recurrence relation from the system of heat equations we obtain, 
which is described in Section \ref{section3}.
\par
Although our main results are in Subsections \ref{section_27-curve}, \ref{section_27-sigma},
\ref{section_34-curve} and \ref{section_34-sigma}, we give a number of additional useful results, 
which may be known only by specialists, 
with detailed proofs in Section \ref{section2} and Subsection
\ref{on_2q_curves}.  Subsection \ref{section_23-curve} reproduces the
classical result and it would be helpful to read the following
Subsections.  Subsection \ref{section_25-curve} is rewritten in a
slightly different formulation (Hurwitz-type series expansion of
\(\sigma(u)\)) from \cite{bl_2005}.
\par
We shall explain here a notion called {\it modality} 
which was introduced by Arnol'd 
(see \ref{weierstrass_form_modality} for details).  
For any coprime positive integers \((e,q)\) with \(e<q\), 
we consider a family of the curves 
found by certain deformations of the singularity 
at the origin of the curve \(y^e=x^q\) 
(we do not use these words in the text of this paper). 
This family of curves is called the \((e,q)\)-\textit{curve}, 
which is a type of \textit{plane telescopic curve}. 
The number of parameters necessary for this deformation is 
less than or equal to \((e-1)(q-1)\). 
The last number is twice the genus of a generic curve of the family.  
The difference between \((e-1)(q-1)\) and the number of parameters 
is called the {\it modality} of this family.  
For instance, the curve (\ref{curve_of_genus1}) is 
regarded as the whole of the semi-universal deformations 
of \(y^2=x^3\) with two parameters \(\mu_4\) and \(\mu_6\), 
which is equal to twice its genus (i.e. \(2=1\times2\)).  
In this case the modality is \(0\). 
It is known that the hyperelliptic curve given
by such deformations in the case \(e=2\) is of modality \(0\).  
There are only two types of non-hyperelliptic plane telescopic curves 
of modality \(0\), which are the trigonal quartic curve, 
the \((3,4)\)-curve (genus three), and the trigonal quintic curve, 
the \((3,5)\)-curve (genus four).  
Concerning these two curves, we treat only the former one, 
the \((3,4)\)-curve, in this paper.
More general curves including the \((3,5)\)-curve 
are discussed in \cite{oss2024}.  
For a general hyperelliptic curve, 
we give its corresponding system of heat equations in Lemma
\ref{hyp_T-matrix}.  
For any plane telescopic curve, we gave a simple
formula for its modality in Proposition \ref{modality0}.
\par
In the last paragraph of Section 2 in \cite{bl_2008}, 
there is some description of the positive modality case. 
On the one dimesionality problem of the solution space of the case \(e=2\) 
and some positive modality cases, 
we refer the reader to the forthcoming paper \cite{oss2024}. 
\par
We shall mention two additional consequences of this theory.  
Firstly, for hyperelliptic curves of genus less than or equal to three, 
we again prove partially the result of \cite{onishi_2018} 
on Hurwitz integrality of the expansion of the sigma function.  
For example it is obvious from (\ref{intro_recurrence}) that
\(b(n_4,n_6)\in\mathbb{Z}[\frac13]\).  
Similar results are shown for the hyperelliptic curves of genus two and three.  
This idea was suggested to Y.\^O. by Buchstaber.  
Secondly, this theory of heat equations in turn helps 
the construction of the sigma function, 
as explained in Lemma 4.17 of \cite{bl_2005} 
(see also Section \ref{sigma_is_hat_sigma}).

However, it might be possible to approach this problem 
via the results of Bernatska \cite{Bern22,Bern20d} 
to get the top factor for a hyperelliptic curve 
corresponding to that of (0.2).
   
The formula (\ref{weierstrass_sigma}) is
well-known for the curve (\ref{curve_of_genus1}), and its
generalisation (see (\ref{classical_sigma})) is proved for 
the genus two hyperelliptic curve by Grant \cite{grant} 
by using Thomae's formula.  
For any curve in the family we have investigated, 
there is a rough explanation in Lemma 2.3 in p.98 of \cite{bl_2004}, 
but without using Thomae's formula.  
\par
Here, we mention the discriminant for 
a plain telescopic curve \(\mathscr{C}\). 
There is nice algorithm using a certain determinant 
to compute the discriminant of \(\mathscr{C}\) 
as explained in \ref{jet} and \ref{lemma_discri}. 
However, there is still a gap in this theory; that is, 
we do not have a general proof of coincidence of the discriminant 
and the determinant. 
If we know the weight of the discriminant in general, 
our idea of the proof of \ref{lemma_discri} works well. 
\par
We explain here that BL-theory indeed gives a method to prove such a
formula as (\ref{weierstrass_sigma}) on the sigma functions, at least,
for our curves of genus less than or equal to three.
Firstly, assuming the expression (\ref{weierstrass_sigma}) 
to be the correct sigma function, 
we show that it satisfies the system of heat equations.  
On the other hand, as we mentioned above, 
the solution space of the system is one dimensional and 
the system of heat equations gives a recursion relation, 
by which we have the power series expansion of 
the solution as shown in Section 3.  
Especially,  we see the solution space is of dimension one.  
Therefore, we have a proof that the assumed expression of 
the sigma indeed gives the sigma function 
up to a non-zero absolute constant.
This result is the main theorem of the present paper
(see Theorem \ref{Main_Thm}).  
\par
Finally, one of the authors S.Y.\ wishes to point out 
to the reader that his contribution on this paper is limited to 
the proof of the case \(e=2\) in 
Proposition \ref{hessian_formula_}. 
\par
{\it Acknowledgments}\,: We are grateful to Christophe Ritzenthaler
who explained Sylvester's algorithm, which is very useful to compute
the discriminants for curves of genus less than or equal to three.  We
would also like to thank Julia Bernatska who explained some of the
details of the practical implementation of the algorithm behind the
genus two example in \cite{bl_2008}.  We would like to thank Toshizumi
Fukui for bringing Zakalyukin's paper to our attention, Masataka
Shibata for giving us crucial comments on Section \ref{section3} from
the viewpoint of \cite{oss2024}, and Kouki Sato for showing us the
horizontal derivation formula in Section \ref{horizontal_formula}.  We
are grateful to an anonymous referee for many helpful suggestions and
corrections to improve the paper.  This research is supported by JSPS
grant 25400010, 16K05082, and 23K03157.  \vskip 20pt
\noindent
\textbf{Convention}. \
We use the following convention. \\
As usual, we denote by \(\mathbb{Z}\), \(\mathbb{Q}\), \(\mathbb{R}\),
and \(\mathbb{C}\) the ring of integers, the field of rationals, the
field of real numbers, the field of complex numbers, respectively.  We
denote by \(\iu\) the imaginary unit.  \(\mathrm{Mat}(n,R)\)\, denotes
the ring of square matrices of size \(n\) with all the entries in a
ring \(R\).  \(\mathrm{Sym}(n,R)\)\, denotes the set of symmetric
matrix in \(\mathrm{Mat}(n,R)\).  \(\tp{\!A}\)\, stands for the
transpose of a matrix \(A\).  \(\rev{\!A}\)\, denotes the matrix
obtained by reversing orders both of the rows and the columns of a
matrix \(A\).
\newpage
\tableofcontents
\newpage
%%%%%%%%%%%%%%%%%%%%%%%%%%%%%%%%%%%%%%%%%%%%%%%%%%%%%%%%%%%%%%%%%%%
\section{Preliminaries}\label{section1}%%%%%%%%%%%%%%%%%%%%%%%%%%%
%%%%%%%%%%%%%%%%%%%%%%%%%%%%%%%%%%%%%%%%%%%%%%%%%%%%%%%%%%%%%%%%%%%
%%%%%%%%%%%%%%%%%%%%%%%%%%%%%%%%%%%%%%%%%%%%%%%%%%%%%%%%%%%%%%%%%%%
\subsection{The curves}%%%%%%%%%%%%%%%%%%%%%%%%%%%%%%%%%%%%%%%%%%%
\label{the_curve}%%%%%%%%%%%%%%%%%%%%%%%%%%%%%%%%%%%%%%%%%%%%%%%%%%%%
%%%%%%%%%%%%%%%%%%%%%%%%%%%%%%%%%%%%%%%%%%%%%%%%%%%%%%%%%%%%%%%%%%%%%
We shall use \(e\) and \(q\) instead of \(n\) and \(s\), respectively,
of the \((n,s)\)-curves, which was usual in many previous papers on
generalised sigma functions.  Especially, the name \((n,s)\)-curve
(which comes from singularity theory) is used by Buchstaber and Leykin
in their papers, but we wish to avoid confusion with the many \(n\)
used as subscripts in sections from Section \ref{section3} onward, and
the use of \(s\) for Schur polynomials in Subsection
\ref{materials_for_sigma}.
\par
So, we let \(e\) and \(q\) be  two fixed  positive integers 
such that \(e<q\) and \(\gcd(e,q)=1\).
We define, for these integers, a polynomial of indeterminates \(X\) and \(Y\)
\begin{equation}\label{def_eq}
  f(X,Y)=Y^e-p_1(X)Y^{e-1}-\cdots-p_{e-1}(X)Y-p_e(X),
\end{equation}
where  \(p_j(X)\) is a polynomial of \(X\) of degree 
\(\lceil\tfrac{jq}{e}\rceil\) or smaller
and its coefficients, which are also indeterminates, are denoted by
\begin{equation}\label{5.02}
  \begin{aligned}
  p_j(X)&=\sum_{k:jq-ek>0}\mu_{jq-ek}\,X^k \ \ \ (1\leq j\leq e-1),\\
  p_e(X)&=X^q+\mu_{e(q-1)}X^{q-1}+\cdots+\mu_{eq}. 
  \end{aligned}
\end{equation}
Please note that the sign at the front of each \(p_j(X)\) 
with \(j\neq e\) in \(f(X,Y)\) is different from previous papers 
written by some of the authors. 
The base ring over which we work is quite general.  
For simplicity the reader may start by taking 
the field \(\mathbb{C}\) of complex numbers and 
assume the \(\mu_i\)s to be constants belonging to this field.  
Let \(\mathscr{C}=\mathscr{C}_{{\mu}}^{e,q}\) be the
projective curve defined by
\begin{equation}\label{plane_miura}
  f(x,y)=0
\end{equation}
having a unique point \(\infty\) at infinity. 
This means \((x,y)\) is a generic point of \(\mathscr{C}\) 
in the classical terminology. 
\par
As the general elliptic curve is defined by an equation of the form
\begin{equation*}
  y^2-(\mu_1x+\mu_3)y=x^3+\mu_2x^2+\mu_4x+\mu_6,
\end{equation*}
the curves \(\mathscr{C}\) discussed here are a natural generalisation of
elliptic curves.%
\par 
The reason why we omit the terms of \(\mu_j\) with \(j<0\) 
from \(f(X,Y)\) is seen in the proof of {\rm\ref{basis_mod_f1_f2}}. 
\par
Our principal situation is that all the coefficients \(\mu_j\) of
\(f(X,Y)\) should be indeterminates.  
We denote by \(\mathbb{Q}[{\mu}]\) the ring generated 
over the rationals \(\mathbb{Q}\) by all the \(\mu_j\)s.  
Then we shall treat \(\mathscr{C}\) as a scheme over 
the \(\mathrm{Spec}\,\mathbb{Q}[\mu]\).  
Since we need to use analytic methods from time to time, 
we freely switch the standing position where \(\mu_j\)s are 
assumed to be complex numbers or indeterminates. 
\par
This \(\mathscr{C}\) should be called an {\it \((e,q)\)-curve}
\,following Buchstaber, Enolskii, and Leykin \cite{bel_1997}, or a
{\it plane telescopic curve} after the paper \cite{miura_1998}.
Assuming all the \(\mu_j\)s are complex numbers, 
the genus of \,\(\mathscr{C}\)\, is \((e-1)(q-1)/2\) provided that 
it is non-singular. 
We will use \(g\) to denote this quantity throughout this paper 
whether the curve \(\mathscr{C}\) is non-singular or singular as well as 
in the case of the \(\mu_j\)s being indeterminates: 
\(g=(e-1)(q-1)/2\). 
%%%%%%%%%%%%%%%%%%%%%%%%%%%%%%%%%%%%%%%%%%%%
In this paper, 
we denote by 
\begin{equation*}
%\begin{aligned}
f_X(x,y)\ \ \text{or}\ \ f_1(x,y) \ \ \ \
\text{[\,resp.}\ \  
f_{\,Y}(x,y)\ \ \text{or}\ \ f_2(x,y)]
% \end{aligned}
\end{equation*}
the polynomials obtained by substitution \(X=x\), \(Y=y\) 
for the partial derivative of the polynomial \(f(X,Y)\) 
with respect to \(X\) [\,resp. \(Y\)]
%%%%%%%%%%%%%%%%%%%%%%%%%%%%%%%%%%%%%%%%%%%
\par
Now we introduce a weight function as follows. 
For a point \((x,y)\) in the curve  \(\mathscr{C}\)  given by (\ref{def_eq}), 
we define the weight \(\mathrm{wt}()\) on \(\mathbb{Q}[\mu][x,y]\) 
and \(\mathbb{Q}[\mu][X,Y]\) by 
\begin{equation}\label{wt_on_C}
\wt(\mu_j)=-j, \ \ \ 
\wt(x)=\wt(X)=-e, \ \ \ 
\wt(y)=\wt(Y)=-q. 
\end{equation}
Then all the equations for functions, power series, differential
forms, and so on in this paper are of homogeneous weight.  
We see that \(\wt\big(f(X,Y)\big)=-eq\). 
We will extend the notion of weight (\ref{wt_on_C}) in Subsection \ref{ext_wt}. 
%%%%%%%%%%%%%%%%%%%%%%%%%%%%%%%%%%%%%%%%%%%%%%%%%%%%%%%%%%%%%
\subsection{Definition of the discriminant}%%%%%%%%%%%%%%%%%
%%%%%%%%%%%%%%%%%%%%%%%%%%%%%%%%%%%%%%%%%%%%%%%%%%%%%%%%%%%%%%%
We shall define the discriminant of the curve \(\mathscr{C}\). 
\vskip 5pt
\begin{definition}\label{def_discriminant}
Suppose all the \(\mu_j\)s are indeterminates. 
The discriminant \,\(\varDelta\)\, of the form \(f(X,Y)\) or 
of the curve \(\mathscr{C}\) defined by \,\(f(x,y)=0\)\, is 
the polynomial \,{\rm (}up to the signs \(\pm\){\rm )} of
the least degree in the \(\mu_j\)s with integer coefficients such
that the greatest common divisor of the coefficients is \(1\), 
and every zero of \(\varDelta\) corresponds exactly to 
the case that \(\mathscr{C}\) has a singular point.
\end{definition}
\vskip 3pt
For a curve \(\mathscr{C}_0\) given by some fixed constants \(\mu_j\in\mathbb{C}\),  
we always define its discriminant as the one obtained by substituting
these constants to the discriminant \(\varDelta\) 
defined in \ref{def_discriminant}. 
For instance, if \((e,q)=(2,3)\), then 
\begin{equation*}
f(X,Y)=Y^2-(\mu_1X+\mu_3)Y-(X^3+\mu_2X^2+\mu_4X+\mu_6)
\end{equation*}
and its discriminant is given by 
\begin{equation}\label{def_disc_23-curve}
\begin{aligned}
\varDelta
&=-\mu_6{\mu_1}^6+\mu_3\mu_4{\mu_1}^5+((-{\mu_3}^2-12\mu_6)\mu_2+{\mu_4}^2){\mu_1}^4\\
&\ \ +(8\mu_3\mu_4\mu_2+{\mu_3}^3+36\mu_6\mu_3){\mu_1}^3+((-8{\mu_3}^2-48\mu_6){\mu_2}^2+8{\mu_4}^2\mu_2\\
&\ \ +(-30{\mu_3}^2+72\mu_6)\mu_4){\mu_1}^2+(16\mu_3\mu_4{\mu_2}^2+(36{\mu_3}^3+144\mu_6\mu_3)\mu_2-96\mu_3{\mu_4}^2)\mu_1\\
&\ \ +(-16{\mu_3}^2-64\mu_6){\mu_2}^3+16{\mu_4}^2{\mu_2}^2+(72{\mu_3}^2+288\mu_6)\mu_4\mu_2\\
&\ \ -64{\mu_4}^3-27{\mu_3}^4-216\mu_6{\mu_3}^2-432{\mu_6}^2.
\end{aligned}
\end{equation}
Specializing \(\mu_1=\mu_3=\mu_2=0\), 
we have the discriminant \(\varDelta=-16(4{\mu_4}^3+27{\mu_6}^2)\) 
of the curve defined by the Weierstrass form 
\,\(y^2=x^3+\mu_4x+\mu_6\)\,
which appeared in (\ref{curve_of_genus1}).  
\par
For \((e,q)=(2,2g+1)\), as in Section \ref{weierstrass_form_modality},
we rewrite the equation as \(y^2=x^{2g+1}+\cdots\), where the right
hand side is a polynomial of \(x\) only.  Then the discriminant of
this curve is a non-zero integer multiple of 
the discriminant of the right hand side as a polynomial of \(x\) only.
For the \((3,4)\)-curve, we have Sylvester's method as described in
\cite{GKZ} pp.118-120, as explained to the authors by C. Ritzenthaler.  
However, Lemma \ref{jet} below gives a quite general method, 
which seems to cover the \((3,5)\)-curve and more. 
We do have explicit forms of the discriminants of the curves
with \((e,q)=(2,3)\), \((2,5)\), \((2,7)\), \((3,4)\) which we treat
in this paper.  Using the resultant of two forms, we mention here an
alternative (but conjectural) construction for the interest of the
reader, though it is essentially not used in this paper.
\begin{definition}\label{def_R}
Let the coefficients \(\mu_j\) of {\rm(\ref{def_eq})} be indeterminates,
and define
\begin{equation*}
\begin{aligned}
R_1&={\rslt}_X
  \Big(\rslt_Y\big(f(X,Y), f_1(X,Y)\big), 
       \rslt_Y\big(f(X,Y), f_2(X,Y)\big)
  \Big), \\
R_2&=\rslt_Y
\Big(\rslt_X\big(f(X,Y), f_1(X,Y)\big), 
     \rslt_X\big(f(X,Y), f_2(X,Y)\big)
\Big),\\
R&=\gcd(R_1,R_2)\ \ \ \ \mbox{in \,\(\mathbb{Z}[{\mu}]\)}. 
\end{aligned}
\end{equation*}
Here \(\rslt_Z\) is the Sylvester resultant with respect to \(Z\).
\end{definition}
\newpage
Now we recall the conjecture from the paper \cite{eemop_2009}. 
\begin{conjecture}\label{disc_conj} 
Defining \(R\) by {\rm\ref{def_R}}, 
we have the following\,{\rm :}\\
{\rm (1)} 
\(R\) is always a perfect square in \(\mathbb{Z}[\mu]\)
and \(R=\varDelta^2\){\rm ;} \\
{\rm (2)} 
The discriminant \(\varDelta\) of the \((e,q)\)-curve is
of weight \(-2eqg=-eq(e-1)(q-1)\).
\end{conjecture}
\begin{remark}\label{wt_discriminant}
{\rm%
It can be confirmed that (1) of \ref{disc_conj} is correct for the cases 
 \((e,q)=(2,3)\), \((2,5)\), \((2,7)\), \((3,4)\), and \((3,5)\). 
Actually, computation by \texttt{Maple} for these cases shows 
that \(R\) is a square of some \(\varDelta'\in\mathbb{Z}[{\mu}]\). 
It is easy to check by \texttt{Maple} that \(\varDelta'\) is irreducible. 
Then, from the definition of \(R\), 
we see \(\varDelta'\) must be \(\varDelta\) up to the sign, 
and we checked (2) of \ref{disc_conj} for these cases. 
We prove also that (2) of \ref{disc_conj} is true 
if \(\gcd(e-1,q-1)=1\) in \ref{lemma_discri}. 
}% End of \rm
\end{remark}
%%%%%%%%%%%%%%%%%%%%%%%%%%%%%%%%%%%%%%%%%%%%%%%%%%%%%%%%%%%%%%%%%%%%%%%%%%%%%
\subsection{The Weierstrass form of the curve and its modality}%%%%%%%%%
%%%%%%%%%%%%%%%%%%%%%%%%%%%%%%%%%%%%%%%%%%%%%%%%%%%%%%%%%%%%%%%%%%%%%%%%%%%%
\label{weierstrass_form_modality} %%%%%%%%%%%%%%%%%%%%%%%%%%%%%%%%%%%%%%%%%%
%%%%%%%%%%%%%%%%%%%%%%%%%%%%%%%%%%%%%%%%%%%%%%%%%%%%%%%%%%%%%%%%%%%%%%%%%%%
Starting from the equation \(f(x,y)=0\) in (\ref{def_eq}) and removing
the terms of \(y^{e-1}\) and \(x^{q-1}\) by replacing \(y\) by
\( y+\frac1e\,p_1(x)\), and \(x\) by \( x+\frac1q\mu_{(q-1)e}\),
respectively, we get a new equation \(f(x,y)=0\) which is called the
{\it Weierstrass form} of the original one.  After making such
transformations, we re-label the coefficients by \(\mu_j\).
\par
For example, if \((e,q)=(2,2g+1)\), the new equation is
\begin{equation*}
f(x,y)=y^2-(x^{2g+1}+\mu_{4g-2}x^{2g-1}+\mu_{4g-4}x^{2g-2}+\cdots+\mu_{4g+2})=0;
\end{equation*}
and if \((e,q)=(3,4)\), the new one is
\begin{equation*}
f(x,y)=y^3-(\mu_2x^2+\mu_5x+\mu_8)y-(x^4+\mu_6x^2+\mu_9x+\mu_{12})=0.
\end{equation*}
In these cases, the number of remaining \(\mu_j\)s is \(2g\).
However, in general, we can have some cases such that 
this number is less than \(2g\).  
The difference
\begin{equation*}
2g-\mbox{\lq\lq the number of \(\mu_j\)''}
\end{equation*}
is called the {\it modality} (a term used in singularity theory) 
of the \((e,q)\)-curve.
We give here a simple formula giving modalities and, especially,
determine all the curves of modality \(0\).  
%% \vskip 5pt
\begin{proposition}\label{modality0}
  The modality of an \((e,q)\)-curve is given by
  \(\frac12(e-3)(q-3)+\lfloor\frac{q}{e}\rfloor-1\).  The only curves
  of modality \(0\) are the \((2,2g+1)\)-, \((3,4)\)-, and
  \((3,5)\)-curves.
\end{proposition}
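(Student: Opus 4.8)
\emph{Proof plan.} The strategy is to evaluate the modality straight from its definition as $2g$ minus the number of coefficients $\mu_j$ appearing in the Weierstrass form: I would count those surviving coefficients explicitly, simplify by a lattice-point identity, and then solve for modality zero by a short case analysis on $e$. Begin with the unnormalised equation (\ref{def_eq})--(\ref{5.02}). Since $\gcd(e,q)=1$ and $0<j<e$, the ratio $jq/e$ is never an integer, so for $1\le j\le e-1$ the polynomial $p_j(x)$ has exactly one coefficient $\mu_{jq-ek}$ for each $k$ with $0\le k\le\lfloor jq/e\rfloor$, i.e.\ $\lfloor jq/e\rfloor+1$ coefficients; and $p_e(x)$ has $q$ coefficients, its leading term $x^q$ being normalised to $1$.

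Next I would follow the two substitutions that produce the Weierstrass form. Replacing $y$ by $y+\tfrac1e p_1(x)$ makes the coefficient of $y^{e-1}$ vanish, thereby deleting all $\lfloor q/e\rfloor+1$ coefficients of $p_1$. The delicate point---and the step I expect to be the main obstacle---is to check that nothing else is lost: because the substitution is homogeneous for the weight function ($y$ and $\tfrac1e p_1(x)$ both have weight $-q$), every transformed $\tilde p_j$ again has weight $-jq$, hence the same degree bound $\lceil jq/e\rceil$ and the same number of coefficient slots as $p_j$. The subsequent substitution in $x$ removes precisely the single coefficient of $x^{q-1}$ in $p_e$ and, by the same weight-homogeneity, disturbs no other count. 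Hence the number of $\mu_j$ in the Weierstrass form is
\begin{equation*}
N_W=\sum_{j=2}^{e-1}\bigl(\lfloor jq/e\rfloor+1\bigr)+(q-1).
\end{equation*}

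To evaluate $N_W$ I would invoke the classical identity $\sum_{j=1}^{e-1}\lfloor jq/e\rfloor=\tfrac12(e-1)(q-1)=g$, valid because $e$ and $q$ are coprime. Removing the $j=1$ term gives $N_W=g+e+q-3-\lfloor q/e\rfloor$, so that the modality $2g-N_W$ equals
\begin{equation*}
\tfrac12(e-1)(q-1)-e-q+3+\lfloor q/e\rfloor=\tfrac12(e-3)(q-3)+\lfloor q/e\rfloor-1,
\end{equation*}
which is the asserted formula.

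It then remains to solve $\tfrac12(e-3)(q-3)+\lfloor q/e\rfloor-1=0$. For $e=2$ (so $q$ odd) the left-hand side is identically $0$, using $\lfloor q/2\rfloor=\tfrac{q-1}2$, which accounts for all the $(2,2g+1)$-curves. For $e=3$ the quadratic term vanishes and the condition collapses to $\lfloor q/3\rfloor=1$, forcing $q\in\{4,5\}$ once $q>e$ and $\gcd(e,q)=1$ are imposed. For $e\ge4$ one has $q\ge5$, whence $\tfrac12(e-3)(q-3)\ge1$ and $\lfloor q/e\rfloor\ge1$, so the left-hand side is at least $1$ and no solutions arise. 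This isolates exactly the $(2,2g+1)$-, $(3,4)$-, and $(3,5)$-curves, completing the classification.
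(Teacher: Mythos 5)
Your proof is correct and follows essentially the same route as the paper: both count the surviving coefficient slots in the Weierstrass form as $\sum_{j=2}^{e-1}(\lfloor jq/e\rfloor+1)+(q-1)$ and reduce this via the coprimality identity $\sum_{j=1}^{e-1}\lfloor jq/e\rfloor=\tfrac12(e-1)(q-1)$ (which the paper obtains by the same symmetrization $\lfloor jq/e\rfloor+\lfloor(e-j)q/e\rfloor=q-1$). Your explicit case analysis for modality zero and the weight-homogeneity check on the substitutions are merely spelled out in more detail than the paper's ``follows directly'' remark.
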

%% \vskip 5pt
\begin{proof}
The number of \(\mu_j\)s appearing in the Weierstrass form is
\begin{equation*}
\begin{aligned}
%  &\ \ \ 
\sum_{j=1}^{e-2}&\bigg(\bigg\lfloor\frac{(e-j)q}e\bigg\rfloor+1\bigg)
  +(q-1)
  =\frac12\sum_{j=1}^{e-1}\bigg(\bigg\lfloor\frac{(e-j)q}e\bigg\rfloor
  +\bigg\lfloor\frac{jq}e\bigg\rfloor\bigg)-\bigg\lfloor\frac{q}{e}
  \bigg\rfloor+(e-2)+(q-1)\\
  &=\frac12\sum_{j=1}^{e-1}(q-1)-\bigg\lfloor\frac{q}{e}\bigg\rfloor+e+q-3
  =\frac12(e-1)(q-1)+e+q-3-\bigg\lfloor\frac{q}{e}\bigg\rfloor.
\end{aligned}
\end{equation*}
Subtracting the above result from \(2g=(e-1)(q-1)\) gives the required
expression.  The latter part follows directly from this.  This
completes the proof.
\end{proof} %
\vskip 5pt
On the case for a curve with positive modality, 
there is some description in \cite{bl_2008} (the end of Section 2). 
Since it is not clear for us how positive modality causes difficulty, 
we do not discuss this theme here, though we give 
an example of positive modality in \ref{positive_modality_example}. 
\vskip 3pt
\par
%%%%%%%%%%%%%%%%%%%%%%%%%%%%%%%%%%%%%%%%%%%%%%%%%%%%%%%%%%%%%%%%%
\textsl{From now to the end of the paper, 
we always assume that the equation \,\(f(x,y)=0\)\, of the curve \(\mathscr{C}\) is given by a Weiserstrass form. }
%%%%%%%%%%%%%%%%%%%%%%%%%%%%%%%%%%%%%%%%%%%%%%%%%%%%%%%%%%%%%%%%%%%%
\newpage
%%%%%%%%%%%%%%%%%%%%%%%%%%%%%%%%%%%%%%%%%%%%%%%%%%%%%%%%%%%%%%%%%%%%%%%%%%%%%%
\subsection{Weight}\label{weight}%%%%%%%%%%%%%%%%%%%%%%%%%%%%%%%%%%%%%%%%%%%%%%%%%%%%%%%%%%%%
%%%%%%%%%%%%%%%%%%%%%%%%%%%%%%%%%%%%%%%%%%%%%%%%%%%%%%%%%%%%%%%%%%%%%%%%
\relax%
Besides (\ref{def_eq}), that is 
\begin{equation*}
f(X,Y)
=Y^e-X^q-\hskip -10pt\sum_{\substack{0\leq i\leq q-2\\0\leq j\leq e-2\\ie+jq<eq}}\hskip -8pt\mu_{eq-ie-jq}\,X^iY^j,
\end{equation*}
we introduce
\begin{equation}\label{tilde_f}
\tilde{f}(X,Y)
=Y^e-X^q-\hskip -10pt\sum_{\substack{0\leq i\leq q-2\\
    0\leq j\leq e-2}}\hskip -8pt\mu_{eq-ie-jq}\,X^iY^j. 
\end{equation}
Here, we also assume that the coefficients \(\mu_{eq-ie-jq}\) are 
indeterminates or independent complex variables. 
Note that, contrary to (\ref{def_eq}), 
the polynomial (\ref{tilde_f}) may contain 
some coefficients \(\mu_{eq-ie-jq}\)  with  \(eq-ie-jq<0\).  
We extend the weight defined in (\ref{wt_on_C}) for new coefficients by 
\begin{equation*}
\wt(\mu_j)=-j.
\end{equation*}
By this definition, \(\tilde{f}(X,Y)\) is also of homogeneous weight
with respect to (\ref{wt_on_C}).  Moreover, any quantity appeared in
this paper is of homogeneous weight with respect to this weight.  For
any pair \((i,j)\) with \(0\leq i\leq q-2\), \(0\leq j\leq e-2\),
we define
\begin{equation}\label{def_M's}
\begin{aligned}
  M_{ei+qj}&=M_{ei+qj}(X,Y)=X^iY^j, \\
  \wt(M)&=\{\,-\wt(X^iY^j)\,(=-\wt\big(M_{ei+qj}(X,Y)\big)\,|\,0\leq i\leq q-2, 0
          \leq j\leq e-2\,\}\\
           &=\{\,-(ei+qj)\,|\,0\leq i\leq q-2, 0\leq j\leq e-2\,\}. 
\end{aligned}
\end{equation}
The sequence constituted by the elements in \(-\wt(M)\) in increasing order is denoted by
\begin{equation*}
v_1(=0),\ v_2(=e), v_3, \ \cdots,\ v_{2g-2}, \ v_{2g-1}(=4g-2-e), \ v_{2g}(=4g-2). 
\end{equation*}
Because of the assumption \(\gcd(e,q)=1\), 
for any \(v\in\wt(M)\) there exists a unique pair \((i,j)\) 
such that \(0\leq i\leq q-2\), \(0\leq j\leq e-2\) 
and \(M_{v}=X^iY^j\). 
We introduce the notation 
\begin{equation}\label{def_all_M}
  \begin{aligned}
       M(X,Y)&=\tp{[\,M_{v_j}(X,Y) \ \ (j=1,2,\cdots,2g)\,]}, \\
  \rev{M}(X,Y)&=\tp{[\,M_{v_j}(X,Y) \ \ (j=2g,2g-1,\cdots,1)\,]}. 
  \end{aligned}
\end{equation}
\par
We denote the Weierstrass gap sequence of 
the semigroup generated by \(e\) and \(q\) 
in the positive integers in the increasing order by 
\begin{equation}\label{gaps}
w_1(=1), \ w_2, \ \cdots, \ w_g(=2g-1),
\end{equation}
which is also the Weierstrass gap sequence of \(\mathscr{C}\) at \(\infty\). 
Namely, this is the unique (finite) increasing sequence of positive integers 
which cannot be written in the form \,\(ae+bq\)\,
with non-negative integers \(a\), \(b\). 
We denote the set of the terms in (\ref{gaps}) by
\begin{equation*}
\mathrm{wgs}(e,\,q). 
\end{equation*}
It is well-known that each term of the sequence is written in the form 
\begin{equation*}
w_j=2g-1-v_{g-j+1} \ \ \ (1\leq j\leq g),
\end{equation*}
and that the assumption \(\gcd(e,q)=1\) implies the Young tableaux
associated with the sequence
\begin{equation*}
w_{2g}-(g-1), \ w_{2g-1}-(g-2), \ \cdots, \ w_2-1, \ w_1-0
\end{equation*}
given by the Weierstrass gap sequences 
is symmetric with respect to the diagonal line from the top-left to the bottom-right. 
The terms in the  sequence \,\smash{\(\{v_j\}_{j=1}^{2g}\)}\, are written also as
\begin{equation*}
  v_j=
  \bigg\{\ 
\begin{aligned}
  &2g-1-w_{g-j+1}  \ \ \ \mbox{if \ \(1\leq j\leq g\)},\\
  &2g-1+w_{j-g} \ \ \ \ \ \mbox{if \ \(g+1\leq j\leq 2g\)}.
\end{aligned}
\end{equation*}
We see that all the terms
\(\{M_{v_j}(X,Y)\}\) appear in \(f(X,Y)\) provided the {\it modality} of
the curve is \(0\).  
We shall give below sample values of the data above 
for the convenience of the reader to follow 
the calculation in Section \ref{section3}. 
\par
Finally, we introduce the following notation on the coefficients \(\mu_j\)\,;
\begin{equation*}
\begin{aligned}
\widetilde{\mu}&=\{\,\mu_{eq-ie-jq}\,;\,0\leq i\leq q-2,\ 0\leq j\leq e-2\,\}, \\
           \mu &=\{\,\mu_{eq-ie-jq}\,;\,0\leq i\leq q-2,\ 0\leq j\leq e-2,\ ie+jq\leq eq\,\},    
\end{aligned}
\end{equation*}
and
\begin{equation*}
\begin{aligned}
  \wt(\widetilde{\mu})&=\{\,-\wt(\mu_{eq-ie-jq})\,;\,0\leq i\leq q-2,\ 0\leq j\leq e-2\,\}\\
                      &=\{\,-(eq-ie+jq)\,;\,0\leq i\leq q-2,\ 0\leq j\leq e-2\,\}, \\
  \wt(\mu)&=\{\,-\wt(\mu_{eq-ie-jq})\,;\,0\leq i\leq q-2,\ 0\leq j\leq e-2,
\ ie+jq\leq eq\,\}\\
                      &=\{\,-(eq-ie-jq)\,;\,0\leq i\leq q-2,\ 0\leq j\leq e-2,
                        \ ie+jq\leq eq\,\}. 
\end{aligned}
\end{equation*}
\vskip 8pt
\begin{example}\label{modality_examples}% ----------------------------------
{\rm
If \((e,q)=(2,2g+1)\), then the \(M_j\)s are given as follows:
\begin{equation*}
\begin{array}{|c|ccccccccc|c}
\cline{1-10}
     j          &\   1  &\ 2    &\ \cdots &\ g       &       &\ g+1  &\ g+2   &\ \cdots &\ 2g \\ \hline
w_{g-j+1}         &\ 2g-1  &\ 2g-3 &\ \cdots &\ 1      & \vrule &\ 1    &\ 3     &\ \cdots &\ 2g-1\ \ &\  w_{j-g}\\ \hline
v_j              &\   0   &\ 2    &\ \cdots &\ 2g-2   &        &\ 2g   &\ 2g+2  &\ \cdots &\ 4g-2\ \ \\ \cline{1-10} 
\,M_{v_j}(X,Y)\,\ &\   1  &\ X    &\ \cdots &\ X^{g-1} &       & X^g    & X^{g+1} & \cdots & X^{2g-1} \strutf{13pt}{5pt}\\  \cline{1-10}
\end{array}
\end{equation*}
% }% END of small
% ------------------------------------------------------------------------
If \((e,q)=(3,4)\) or \((3,5)\), then the \(M_j\)s are given as follows:
\begin{equation*}
\mbox{\((3,4)\)-curve} : \ \ 
\begin{aligned}
\begin{array}{|c|ccccccc|c}
\cline{1-8}
    j           &\ 1 &\ 2 &\ 3 &\        &\   4 &\  5 &\   6  \\ \hline                                              
w_{3-j+1}         &\ 5 &\ 2 &\ 1 &\ \vrule &\   1 &\  2 &\   5   &\ w_{j-3}\\ \hline                                              
  v_j            &\ 0 &\ 3 &\ 4 &\        &\   6 &\  7 &\  10  \\ \cline{1-8}                                         
\,M_{v_j}(X,Y)\,\ &\ 1 &\ X &\ Y &\        &\ X^2 &\ XY &\ X^2Y\,\ \strutf{13pt}{5pt}\\ \cline{1-8}
\end{array}
\end{aligned}
\end{equation*}
% ------------------------------------------------------------------------------------------
\begin{equation*}
\mbox{\((3,5)\)-curve} : \ \ 
\begin{aligned}
\begin{array}{|c|ccccccccc|c}
\cline{1-10}
    j           &\ 1 &\ 2 &\ 3 &\  4  &\        &\  5 &\   6  &\   7  &\   8  \\\hline                                                            
w_{4-j+1}         &\ 7 &\ 4 &\ 2 &\  1  &\ \vrule &\  1 &\   2  &\   4  &\   7   &\ w_{j-4}\\ \hline                                                            
  v_j            &\ 0 &\ 3 &\ 5 &\  6  &\        &\  8 &\   9  &\  11  &\  14  \\ \cline{1-10}                                                      
\,M_{v_j}(X,Y)\,\ &\ 1 &\ X &\ Y &\ X^2 &\        &\ XY &\ X^3  &\ X^2Y &\ X^3Y\,\ \strutf{13pt}{5pt}\\\cline{1-10}
\end{array}
 \end{aligned}
\end{equation*}
All these examples, and only these, are of modality \(0\) as explained in (\ref{modality0}).
}% End of \rm
\end{example}
% ------------------------------------------------------------------------------------------
\begin{example}\label{positive_modality_example}% ----------
{\rm
In contrast to the cases above, we have for \((e,q)=(3,7)\)\,:
\begin{equation*}
\begin{aligned}
\begin{array}{|c|ccccccccccccc|c}
\cline{1-14}
 j              &\ 1\ &\ 2\ &\ 3\ &\ 4 &\ 5   &\  6 &\       &\  7   &\  8 &\  9  &\ 10   &\ 11   &\ 12             \\\hline
w_{6-j+1}         &\ 11 &\ 8 &\ 5   &\ 4 &\ 2   &\  1 &\ \vrule &\  1  &\  2 &\  4  &\  5   &\  8   &\ 11 &\ w_{j-6}   \\\hline                                            
 v_j             &\ 0  &\ 3 &\ 6   &\ 7 &\ 9   &\ 10 &\        &\ 12  &\ 13 &\ 15  &\ 16   &\ 19   &\ 22             \\\cline{1-14}                                         
\,M_{v_j}(X,Y)\,\ &\ 1 \ &\ \ X \ \ &\ X^2\ \ &\ Y \ \ &\ X^3 &\ XY &\       &\ X^4 &\ X^2Y &\ X^5 &\ X^3Y &\ X^4Y &\ X^5Y\,\ \strutf{13pt}{5pt}\\\cline{1-14}
\end{array}
\end{aligned}
\end{equation*}
However, the Weierstrass form of the \((3,7)\)-curve is given by
\begin{equation*}
\begin{aligned}
 Y^3&-(\mu_2X^4+\mu_5X^3+\mu_8X^2+\mu_{11}X
  +\mu_{14})Y\\
& -(X^7+\mu_6X^5+\mu_9X^4+\mu_{12}X^3+\mu_{15}X^3+\mu_{18}X+\mu_{21})
\end{aligned}
\end{equation*}
and this equation does not include a term in \(M_{22}(X,Y)=X^5Y\).
This curve is of modality \(1\). 
We will not discuss this curve further in this paper.  
} % The end of \rm
\end{example}%
\vskip 0pt
\newpage
%%%%%%%%%%%%%%%%%%%%%%%%%%%%%%%%%%%%%%%%%%%%%%%%%%%%%%%%%%%%%%%%%%%%%%%%%%%%%
\subsection{The representation matrix for \texorpdfstring{\(f(X,Y)\)}{Lg}-multiplication}%
%%%%%%%%%%%%%%%%%%%%%%%%%%%%%%%%%%%%%%%%%%%%%%%%%%%%%%%%%%%%%%%%%%%%%%%%%%%%%%
In this subsection, we shall define a certain matrix
\(T\in\mathrm{Mat}(2g,\mathbb{Q}[\mu])\) whose determinant might be
essentially the discriminant of \(\varDelta\).  
% \vskip 5pt
\vspace{-2pt}
\begin{lemma}\label{basis_mod_f1_f2}%
As a \(\mathbb{Q}[\mu]\)-module, \(\mathbb{Q}[\mu][X,Y]/(f_1(X,Y),f_2(X,Y))\) is of rank \(2g\) 
and spanned by \(M(X,Y)\) which is defined in {\rm(\ref{def_all_M})}. 
\end{lemma}
\begin{proof}
(1) \,Let \(G(X,Y)\) be any element in \(\mathbb{Q}[\mu,X,Y]\). 
Paying attention to the weight with respect to \(X\) and \(Y\), 
we are reducing the terms in \(G(X,Y)\) to lower degree by using 
\(f_1(X,Y)=\cdots-qX^{q-1}+\cdots\) and \(f_2(X,Y)=eY^{e-1}+\cdots\). 
Whenever we reduce degrees of highest weight term(s) of \(G(X,Y)\) 
by using one of these relation, 
\(G(X,Y)\) is replaced by a lower degree polynomial. 
So we will finally arrived at a polynomial 
that is a linear combination of only terms in \(M(X,Y)\). 
\end{proof}
%\vskip 5pt
\begin{definition}\label{matrix_T}
The transpose of the representation matrix 
of the \((-eq)f(X,Y)\)-plication map
\begin{equation}\label{multiplication_f}
  \cdot(-eq)\,f(X,Y)\,:\,\mathbb{Q}[\vec{\mu}][X,Y]/(f_1,f_2)\longmapsto
  \mathbb{Q}[\vec{\mu}][X,Y]/(f_1,f_2)
\end{equation}
with respect to the basis \(M(X,Y)\) % defined in {\rm (\ref{def_all_M})}, 
is denoted by 
\,\(T\in\mathrm{Mat}(2g,\,\mathbb{Q}[\mu])\)\,, 
that is
\begin{equation}\label{def_T_eq}
-eq\,f(X,Y)M(X,Y)\equiv M(X,Y)\,\tp{T}\,\bmod{(f_1,\,f_2)}.
\end{equation}%
We define the subscript of the entries in \(T\) as follows. 
The row-index \(a\) runs through \(\wt(M)\) in increasing order, 
and \(b\) runs through \(\wt(\widetilde{\mu})\) in decreasing order, 
and we write
\begin{equation*}
T=[\,T_{a,b}\,]\in\mathrm{Mat}(2g,\,\mathbb{Q}[\mu])
\end{equation*}
Then we have \,\(\wt(T_{a,b}){=}-(a+b)\).
This is written in the usual manner as \,\(T\,{=}\,[T_{v_i,\,eq-v_{2g+1-j}}]\).
\end{definition}
\vspace{-2pt}% \vskip 2pt
One of the reasons why we have an extra factor \(-eq\) in (\ref{multiplication_f}) 
appears in \ref{det_V_and_det_T} later. 
%%%%%%%%%%%%%
%\vskip 3pt
\vspace{-3pt}
\begin{lemma}\label{jet}
We have the following{\rm:}
\begin{oitem}
\item The determinant \,\(\det(T)\) of the matrix \,\(T\) as defined in
  {\rm \ref{matrix_T}} is a non-zero rational constant multiple of a
  power of \,\(\varDelta\).
\item 
The linear map {\rm (\ref{multiplication_f})} is of rank \(2g\). 
\end{oitem}
\end{lemma} % -------------------------------------------------------
\begin{proof}
(1) 
We assume all the coefficients \(\mu_j\) are constants in \(\mathbb{C}\). 
We have \(\mathrm{det}(T)=0\) if and only if the
rank (called the Tjurina number at \(\mu\)) of the co-kernel 
\(\mathbb{Q}[\mu][X,Y]/(f,f_1,f_2)=\mathbb{Q}[x,y]/(f_1,f_2)\) 
(which is the dimension of \(\mathbb{C}[X,Y]/(f,f_1,f_2)\) over \(\mathbb{C}\))
of the map is positive.  
This is exactly the case that the ideal \((f,f_1,f_2)\) 
does not contain \(1\in\mathbb{Q}[\mu][X,Y]\).  
By Hilbert's Nullstellensatz 
(Theorem 5.4(i) in \cite{matsumura}, for instance), 
% END of deepgreen
we see this is equivalent to  existience of a
\((x,y)\in\mathbb{C}^2\) such that
\begin{equation}\label{f_f1_f2}
  f(x,y)=f_1(x,y)=f_2(x,y)=0.
\end{equation}
Conversely \(\varDelta=0\) means (\ref{f_f1_f2}), and it implies 
the co-kernel is non-trivial. 
Therefore the zeroes of \(\varDelta\) 
and those of \(\det(T)\) coincide. 
So \(\det(T)\) must be a non-zero rational constant multiple 
of a power of \(\varDelta\). 
(2) \,Since \(\det(T)\neq 0\) by virtue of (1), this is obvious. 
\end{proof}
\vskip 5pt
%%%%%%%%%%%%%%%%%%%%%%%%
Now we present the following\,:%
\begin{conjecture}\label{conj_D}
The determinant \,\(\det(T)\) is irreducible in \,\(\mathbb{Q}[\mu]\). 
\end{conjecture}%
\noindent
For each \((e,q)\), if this conjecture is true then we have immediately from \ref{jet} that
\begin{equation*}
\det(T)=c\cdot\varDelta \ \ \ \text{for some \(c\in\mathbb{Q}^{\times}\)}.
\end{equation*}
We give the explicit value of \(c\) above for 
the cases in Sections \ref{section_23-curve}, 
\ref{section_25-curve}, \ref{section_27-curve}, 
and \ref{section_34-curve}.%
%%%%%%%%%%%%%%%%%%%%%%%%%%%%%%%%%%%%%%%%%%%%%%%%%%%%%%%%%%%%%%%%%%%%
\newpage
\section{The horizontal derivation formula}
Much of this section was lacking in the previous versions of
this paper.  However, the authors are pleased that the horizontal derivation formula (see
\ref{sato_formula} below) and its closely related results been used
 in the work of \cite{oss2024}, and these results make our arguments in the
rest of the present paper much clearer.
%%%%%%%%%%%%%%%%%%%%%%%%%%%%%%%%%%%%%%%%%%%%%%%%%%%%%%%%%%%%%%%%%%%%
\subsection{Preliminary on derivations}
In this subsection, we prepare a general notation for the succeeding subsections. 
Let \(A\) be a commutative ring. 
A map \(D\,:\,A\longrightarrow A\), \(a\mapsto Da\) satisfying 
the following properties is called a \textit{derivation} on \(A\)\,;
for any \(a\), \(b\in A\), 
\begin{equation*}
D(a+b)=Da+Db, \ \ \
D(ab)=(Da)b+a(Db).
\end{equation*}
We denote by \(\Der(A)\) the set of all derivations on \(A\), 
which is an \(A\)-module. 
Since \(\Der(A)\) is equipped with the natural Lie bracket
\begin{equation*}
[D_1,\ D_2]=D_1D_2-D_2D_1 \ \ \ (D_1, \ D_2\in\Der(A)), 
\end{equation*}
\(\Der(A)\) is a Lie algebra over \(A\). 
Let \(M\) and \(N\) be two \(A\)-modules and 
\(D\in\mathrm{Der}(A)\) be a fixed derivation on \(A\). 
A map \(\widetilde{D}\,:\,M\longrightarrow N\), \(m\mapsto \widetilde{D}m\) satisfying 
\begin{equation*}
\widetilde{D}(m_1+m_2)=\widetilde{D}m_1+\widetilde{D}m_2, \ \ \
\widetilde{D}(am)=(Da)m+a\widetilde{D}m
\end{equation*}
is called a \textit{derivation from \(M\) to \(N\) associated with \(D\)}. 
We denote by \(\Der_A(M,N;D)\) the set of the derivations from \(M\) to \(N\) associated with \(D\). 
If \(M=N\), then we denote this simply by \,\(\Der_A(M;D)=\Der_A(M,M;D)\). 
Let \(D_1\), \(D_2\in\Der(A)\), and
\(\widetilde{D_1}\in\Der_A(M,N;D_1)\), \(\widetilde{D_2}\in\Der_A(M,N;D_2)\). 
Then we see that
\begin{equation*}
a\widetilde{D_1}\in\Der_A(M,N;aD_1) \ \ (a\in A), \ \ \
\widetilde{D_1}+\widetilde{D_2}\in\Der_A(M,N;\,D_1+D_2). 
\end{equation*}
Now let \(B\) be an \(A\)-algebra and take a derivative \(D\in\Der(A)\). 
Let \(M\) and \(N\) be two \(B\)-modules, hence also \(A\)-modules. 
Regarding \(B\) as an \(A\)-module, 
we take a derivation \(\widetilde{D}\in\Der_A(B;D)\) associated with \(D\in\Der(A)\). 
Since we can regard naturally \(\widetilde{D}\in\Der(B)\), 
we have the inclusion
\begin{equation*}
\Der_B(M,N;\widetilde{D})\subset\Der_A(M,N;D). 
\end{equation*}
For \(\widetilde{D_1}\in\Der_A(M;D_1)\), \(\widetilde{D_2}\in\Der_A(M;D_2)\), 
and for any \(a\in A\), \(m\in M\), we have
\begin{equation*}
  [\widetilde{D_1},\ \widetilde{D_2}]\ (=\widetilde{D_1}\widetilde{D_2}
  -\widetilde{D_2}\widetilde{D_1}) \in\Der_A(M;[D_1,D_2]), 
\end{equation*}
since  
\,\(\widetilde{D_1}\widetilde{D_2}(am)
=(D_1D_2a)m+(D_2a)(\widetilde{D_1}m)+(D_1a)(\widetilde{D_2}m)
+a\widetilde{D_1}\widetilde{D}_2m\). 
This implies that if \(\widetilde{D_1}\) and \(\widetilde{D_2}\in\Der_A(M;D)\) 
for a derivation \(D\in\Der(A)\), then
\begin{equation*}
[\widetilde{D_1},\ \widetilde{D_2}]\in\Der_A(M;0), 
\end{equation*}
where \(0\) stands for the \(0\)-map from \(A\) to \(A\) itself, which
is a derivation on \(A\).  Namely,
\([\widetilde{D_1},\ \widetilde{D_2}]\) is a \(A\)-homomorphism.  In
the set of all maps from \(M\) to \(M\), we take union of these sets
over \(\Der(A)\) and denote it as
\begin{equation*}
\Der_A(M)=\bigcup_{D\in\Der(A)}\Der_A(M;D).
\end{equation*}
This is also a Lie algebra over \(A\).  Instead of writing
\(D\in\Der_A(M)\), we can also say that the derivation \(D\) \textit{acts
  on}\, \(M\), for instance.
\par
\newpage
%%%%%%%%%%%%%%%%%%%%%%%%%%%%%%%%%%%%%%%%%%%%%%%%%%%%%%%%%%%%%%%%%%%%
\subsection{Chevalley's Lemma}%%%%%%%%%%%%%%%%%%%%%%%%%%%%%%%%%%%%%%
%%%%%%%%%%%%%%%%%%%%%%%%%%%%%%%%%%%%%%%%%%%%%%%%%%%%%%%%%%%%%%%%%%%%
We quote the following without proof 
from \cite{chevalley_1951}, p.112, Lemma 2. 
\begin{lemma}\label{unique_extension_der}
Let \(K\) be a field and \(R\) be a function field of one variable
with \(K\) the field of constants.  
Take a transcendental element \(\xi\) in \(R\) over \(K\) and fix it.  
For \(D\in\Der(K)\), there exists 
a unique derivation  \(D_{\xi}\in\Der_K(R;D)\) satisfying
\begin{equation*}
  D_{\xi}(\xi)=0. 
\end{equation*}
\vskip 3pt
\end{lemma}
\begin{corollary}\label{coroll_chevalley}
  Suppose a derivation \,\(D\in\Der(\mathbb{Q}[\mu])\)\, and an
  element \,\(\xi\) in \,\(\mathbb{Q}[\mu,x,y]\) transcendental over
  \(\mathbb{Q}(\mu)\) are given.  There is a unique extension
  \(D_{\xi}\in\Der_{\mathbb{Q}[\mu]}(\mathbb{Q}(\mu,x,y),D)\) of \(D\)
  satisfying
\begin{equation*}
  D_{\xi}(\xi)=0. 
\end{equation*}
\end{corollary}
\begin{proof}
We regard \(\xi\) to be an element in \(\mathbb{Q}(\mu,x,y)\). 
Since \(D\) extends to an element in \(\Der(\mathbb{Q}(\mu))\) as usual, 
the statement follows from \ref{unique_extension_der}. 
\end{proof}
\vskip 5pt Under the situation of \ref{unique_extension_der}, we
extends \(D_{\xi}\) to a derivation \(\widetilde{D_{\xi}}\) on the
space of differentials \(R\,d\xi\) via
\begin{equation}\label{der_on_forms}
  \widetilde{D_{\xi}}(\omega)=D_{\xi}\Big(\frac{\omega}{d\xi}\Big)d\xi \ \ \
  \mbox{for any \(\omega\in{R}\,d\xi\)}. 
\end{equation}
Namely, we have
\(\widetilde{D_{\xi}}\in\Der_R(R\,d\xi;\,D_{\xi})\subset\Der_K(R\,d\xi;\,D)\).
However, in the following lemma, \(\widetilde{D_{\xi}}\) is denoted by
\(D_{\xi}\) for simplicity.
%% \vskip 5pt
\begin{lemma}{\rm (Chevalley \cite{chevalley_1951}, p.125, Lemma 3)} \
  % ------------------------------------------------------------
\label{chevalley}
Here we use the notation above. 
Let \(\xi\) and \(\zeta\) be two transcendental elements in \(R\) over \(K\). 
Then we have the following relation between \(D_{\xi}\) and \(D_{\zeta}\). 
For any \(w\in{R}\), we have
\begin{equation*}
D_{\xi}(wd\xi) - D_{\zeta}(wd\xi) = d(-wD_{\zeta}\xi).
\end{equation*}
\end{lemma}
%% \vskip 5pt
\begin{proof} (From Manin \cite{manin_1958}) \ 
The operator
  \,\(D_{\xi}-D_{\zeta}+(D_{\zeta}\xi)\frac{d}{d\xi}\) \,is 
a derivative on \(R\) which vanishes on \(K\) 
and also  kills \(\xi\).  
By the uniqueness of extension of a derivation on \(K\) to \(R\),  
this vanishes on \(R\).  
So that
\begin{equation*}
  (D_{\xi}-D_{\zeta})w+(D_{\zeta}\xi)\frac{dw}{d\xi}=0.
\end{equation*}
Moreover \,
\((D_{\xi}-D_{\zeta})(wd\xi)=(D_{\xi}-D_{\zeta})w\cdot{d\xi}
+w\cdot(D_{\xi}-D_{\zeta})d\xi\). \ 
Since \(\zeta\) is transcendental,
we see \,\(\frac{d}{d\zeta}D_{\zeta}=D_{\zeta}\frac{d}{d\zeta}\)\, by 
\ref{commute_ell_and_d} below or \cite{chevalley_1951}, p.125, Lemma 1, 
and have
\begin{equation*}
  D_{\xi}d\xi=D_{\xi}\Big(\frac{d\xi}{d\xi}\Big)d\xi=0,\ \ \ 
  D_{\zeta}d\xi=D_{\zeta}\Big(\frac{d\xi}{d\zeta}\Big)d\zeta=
  \frac{d}{d\zeta}(D_{\zeta}\xi)d\zeta=d\big(D_{\zeta}(\xi)\big). 
\end{equation*}
Therefore
\begin{equation*}
  (D_{\xi}-D_{\zeta})(wd\xi)=-(D_{\zeta}\,\xi)\frac{dw}{d\xi}d\xi
  -w{\cdot}d(D_{\zeta}\,\xi)=-d(wD_{\zeta}\,\xi)
\end{equation*}
as desired. 
\end{proof}%
\vskip 5pt
For an explicit sample calculation, 
see Section \ref{section_23-curve} on the derivation \(\frac{\partial}{\partial\mu_j}\). 
\newpage
%%%%%%%%%%%%%%%%%%%%%%%%%%%%%%%%%%%%%%%%%%%%%%%%%%%%%%%%%%%%%%%%%%%%%%%%%%%%%%%%%%%%%%%%%%%%
\subsection{The horizontal derivation formula}\label{horizontal_formula} %%%%%%%%%%%%%%%%%%%%%
%%%%%%%%%%%%%%%%%%%%%%%%%%%%%%%%%%%%%%%%%%%%%%%%%%%%%%%%%%%%%%%%%%%%%%%%%%%%%%%%%%%%%%
We present a useful formula \ref{sato_formula} 
called the \textit{horizontal derivation formula}, 
which is important for our definition of the first de Rham cohomology 
equipped with a structure of a differential module and 
eventually giving a sophisticated algorithm to compute 
the Gauss-Manin connection  \(\varGamma_j\) (a sort of Christoffel symbol). 
The formula was obtained by Kouki Sato while working on \cite{oss2024}, 
and coauthors of that paper have permitted us to quote from it.  
This formula would be useful for various applications. 
\par
In addition to the matrix  \(T=[T_{ij}]\) in \ref{matrix_T}, % Def. 1.14
we introduce \(A_{i+e}\) and \(B_{i+e}\in\mathbb{Q}[\mu,X,Y]\) 
for each \(i\in\wt(M)\) defined by
\begin{equation}\label{def_T_tilde}
  -eq\cdot{}M_i\cdot{f}(X,Y)=\sum_{j\,\in\wt(M)}T_{ij}M_j+A_{i+e}\,
  f_X+B_{i+q}\,f_{\,Y},
\end{equation}
where \(f(X,Y)\) is the defining polynomial 
of \(\mathscr{C}\) and \(M_i=M_i(X,Y)\) is the one defined in \ref{def_M's}. 
In this notation, we have 
\begin{equation*}
\wt(A_{i+e})=-(i+e), \ \ \ 
\wt(B_{i+q})=-(i+q).
\end{equation*}
We denote by 
\begin{equation*}
\frac{\,\partial\,}{\partial\,\mu_j\,}\in\Der_{\mathbb{Q}[\mu]}\Big(\mathbb{Q}(\mu,x,y);\frac{\,\partial\,}{\partial\,\mu_j\,}\Big)
\end{equation*}
the derivation uniquely determined by \ref{coroll_chevalley} as an extension of the derivation 
\(\frac{\,\partial\,}{\partial\,\mu_j\,}\in\Der(\mathbb{Q}[\mu])\) having the property
\begin{equation}\label{Dx=0}
\frac{\,\partial\,}{\partial\,\mu_j\,}x=0.
\end{equation}
Namely, we use the same notation for the extension.  It is natural to
define the weight to be
\begin{equation}\label{wt_on_d_dmu}
\wt\Big(\frac{\,\partial\,}{\partial\,\mu_j\,}\Big)=j.
\end{equation}
Moreover, there is a unique extension \(\widetilde{\ell_i}\) of the derivation %このとき
\begin{equation}\label{small_ell_operators}
\ell_i=\sum_jT_{ij}\,\frac{\,\partial\,}{\partial\,\mu_j\,}\in\Der(\mathbb{Q}[\mu])
\end{equation}
satisfying 
\begin{equation*}
\widetilde{\ell_i}x=0,
\end{equation*}
by \ref{coroll_chevalley} again, for which we use the same notation 
\begin{equation}\label{def_ell}
\widetilde{\ell_i}=\sum_jT_{ij}\,\frac{\,\partial\,}{\partial\,\mu_j\,}
\in\Der_{\mathbb{Q}[\mu]}(\mathbb{Q}(\mu,x,y);\ell_i). 
\end{equation}
Therefore, for a 1-form \,\(\omega\in\mathbb{Q}(\mu,x,y)\,dx\), 
defining \(\overline{\ell_i}\) by
\begin{equation*}
\overline{\ell_j}(\omega)=\widetilde{\ell_j}\Big(\frac{\omega}{\,dx\,}\Big)dx
\end{equation*}
as (\ref{der_on_forms}) 
(be careful as this depends on the condition (\ref{Dx=0})), 
and we have an extension 
\begin{equation}\label{def_ell_dx}
\overline{\ell_i}\in\Der_{\mathbb{Q}[\mu]}(\mathbb{Q}(\mu,x,y)dx;\ell_i). 
\end{equation}
However, we omit \,\(\widetilde{}\)\, 
and \,\(\overline{\strutf{6pt}{0pt}\,\ }\)\, for simplicity, 
and use the notation 
\begin{equation*}
\ell_i=\sum_jT_{ij}\,\frac{\,\partial\,}{\partial\,\mu_j\,}
\end{equation*}
for these extensions. This convention will be practical. 
\par
In this paper, 
for a polynomial \(G=G(\mu,X,Y)\in\mathbb{Q}[\mu,X,Y]\), 
we denote by \((G)_x=G_x\), \((G)_y=G_y\) , and \((G)_{\mu_j}=G_{\mu_j}\) the polynomials 
given by the substitution \((X,Y)=(x,y)\) to the partial derivations 
\strutf{14pt}{6pt}\(\frac{\partial}{\,\partial X}\,{G}(\mu,X,Y)\), 
\strutf{0pt}{6pt}\(\frac{\partial}{\,\partial X}\,{G}(\mu,X,Y)\), and 
\strutf{0pt}{6pt}\(\frac{\partial}{\,\partial\mu_j}\,{G}(\mu,X,Y)\). 
This rule is applied also for \(f(X,Y)\). 
Moreover, the higher order case is similarly defined. 
For example, \(f_{yy}\) means the quantity given by 
the substitution \((X,Y)=(x,y)\)\, to \(\frac{\partial^2}{\,\partial Y^2\,}f(X,Y)\). 
\par
Under this convention, we present the following formula, 
which is mainly used for \ref{Def_dR_cohom}. 
\begin{theorem}\label{sato_formula}
{\rm (the horizontal derivation formula)}
Let \(N=N(X,Y)\) be a monomial of \(X\) and \(Y\) {\rm(}with non-negative powers{\rm)}, 
and \(P\in\mathbb{Q}[\mu]\) be an arbitrary polynomial. 
Then we have the following formula\,{\rm :}
\begin{equation}\label{HDF}
\begin{aligned}
\ell_j\Big(&\frac{\,P\cdot N\,}{f_y}\,dx\Big)
=P\,\frac{(N)_y\,B_{j+q}-N\big(M_j+(B_{j+q})_y\big)-(N\,A_{j+e})_x}{f_y}\,dx\\
&\hskip 200pt+\ell_j(P)N\,\frac{\,dx\,}{f_y}+d\bigg(P\,\frac{\,NA_{j+e}\,}{f_y}\bigg).
\end{aligned}
\end{equation}
Especially, if \(e=2\) then we have
\begin{equation*}
\ell_j\bigg(\frac{\,M_k\,dx\,}{f_y}\bigg)
=\frac{\,-(M_k\,A_{j+2})_x-M_{j+k}\,}{2y}\,dx+d\bigg(\frac{\,M_k\,A_{j+2}\,}{f_y}\bigg).
\end{equation*}
\end{theorem}% % % % 
\vskip 8pt
\begin{remark}
{\rm 
The formula (\ref{HDF}) shows
\begin{equation}\label{ell_in_Der}
\ell_j\in\Der_{\mathbb{Q}[\mu]}\Big(\mathbb{Q}[\mu,x,y]\frac{\,dx\,}{f_y},\ 
\mathbb{Q}[\mu,\,x,\,y]\frac{\,dx\,}{f_y}
+d\,\Big(\mathbb{Q}[\,\mu,\,x,\,y]\frac{1}{\,f_y\,}\Big)
\,;\ \ell_j\Big). 
\end{equation}
Since \(\det(T)\) equals \(\varDelta\) up to a 
non-zero rational multiplicative constant 
(see \ref{jet}(3) and \ref{lemma_discri} below) 
at least in our cases  
\((e,q)=(2,3)\), \((2,5)\), \((2,7)\), and \((3,4)\), 
we also have
\begin{equation*}
\frac{\,\partial\,}{\partial\,\mu_j\,}
\in\Der_{\mathbb{Q}[\mu]}\Big(
\mathbb{Q}\Big[\frac{\,1}{\,\varDelta\,},\,\mu,\,x,\,y,\,\frac{1}{\,f_2(x,y)\,}\Big]\,;\ 
\frac{\,\partial\,}{\partial\,\mu_j\,}\Big).
\end{equation*}
}%END rm
\end{remark}%
\begin{remark}\label{remark2}{\rm 
(1) \,While operating \(\ell_j\) on an element in
\(\mathbb{Q}[\mu,x,y]\frac{\,dx\,}{f_y}\), some terms with
\({f_y}^2\) in its denominators appear.  But, it is seen by
(\ref{HDF}) or (\ref{ell_in_Der}) that we have only terms with
\(f_y\) of 1st degree in their denominators modulo the exact
forms.
\\
(2) \,Moreover, operating \(\frac{\partial}{\,\partial\mu_j\,}\)
on an element in \(\mathbb{Q}[\mu,x,y]\frac{\,dx\,}{f_y}\), we
may have the denominator with the discriminant \(\varDelta\) besides
\(f_y\) because of (\ref{HDF}) and the fact \(\det([T_{ij}])\)
is \(\varDelta\) times a rational.
\\
(3) \,Eventually, we have a nice definition \ref{Def_dR_cohom}
of the 1st de Rham cohomology
\(H^1_{\mathrm{dR}}(\mathscr{C}/\mathbb{Q}[\mu])\) and a good
differential module structure on it.  This makes a clear view of
our application of the theory of the Gauss-Manin connection.
\\
(4) \,One more advantage of the formula (\ref{HDF}) is that it
gives tremendously fast algorithm to compute many examples of
the systems of heat equations satisfied by the sigma functions.
Analysing such examples, the results in \cite{oss2024} are
obtained.  
}% END rm
\end{remark}
\vskip 10pt
We refer the reader to \cite{oss2024} on a proof of \ref{sato_formula}. 
%%%%%%%%%% For the referring %%%%%%%%%%%%%%%%%%5
%(\textsl{However, we quoted the proof from {\rm \cite{oss2024}} only for the referrying at the end of the paper. })
%%%%%%%%%% For the referring %%%%%%%%%%%%%%%%%%5
Here we note the following equality for the proof of the next lemma. 
Applying \(\frac{\partial\,y}{\,\partial\mu_{eq-k}}\) to \(f(x,y)=0\), we have
\,\(f_y\,\frac{\partial\,y}{\,\partial\mu_{eq-k}}-M_k=0\), so that
%\therefore \ \ 
\begin{equation}\label{dydmu_M_fy}
\frac{\partial\,y}{\,\partial\mu_{eq-k}}=\frac{\,M_k\,}{f_y}. 
\end{equation}
\newpage
\par
We give the following lemma, 
by which we understand (\ref{HDF}) below more clearly, 
is a special case of Lemma 1 in p.125 of \cite{chevalley_1951} 
(proved by the uniqueness assertion in \ref{unique_extension_der}). 
\begin{lemma}\label{commute_ell_and_d}
Denote by \(\frac{d}{\,dx\,}\) the derivation in 
\(\Der_{\mathbb{Q}[\mu]}(\mathbb{Q}(\mu,x,y);0)\)
determined by the properties that 
\(\frac{d\mu_k}{dx}=0\) for any \(k\in\wt(\mu)\) and 
\(\frac{d}{\,dx\,}x=1\). 
Then for any \(k\in\wt(\widetilde{\mu})\), we have
\begin{equation*}
 \frac{d}{\,dx\,}\frac{\partial y}{\,\partial\mu_k\,}
=\frac{\partial}{\,\partial\mu_k\,}\frac{dy}{\,dx\,}. 
\end{equation*}
Therefore, we have additionally
\begin{equation*}
 \frac{d}{\,dx\,}\ell_j=\ell_j\frac{d}{\,dx\,}
\end{equation*}
as a map from 
\(\mathbb{Q}(\mu,x,y)dx\) into itself. 
\end{lemma}
\begin{proof}
This is exactly Lemma 2 in \cite{chevalley_1951}, p.125, which
we can check by using (\ref{dydmu_M_fy}) as follows. 
The left-hand side is 
\begin{equation*}
\begin{aligned}
\frac{\,\partial\,}{\,\partial\mu_k\,}\frac{\,dy\,}{\,dx\,}
&=-\frac{\,\partial\,}{\,\partial\mu_k\,}\frac{\,f_x\,}{\,f_y\,}
=-\frac{\,\frac{\,\partial f_x\,}{\,\partial\mu_k\,}\,}{\,f_y\,}
+\frac{\,f_x\frac{\,\partial f_y\,}{\,\partial\mu_k\,}\,}{\,{f_y}^2\,}\\
&=-\frac{\,-(M_{eq-k})_x+f_{xy}\tfrac{\,\partial y\,}{\,\partial\mu_k\,}\,}{\,f_y\,}
+\frac{\,f_x\cdot\big(-(M_{eq-k})_y+f_{yy}\frac{\,\partial y\,}{\,\partial\mu_k\,}\big)\,}{\,{f_y}^2\,}\\
&=-\frac{\,-(M_{eq-k})_x+f_{xy}\tfrac{\,M_{eq-k}\,}{\,f_y\,}\,}{\,f_y\,}
+\frac{\,f_x\cdot\big(-(M_{eq-k})_y+f_{yy}\tfrac{\,M_{eq-k}\,}{\,f_y\,}\big)\,}{\,{f_y}^2\,}. 
\end{aligned}
\end{equation*}
On the other hand, we have
\begin{equation*}
\begin{aligned}
\frac{\,d\,}{\,dx\,}\frac{\,\partial y\,}{\,\partial\mu_k\,}
&=\frac{\,d\,}{\,dx\,}\frac{\,M_{eq-k}\,}{\,f_y\,} \ \ \ (\,\because \ \ref{dydmu_M_fy})\\
&=\frac{\,(M_{eq-k})_x+(M_{eq-k})_y\frac{\,dy\,}{\,dx\,}\,}{\,f_y\,}
-\frac{\,M_{eq-k}\big(f_{yx}+f_{yy}\frac{\,dy\,}{\,dx\,}\big)\,}{\,{f_y}^2\,}\\
&=\frac{\,(M_{eq-k})_x-(M_{eq-k})_y\frac{\,f_x\,}{\,f_y\,}\,}{\,f_y\,}
-\frac{\,M_{eq-k}\big(f_{yx}-f_{yy}\frac{\,f_x\,}{\,f_y\,}\big)\,}{\,{f_y}^2\,}.
\end{aligned}
\end{equation*}
These two coincide and the formula follows. 
\end{proof}
\vskip 3pt
\begin{remark}\label{pre_de_Rham}
{\rm
Thanks to \ref{sato_formula}, and 
\begin{equation*}
\ell_j\bigg(d\bigg(\frac{h}{\,f_2\,}\bigg)\bigg)
=\ell_j\bigg(\frac{d}{\,dx\,}\bigg(\frac{h}{\,f_2\,}\bigg)\bigg)dx
=\frac{d}{\,dx\,}\bigg(\ell_j\bigg(\frac{h}{\,f_2\,}\bigg)\bigg)dx
=d\!\bigg(\ell_j\bigg(\frac{h}{\,f_2\,}\bigg)\bigg)
\end{equation*}
for any \(h\in\mathbb{Q}[\mu,x,y]\) by \ref{commute_ell_and_d}, we see that
\begin{equation}\label{action_ell}
\begin{aligned}
\ell_j&\in
\Der_{\mathbb{Q}[\mu]}\Big(\mathbb{Q}[\mu,\,x,\,y]\frac{dx}{\,f_y\,}
+d\Big(\mathbb{Q}[\,\mu,\,x,\,y]\frac{1}{\,f_y\,}\Big);\,\ell_j\Big) \ \ \text{and}\\
\ell_j&\in
\Der_{\mathbb{Q}[\mu]}\Big(
\Big(\mathbb{Q}[\mu,\,x,\,y]\frac{dx}{\,f_y\,}
+d\Big(\mathbb{Q}[\,\mu,\,x,\,y]\frac{1}{\,f_y\,}\Big)\Big)\Big/
d\Big(\mathbb{Q}[\,\mu,\,x,\,y]\frac{1}{\,f_y\,}\Big)
\Big).
\end{aligned}
\end{equation}
While we use the same notation \(\ell_j\) for elements in different sets above, 
it is supposed that there is no confusion. 
}% END rm
\end{remark}
\vskip 3pt
\begin{definition}
For later use, we denote the Lie subalgebra generated by 
all the \(\ell_j\)s over the ring \(\mathbb{Q}[\mu]\)  
in the Lie algebra 
obtained as the union of the above sets by \(\vec{L}\)\,{\rm:}
\begin{equation}\label{the_L}
\vec{L}\subset
\Der_{\mathbb{Q}[\mu]}\left(
\mathbb{Q}[\mu,\,x,\,y]\frac{dx}{\,f_y\,}
+d\,\bigg(\mathbb{Q}[\,\mu,\,x,\,y]\frac{1}{\,f_y\,}\bigg)\right). 
\end{equation}
\end{definition}
\vskip 8pt
\begin{remark}
{\rm 
The inclusion above might be an equality. 
}%END rm
\end{remark}
\newpage
%%%%%%%%%%%%%%%%%%%%%%%%%%%%%%%%%%%%%%%%%%%%%%%%%%
\section{de Rham cohomology etc.}\label{S_dR}%%%%%%%%%%%%%%%%%
%%%%%%%%%%%%%%%%%%%%%%%%%%%%%%%%%%%%%%%%%%%%%%%%%%
%%%%%%%%%%%%%%%%%%%%%%%%%%%%%%%%%%%%%%%%%%%%%%%%%%%%%%%%%%%%%%%%%%%%
\subsection{Differential forms of the first kind of the curve}%%%%
%%%  2.3
%%%%  (2.2
%%%%  in
%%%%  the
%%%%  old
%%%%  verion) %%
%%%%%%%%%%%%%%%%%%%%%%%%%%%%%%%%%%%%%%%%%%%%%%%%%%%%%%%%%%%%%%%%%%%%
Recall that any term of the sequence \(w_1\), \(w_2\), \(\cdots\),
\(w_g\in\mathrm{wgs}(e,q)\) in (\ref{gaps}) is written as
\begin{equation*}
w_j=2g-1-a_je-b_jq,
\end{equation*}
with non-negative integers \(a_j\), \(b_j\). % Especially, \(a_1=b_1=0\). 
Using this notation, 
we define differential forms
\begin{equation}\label{1st_kind}
\omega_{w_j}=\frac{x^{a_j}y^{b_j}}{f_2(x,y)}dx, \ \ \ 
(j=1, \ \cdots, \ g). 
\end{equation}
These are of the first kind 
(namely holomorphic everywhere on \(\mathscr{C}\)) 
and of weight \(w_j\).  
\vskip 10pt
%%%%%%%%%%%%%%%%%%%%%%%%%%%%%%%%%%%%%%%%%%%%%%%%%%%%%%%%%%%%%%%%%%%%%%%%%%%%%%%%%%%
\subsection{Forms of the second kind and the first de Rham cohomology}%%%%%%%%%%%%%
\label{S3.2}%%%%%%%%%%%%%%%%%%%%%%%%%%%%%%%%%%%%%%%%%%%%%%%%%%%%%%%%%%%%%%%%%%%%%%%
In this paper, we should
consider the curve \(\mathscr{C}\) and other objects arising from
\(\mathscr{C}\) to be defined over the ring \(\mathbb{Q}[{\mu}]\); 
in which the period matrix \(\varOmega\) is exceptional 
as is explained later, 
and is defined over the field \(\mathbb{C}\) of complex numbers. 
\par
Throughout this paper, we mean by the (\textit{differential}) 
\textit{forms of the second kind} 
all the forms without non-zero residue everywhere on \(\mathscr{C}\). 
Therefore we include in them the differential forms of the first kind. 
However, we mean by the differential forms of the third kind 
the forms having non-zero residue pole somewhere on \(\mathscr{C}\), 
on which we never assume that their order of such poles to be \(1\), 
and that they does not contain any form of the second kind. 
\par
It is easy to see that the set of differential forms of the second kind on \(\mathscr{C}\) with a pole only at \(\infty\) 
is exactly
\begin{equation*}
\mathbb{Q}[\mu,x,y]\frac{dx}{\,f_2(x,y)\,}.
\end{equation*}
In concord with our principal that we shall treat our materials not
over the fields \(\mathbb{Q}(\mu)\) but over the ring
\(\mathbb{Q}[\mu]\), we have the following definition thanks to the
horizontal derivation formula \ref{sato_formula} and
\ref{pre_de_Rham}.
\begin{definition}\label{Def_dR_cohom}
  We define the \textit{first de Rham cohomology} of \,\(\mathscr{C}\)
  over \(\mathbb{Q}[\mu]\) by
\begin{equation}\label{def_dR_cohom}
H_{\mathrm{dR}}^1(\mathscr{C}/\mathbb{Q}[{\mu}])
=\frac{\mathbb{Q}[\mu,x,y]\frac{dx}{\,f_2(x,y)\,}+d\,\big(\mathbb{Q}[\mu,x,y]
  \frac{1}{\,f_2(x,y)\,}\big)}
{d\,\big(\mathbb{Q}[\mu,x,y]\frac{1}{\,f_2(x,y)\,}\big)}. 
\end{equation}
\end{definition}
\vskip 5pt
\noindent
We are never concerned with higher cohomologies in this paper. 
We have an important remark here. 
\begin{remark}\label{on_def_of_dR}
{\rm
In order to define the first de Rham cohomology over \(\mathbb{Q}[\mu]\) 
which endures for calculation for \lq\lq horizontal\rq\rq\ differentiations, 
namely, for extensions of the elements in \(\Der(\mathbb{Q}[\mu])\), 
the right hand side of (\ref{def_dR_cohom}) is 
the\! \lq\lq slimmest form\rq\rq\ as explained in the following. 
For an integer \(k\), we denote by \(\omega_{\mathscr{C}}(k\cdot\infty)\) 
(resp. by \(L_{\mathscr{C}}(k\cdot\infty)\)) 
the module consists of the forms in 
\(\mathbb{Q}[\mu,x,y]\frac{dx}{\,f_2(x,y)\,}\) 
(resp. the functions (the polynomials) in \(\mathbb{Q}[\mu,x,y]\)) 
whose order of the pole at \(\infty\) is at most \(k\). 
Then we see by Theorem 8.2 in p.30 of \cite{lang_1982} that 
\begin{equation*}
H_{\mathrm{dR}}^1(\mathscr{C}/\mathbb{Q}[{\mu}])
\simeq\frac{\,\omega_{\mathscr{C}}((2g-2)\cdot\infty)\,}
      {\,dL_{\mathscr{C}}((2g-1)\cdot\infty)\,}. 
\end{equation*}
and this is a module of rank \(2g\) over \(\mathbb{Q}[\mu]\). 
We place emphasis on this that 
the derivatives \(\ell_j\)s in (\ref{def_ell_dx}) and 
the derivatives \(L_j\)s which appear in (\ref{ell_operators}) 
(and any elements in the Lie algebra generated by them) 
do not act on \(\omega_{\mathscr{C}}((2g-2)\cdot\infty)\) 
but on the module 
\(\mathbb{Q}[\mu,x,y]\frac{dx}{\,f_2(x,y)\,}+d\,\big(\mathbb{Q}[\mu,x,y]
\frac{1}{\,f_2(x,y)\,}\big)\) 
in (\ref{def_dR_cohom}). 
So that, the definition (\ref{def_dR_cohom}) is much suitable 
for explicit calculation 
on \(H_{\mathrm{dR}}^1(\mathscr{C}/\mathbb{Q}[{\mu}])\) as 
a differential \(\vec{L}\)-module, 
where \(\vec{L}\) is defined in (\ref{the_L}). 
}%END rm
\end{remark}
%%%%%%%%%%%%%%%%%%%%%%%%%%%%%%%%%%%%%%%%%%%%%%%%%%
\subsection{Symplectic inner product}%%%%%%%%%%%%%
%%%%%%%%%%%%%%%%%%%%%%%%%%%%%%%%%%%%%%%%%%%%%%%%%%
We want to chose good \(g\) differential forms \(\eta_{-w_j}\)
(\(j=g\), \(\cdots\)\,, \(1\)) of the second kind and of weight
\(-w_j\) such that the \(2g\) forms consists of them and the forms of
the first kind in (\ref{1st_kind}) give rise to a symplectic basis of
the space \(H_{\mathrm{dR}}^1(\mathscr{C}/\mathbb{Q}[{\mu}])\) which
is equipped with a natural symplectic inner product explained below.  We
denote
\,\(\vec{\omega}=(\omega_{w_1},\cdots,\omega_{w_g},\eta_{-w_g},\cdots,\eta_{-w_1})\).
\par
As before, \((x,y)\) is a generic point of \,\(\mathscr{C}\).  It is
known that the \(\eta_{-w_j}\)s as well as the \(\omega_{w_j}\)s are
defined over \(\mathbb{Q}[{\mu}]\), namely, they are of the form
\(\frac{h(x,y)}{f_2(x,y)}dx\) with \(h(x,y)\in\mathbb{Q}[{\mu},x,y]\)
(see \cite{onishi_2018}, \cite{oss2024}).  We already defined in
(\ref{def_M's}) that
\begin{equation*}
M(x,y)=\{\,x^iy^j\,|\,0\leq i\leq q-2, \ 0\leq j\leq e-2\,\}. 
\end{equation*}
The set
\(\big\{\frac{h(x,y)}{f_2(x,y)}\,dx\,|\,h(x,y)\in M(x,y)\big\}\) 
forms a basis of \(H_{\mathrm{dR}}^1(\mathscr{C}/\mathbb{Q}[{\mu}])\) as
a \(\mathbb{Q}[{\mu}]\)-module.  
Here we note that the order of pole of \(x^{q-2}y^{e-2}\) 
at \(\infty\), which is the largest in \(M(x,y)\), is
\begin{equation*}
e(q-2)+q(e-2) = 2(e-1)(q-1)-2=2g-2, 
\end{equation*}
and \(H_{\mathrm{dR}}^1(\mathscr{C}/\mathbb{Q}[{\mu}])\) is a free
\(\mathbb{Q}[{\mu}]\)-module of rank \(2g\).  
\par
The space \(H_{\mathrm{dR}}^1(\mathscr{C}/\mathbb{Q}[{\mu}])\) is
equipped with the anti-symmetric inner product given by
\begin{equation*}
  \omega\star\eta=\mathop{\Res}_{P=\infty}
  \Big(\int_{\infty}^P\omega\Big)\eta(P)
\end{equation*}
for any \(\omega\),
\(\eta\in H_{\mathrm{dR}}^1(\mathscr{C}/\mathbb{Q}[{\mu}])\).
This is formally defined using the formal expansion with respect to a
local parameter at \(\infty\) (see \cite{onishi_2018}, \cite{oss2024}), 
which is a formal interpretation of the product
\(\omega\star\eta={\displaystyle\mathop{\Res}_{P\in\mathscr{C}^{\circ}}}
\big(\int_{\infty}^P\omega\big)\eta(P)\)
if we regard the \(\mu_j\)s as complex numbers and \(\mathscr{C}\) as a
non-singular curve (i.e. a compact Riemann surface).  Here
\(\mathscr{C}^{\circ}\) is the regular polygon obtained from the Riemann
surface attached to the curve \(\mathscr{C}\) 
with respect to the paths 
\begin{equation}\label{alpha_and_beta}
\{\alpha_j,\,\beta_j\,|\,j=1,\cdots,g\}
\end{equation}
which form a symplectic basis of the homology group
\(H_1(\mathscr{C},\mathbb{Z})\) as usual.
\par
We choose \(\eta_{-w_i}\) to satisfy the symplectic relations
\begin{equation}\label{symplectic}
\omega_{w_i}\star\omega_{w_j}=0, \ \ 
\eta_{-w_i}\star\eta_{-w_j}=0, \ \ 
\omega_{w_i}\star\eta_{-w_j}=-\eta_{-w_j}\star\omega_{w_i}=\delta_{ij}. 
\end{equation}
The choice of \(\eta_{-w_j}\) is not unique but given by using
so-called the \textit{fundamental \(2\)-form of Klein}.  For a more
concrete construction of these forms, we refer the reader to
\cite{onishi_2018} (or to \cite{oss2024} for more detailed
description).  Especially, as a \(\mathbb{Q}[{\mu}]\)-module, the
\(\omega_{w_j}\)s and \(\eta_{-w_j}\)s form a basis of
\(H_{\mathrm{dR}}^1(\mathscr{C}/\mathbb{Q}[{\mu}])\).
%%%%%%%%%%%%%%%%%
\par
\newpage
%%%%%%%%%%%%%%%%%%%%%%%%%%%%%%%%%%%%%%%%%%%%%%%%%%%%%%%%%%%%%%%%%%
\section{The sigma function}\label{section_const_sigma}%%%%%%%
%%%%%%%%%%%%%%%%%%%%%%%%%%%%%%%%%%%%%%%%%%%%%%%%%%%%%%%%%%%%%%%%%%
\subsection{Materials for the construction of the sigma functions}%%%%%%%
\label{materials_for_sigma}%%%%%%%%%%%%%%%%%%%%%%%%%%%%%%%%%%%%%%%%%
%%%%%%%%%%%%%%%%%%%%%%%%%%%%%%%%%%%%%%%%%%%%%%%%%%%%%%%%%%%%%%%%%%%%%%%%%%%
In this section, 
we recall the definition of the natural generalisation of 
(\ref{weierstrass_sigma}) for the general curve \(\mathscr{C}\), 
which is a function of \,\(g\)\, variables
\,\(u=\tp{[u_{w_g} \ \cdots \ u_{w_1}]}\)\, with \(w_j\in\mathrm{wgs}(e,q)\) 
and written as 
\(\sigma(u)\) \(=\sigma(u_{w_g},\cdots,u_{w_1})\).  
It is called, analogously, the sigma function for \(\mathscr{C}\).  
To define it precisely, we need to introduce the Schur polynomial, 
period matrices, and others.
\par
We use the classical notation of matrices concerning theta series, so
our notation is transposed from the notation of \cite{bl_2008}.
Specifically, we will denote the period matrix of a curve as
(\ref{per_matrix}) whereas Buchstaber and Leykin's papers use the
transpose of this matrix.  The other differences between their
notation and ours will follow from this.
\par
Letting \(\vec{T}=\sum_{j=1}^gu_{w_j}T^{w_j}\) 
with an indeterminate \(T\), 
we define \(\{p_k\}\) 
by \,\(p_k=0\) \,for negative \,\(k\) \,and 
\begin{equation*}
  \sum_{k=0}^{\infty}p_kT^k=\sum_{n=0}\frac{\vec{T}^n}{n!}.
\end{equation*}
Then we define \(s(u)=s(u_{w_g},\cdots,u_{w_1})\) 
  (see Section 4 in \cite{nakayashiki_2010}) by
\begin{equation}\label{schur_pol}
  s(u)=\det(\mathop{[\ p_{w_{g+1-i}+j-g}\ ]}_{1\leq i\leq g,\ 1\leq j\leq g}). 
\end{equation}
This is the {\it Schur polynomial}\, corresponding to the
sequence \((w_g,\cdots,w_1)\). 
\par
%%%%%%%%%%%%%%%%
From now on in this Section, we assume 
all the  \(\mu_j\)s are complex numbers and \(\mathscr{C}\) is
non-singular. 
The period matrices are defined by 
\(\varOmega=\bigg[\ 
\begin{matrix}
\omega' &\ \omega'' \\
\eta'   &\ \eta''
\end{matrix}\ 
\bigg]\) 
with 
\begin{equation}\label{per_matrix}
\begin{aligned}
\omega' &=\bigg[\int_{\alpha_j}\!\omega_{w_{g-i+1}}\bigg], \ \ 
\omega''=\bigg[\int_{\beta_j}\!\omega_{w_{g-i+1}}\bigg], \\
\eta' &=\bigg[\int_{\alpha_j}\!\eta_{-w_{g-i+1}}\bigg], \ \ 
\eta''=\bigg[\int_{\beta_j}\!\eta_{-w_{g-i+1}}\bigg],
\end{aligned}
\end{equation}
where the \(\eta_{-w_j}\)'s are ones of (\ref{symplectic}) and 
\(\alpha_i\)s and \(\beta_j\)'s are ones of (\ref{alpha_and_beta}). 
We introduce the \(g\)-dimensional space \(\mathbb{C}^g\) 
with coordinates
\begin{equation}\label{u_coordinates}
u=\tp{[u_{w_g} \ u_{w_{g-1}} \ \cdots \ u_{w_1}]}
\end{equation}
for the domain on which the sigma function is defined. 
We define a lattice in this space by 
\begin{equation}\label{Lambda}
\varLambda=\omega'\mathbb{Z}^g+\omega''\mathbb{Z}^g.   
\end{equation}
For any \(u\in\mathbb{C}^g\), we define \(u'\), \(u''\in\mathbb{R}^g\)
by \, \(u=\omega'u'+\omega''u''\).  
Likewise, for \(\ell\in\varLambda\), 
we write \(\ell=\omega'\ell'+\omega''\ell''\) 
with \(\ell'\), \(\ell''\in\mathbb{Z}^g\). 
In addition, we write 
\({\omega'}^{-1}\tp{(\omega_{w_g},\,\cdots,\,\omega_{w_1})}=
\tp{(\hat{\omega}_1,\,\cdots,\,\hat{\omega}_g)}\),
\({\omega'}^{-1}\omega''=\big[\tau_{ij}\big]\), and define 
(see \cite{mumford_tata2}, p.3.82 or \cite{lewittes_1964}, p43)
\begin{equation}\label{riemann_const}
\delta_j=-\tfrac12\tau_{jj}
+\sum_{i=1}^g\int_{\alpha_i}\bigg(\int_{\infty}^{P}\hat{\omega}_j\bigg)\,
\hat{\omega}_i(P),\ \ \ 
\delta={\omega'}\,\tp{[\delta_1 \ \cdots \ \delta_g]}\in\mathbb{C}^g.
\end{equation}
Then we define the Riemann constant by \([\delta'\ \delta'']\).  
It is well known that \(\delta'\), \(\delta''\in(\frac12\mathbb{Z})^g\) 
for our curve.  
The \([\delta'\ \delta'']\) can be taken independent 
of the values of \(\mu_j\)s with a suitable choice of 
the paths of integrals in (\ref{riemann_const}). 
\par
Using the above notation, we define a linear form \(L(\ ,\ )\) 
on \(\mathbb{C}^g\times\mathbb{C}^g\) by
\begin{equation}\label{linear_form_L}
  L(u,v)=\tp{u}\,(\eta'v'+\eta''v'').
\end{equation}
This is \(\mathbb{C}\)-linear with respect to the first variable, 
and \(\mathbb{R}\)-linear with respect to the second. 
Moreover, we define 
\begin{equation}\label{chi}
  \chi(\ell)=\exp\,\big(2\pi\iu(\tp{\delta'}\ell''-\tp{\delta'}\ell'
  +\frac12\tp{\ell'}\ell'')\big)\in\{1,\,-1\}
\end{equation}
for any \(\ell\in\varLambda\).  
%%%%%%%%%%%%%%%
Let 
\begin{equation}\label{kappa}
\kappa\,:\,\mathbb{C}^g\,\longmapsto\,\mathbb{C}^g/\varLambda
\end{equation}
be the map given by modulo \(\varLambda\), 
and \(\mathrm{Sym}^k\,\mathscr{C}\) be 
the \(k\)-th symmetric product of \(\mathscr{C}\). 
Then we have the Abel-Jacobi mapping
\begin{equation}\label{iota}
\iota\,:\,\mathrm{Sym}^k\,\mathscr{C}
\longrightarrow\ \mathbb{C}^g/\varLambda,
\ \ \ 
(P_1,\,\cdots\!,\,P_k)\longmapsto\ 
\sum_{j=1}^k\bigg( 
\int_{\infty}^{P_j}\omega_{w_g},
\,\cdots,\,
\int_{\infty}^{P_j}\omega_{w_1}
\bigg),
\end{equation}
whose image is denoted by \(\Theta^{[k]}\).  
We denote \(\Theta=\Theta^{[g-1]}\), 
which is called the standard theta divisor 
of the Jacobian variety 
\(\mathbb{C}^g/\varLambda\) of \(\mathscr{C}\).  
For \(k=1\), the map \(\iota\) is an isomorphism from \(\mathscr{C}\) 
to \(\Theta^{[1]}\), by which we frequently identify these two.
\vskip 20pt
%%%%%%%%%%%%%%%%%%%%%%%%%%%%%%%%%%%%%%%%%%%%%%%%%%%%%%%%%%%%%%%%%%%%%%%%%%
\subsection{Weight revisited}\label{ext_wt}%%%%%%%%%%%%%
%%%%%%%%%%%%%%%%%%%%%%%%%%%%%%%%%%%%%%%%%%%%%%%%%%%%%%%%%%%%%%%%%%%%%%%%%%
In this subsection, we assume all \(\mu_j\)s in (\ref{def_eq}) are
complex numbers.  We denote by \(\mathscr{C}_{\mu}\) the curve given
by (\ref{def_eq}) and assume it is non-singular, namely, we assume it
gives a Riemann surface.  Here, we shall extend the notion of the
weight \,\(\wt\) of (\ref{wt_on_C}) as follows.  
Firstly, we define the weight of
\,\(u_j\), a coordinate of \(u\) in (\ref{u_coordinates}), by
\begin{equation}\label{wt_u}
\wt(u_j)=j.
\end{equation}
We explain below that this definition is consistent with the settings
of weight so far.  For a later use in (\ref{alphabetagamma}) for
instance, we prepare another notation
\begin{equation*}
\wt(u)=\{\,\wt(u_{w_j})\,|\,j=1,\,\cdots,\,g\,\}=\mathrm{wgs}(e,q), 
\end{equation*}
too. 
Let \(\varepsilon\) be an arbitrary non-zero complex number 
and \(\mathscr{C}_{\varepsilon\mu}\) be the curve defined by
(\ref{def_eq}) with every \(\mu_j\) being replaced by \(\varepsilon^{-j}\mu_j\). 
Then the map 
\begin{equation*}
\vec{\varepsilon}\,:\,
\mathscr{C}_{\mu}\longrightarrow\mathscr{C}_{\varepsilon\mu}, \ \ \ 
(x,y)\longmapsto (\varepsilon^{-e}x, \varepsilon^{-q}y)
\end{equation*}
is an isomorphism (i.e.\ an isomorphism of compact Riemann surfaces). 
Now we extend the defining domain of \(\mathrm{wt}\) 
to the space \(\mathbb{C}^g\) of (\ref{u_coordinates}) 
as follows. 
We take a fixed reference point \((\{{\mu_j}^{(0)}\},\,u^{(0)})\) 
of \((\mu,\,u)=(\{\mu_j\},\,u)\) of (\ref{def_eq}) and (\ref{u_coordinates}), 
and assume \,\(\mathscr{C}_{\mu^{(0)}}\)\, is also non-singular. 
Let 
\begin{equation}\label{M}
\mathscr{M}=\mathscr{M}(\mu^{(0)},\,u^{(0)})
\end{equation}
be the germ of the analytic functions of \((\mu,\,u)\) at
\((\mu^{(0)},\,u^{(0)})\), where we regard them as complex variables.
Let \(\varPhi(\mu,u)\in\mathscr{M}\).  We denote by
\(\varPhi(\varepsilon\mu,\varepsilon{u})\) the function obtained from
\(\varPhi(\mu,u)\) by replacing all \(\mu_j\) by
\(\varepsilon^{-j}\mu\) and all \(u_i\) by \(\varepsilon^iu_i\) for
\(i=w_g\), \(\cdots\), \(w_1\).  We say that the function obtained is
induced by the mapping \,\(\vec{\varepsilon}\).  If there is a
constant integer \(w\) such that
\begin{equation*}
\varPhi(\varepsilon\mu,\varepsilon{u})=\varepsilon^{w}\,\varPhi(\mu,{u})
\end{equation*}
for any  \(\varepsilon\in\mathbb{C}\), 
then we say the function \(\varPhi(\mu,u)\) is of weight \(w\), and denote this as
\begin{equation*}
\wt(\varPhi(\mu,u))=w.
\end{equation*}
\newpage
\par
Now we explain that the definition of \(\wt\) is actually an extension
of the weight defined at (\ref{wt_on_C}).  The coordinates \(x\) and
\(y\) of \(\mathscr{C}_{\mu}\) are naturally seen as functions of
\(u_g\) of \(u\) in (\ref{u_coordinates}) on certain restricted domain
in \(\mathbb{C}^g\) for any \(\mu_j\in\mathbb{C}\), which should be
denoted by \(\kappa^{-1}\iota(\mathscr{C})\) by using the notation
(\ref{kappa}) and (\ref{iota}).  Then, observing the weight of \(x\)
and \(y\) as power series of \(u_g\), their weights are \(-e\) and
\(-q\), respectively.  The function \(u_g\mapsto x\) above is seen as
the restriction to \(\kappa^{-1}\iota(\mathscr{C})\) of the Abelian
function
\begin{equation}\label{g_sum}
u\longmapsto 
\frac{x_1\cdots x_g}
     {\displaystyle\sum_{1\leq i_1<\cdots<i_{g-1}\leq g}\!\!\!\!\!x_{i_1}\cdots{}x_{i_{g-1}}},
\end{equation}
where \,\(u=\iota((x_1,y_1),\,\cdots,\,(x_g,y_g))\,\bmod{\varLambda}\).
The function (\ref{g_sum}) is easily checked to be of weight \(-e\). 
Under similar observation, the function \(u_g\mapsto y\) is seen of weight \(-q\). 
Summarizing the above, the notion of weight of (\ref{wt_on_C}) is completely compatible 
with the former notion above as well as one on \(\mathscr{M}\).
\par
Take any circuit integrals of an entry of one of the matrices 
in (\ref{per_matrix}) and consider the map \(\vec{\varepsilon}\). 
For example, we choose \(\int_{\alpha_j}\omega_{w_i}\). 
Note that it is an element in \,\(\mathscr{M}\) 
which is a function on \(\mu_j\)s and independent of \(u\). 
While the map \(\vec{\varepsilon}\) changes 
\(\omega_{w_i}\) to \(\varepsilon^{w_i}\omega_{w_i}\) 
as \(\wt(\omega_{w_i})=w_i\), 
the deformation of the path \(\alpha_j\) of the integral 
no longer affects this change of value integral 
because of Cauchy's theorem. 
So that, the mapping \,\(\vec{\varepsilon}\) induces a change of the integral
\begin{equation*}
\int_{\alpha_j}\omega_{w_i} \longmapsto 
\ \varepsilon^{w_i}\!\int_{\alpha_j}\omega_{w_i},
\end{equation*}
for any non-zero \(\varepsilon\). 
Therefore, we have
\begin{equation*}
\wt\Big(\,\int_{\alpha_j}\omega_{w_i}\Big)=w_i.
\end{equation*}
By the argument above, the weight of such a circuit integral does not depend on 
its path of integral as well as the choice of 
the reference points \({\mu_j}^{(0)}\)s and \(u^{(0)}\). 
Because of (\ref{iota}) for \(k=g\) is surjective (Jacobi's theorem) 
and any values \(\int_{\alpha_j}\omega_{w_i}\)s are appeared 
as the coordinates \(u_{w_j}\) the definition of weight (\ref{wt_u}) is justified. 
%%%%%%%%%%%%%%%%%%%%%%%%%%%%%%%%%%%%%%%%%%%%%%%%%%%%%%%%%%%%%%%%%%
\subsection{Construction of the sigma function}%%%%%%%%%%%%%%%%%%
\label{construct_sigma}%%%%%%%%%%%%%%%%%%%%%%%%%%%%%%%%%%%%%%%%%%%
Using the notions defined in the previous subsections, 
we define the sigma function by the following characterization. 
Now we fix the curve \,\(\mathscr{C}=\mathscr{C}_{\mu}^{e,q}\). 
\begin{theorem}\label{char_sigma}%%%%%%%%%%%%
There exists a unique function 
\,\((\mu,u)\mapsto\sigma(u)\,{=}\,\sigma(\mu,u)\)\, 
having the following properties\,{\rm:} 
\begin{oitem}
\item %{\rm (1)} 
\(\sigma(u)\) is an entire function on \,\(\mathbb{C}^g\)\, 
for any fixed \(\mu_j\)s in \(\mathbb{C}\)\,{\rm;} 
\item % {\rm (2)}
Supposing that the \(\{\mu_j\}\) are constants in \(\mathbb{C}\)
and the discriminant \(\varDelta\) is not zero, we have
\begin{equation*}
\sigma(u+\ell)=\chi(\ell)\,\sigma(u)\exp\,L(u+\tfrac12\ell,\ell) \ \ \ 
(u\in\mathbb{C}^g, \ \ell\in\varLambda), 
\end{equation*}
where \(\varLambda\), \(L\), and \(\chi\) are those of {\rm(\ref{Lambda})}, 
{\rm (\ref{linear_form_L})}, and {\rm(\ref{chi})},
respectively\,{\rm;}
\item % {\rm (3)} 
Viewing \(\sigma(u)\) as an element in \,\(\mathscr{M}\)\, of {\rm(\ref{M})}
with an arbitrary reference point
\begin{equation*}
\wt(\sigma(u))=\tfrac1{24}{(e^2-1)(q^2-1)};\,
\end{equation*}
Moreover, \(\sigma(u)\) is expanded as a power series 
around the origin \((0,\cdots,0)\) with coefficients 
in \(\mathbb{Q}[{\mu}]\), and is of homogeneous weight 
\((e^2-1)(q^2-1)/24\)\,\mbox{\rm ;}
\item % {\rm (4)} 
\(\sigma(u)|_{{\mu}=\vec{0}}\) is 
the Schur polynomial  \,\(s(u)\) of {\rm(\ref{schur_pol})}{\rm;}
\item %{\rm (5)} 
\(\sigma(u)=0\) \(\iff\) \(u\in\kappa^{-1}(\Theta)\).
\end{oitem}
\end{theorem}% --------------------------------------------------
\vskip 8pt
\begin{definition}
We call the function \(\sigma(u)\) whose existence is 
guaranteed by {\rm\ref{char_sigma}} 
the {\it sigma function}\, of the curve \(\mathscr{C}\). 
\end{definition}
%% \vskip 8pt
The theorem \ref{char_sigma} was proved in various ways in the literature, 
each version has a slightly different point of view. 
It is convenient to summarise here these versions from our point of view.
We define
\begin{equation}\label{classical_tilde_sigma}
\begin{aligned}
  \tilde{\sigma}(u)&=\tilde{\sigma}(u,\varOmega)
=\Big(\frac{(2\pi)^g}{\mathrm{det}\,{\omega'}}\Big)^{\tfrac12}
  \exp\big(-\tfrac12\,\tp{u}\,\eta'{\omega'}^{-1}u\big)\\
  &\cdot\sum_{n\in\mathbb{Z}^g}
  \exp\big(\tfrac12\,\tp{(n+\delta'')}\,{\omega'}^{-1}\omega''(n+\delta'')
  +\tp{(n+\delta'')}({\omega'}^{-1}u+\delta')\big),
\end{aligned}
\end{equation}
where \(\mathrm{det}\) denotes the determinant.  
Note that the arguments in any exponential are of weight \(0\), 
so that the infinite series part of (\ref{classical_tilde_sigma}) is 
of weight \(0\) as well. 
It is easy to show that this function has property (1) 
(see \ref{heq_tilde_sigma} below), 
and that this function satisfies (2) using (\ref{legendre}) below.  
Frobenius' method shows that the solutions of the equation (2) form 
a one dimensional space (see p.93 of \cite{lang_1982}).  
Although this function is constructed by using \(\varOmega\), 
it is independent of \(\varOmega\).  
Namely, the function \(\tilde{\sigma}(u,\varOmega)\) is
invariant under a modular transform, 
that is the transform of \(\varOmega\) by \(\mathrm{Sp}(2g,\mathbb{Z})\)
which come from changing the choice of \(\alpha_i\)s and \(\beta_i\)s. 
Therefore, it is expressed as a power series of \(u\) 
with coefficients being functions of only the \(\mu_j\)s.  
On the latter part of (3) and (4), we refer the reader to \cite{nakayashiki_2010}. 
In that paper, 
\(\tilde{\sigma}(u)\) times some constant is expressed as a
determinant of infinite size (see also \cite{onishi_2018}). 
See also the paper \cite{nakayashiki_2008} by Nakayashiki, 
in which he proved the sigma function is no other than (\ref{classical_tilde_sigma}) 
times a non-zero constant depending on \(\mu_j\)s with emphasizing the later part of (3). 
That \(\tilde{\sigma}(u)\) has the property (5) 
come from a well-known property of Riemann theta function. 
Using notation we have explained previously, we define 
\begin{equation}\label{classical_sigma}
\hat{\sigma}(u)=\varDelta^{-\frac18}\tilde{\sigma}(u),
\end{equation}
with an appropriate choice of the \(\frac18\)th 
root of the discriminant \(\varDelta\). 
It is known for \(g=1\) and \(2\) that this function exactly 
satisfies all the properties in \ref{char_sigma}, 
namely, we have \(\sigma(u)=\hat{\sigma}(u)\). 
For \(g=1\), it is shown as in \cite{onishi_2010a} by 
a transformation formulae for \(\eta(\tau)\) and 
the theta series described in pp.176--180 of \cite{rademacher_1973},
and for \(g=2\) the paper \cite{grant} by D.~Grant, 
in which the property (4) is shown by using Thomae's formula. 

However, it is not known for \(g\geq3\) if the function (\ref{classical_sigma}) is 
really the sigma function. 
The paper \cite{bl_2008} is the first one to seriously attack this problem. 

In this paper, we show, 
following the idea of \cite{bl_2008}, 
that \(\sigma(u)=\hat{\sigma}(u)\) for the genus \(3\) curves 
i.e. for the \((2,7)\)-curve and the \((3,4)\)-curve.  
This means (\ref{classical_sigma}) satisfies especially (4) up to 
an absolutely numerical multiplicative constant.  
The strategy of the proof is as follows: We construct a system of 
linear partial differential equations (heat equations) 
satisfied by \(\hat{\sigma}(u)\) 
and show that the solution space is of dimension one 
(over the base field) by {\it explicit construction} of a
recursive system for the coefficients of the power series expansion
of any unknown solution and {\it showing the uniqueness} of the
solution of this system.  
Then we see \(\sigma(u)=\hat{\sigma}(u)\) up to 
a non-zero multiplicative absolutely numerical constant.  
This result might be seen as a generalisation 
of {\it Thomae's formula} (\cite{thomae_1870}).
Section \ref{section3} is devoted to these last steps. 
\vskip 3pt
\newpage
%%%%%%%%%%%%%%%%%%%%%%%%%%%%%%%%%%%%%%%%%%%%%%%%%%%%%%%%%%%%%%%%%%%%%%
\section{Theory of heat equations}\label{section2}%%%%%%%
%%%%%%%%%%%%%%%%%%%%%%%%%%%%%%%%%%%%%%%%%%%%%%%%%%%%%%%%%%%%%%%%%%%%%
\subsection{Generalisation of the Frobenius-Stickelberger theory} %%%%
\label{Frob_Stick}%%%%%%%%%%%%%%%%%%%%%%%%%%%%%%%%%%%%%%%%%%%%%%%%%%%%
%%%%%%%%%%%%%%%%%%%%%%%%%%%%%%%%%%%%%%%%%%%%%%%%%%%%%%%%%%%%%%%%%%%%%%
This and the following sections are devoted to explaining the theory of
Buchstaber and Leykin \cite{bl_2008}, on the differentiation of
Abelian functions with respect to their parameters, 
as clearly as we can. 
That generalises the work of Frobenius and Stickelberger
\cite{fs_1882}, discussed above, 
on the elliptic function case of this problem.

For higher genus cases, we do not have a naive generalisation of
(\ref{wp_expansion}) and (\ref{g2_g3}), which are mentioned 
in the Introduction.  
However, we can give a natural generalisation of the
relations (\ref{L_for_genus1}) to the curve \(\mathscr{C}_{\mu}\) as
explained in the next section. 
\par
As we discussed in Subsection \ref{S3.2},
any element \(L\) in \(\vec{L}\) of (\ref{the_L}) 
operates linearly on the space
\(H_{\mathrm{dR}}^1(\mathscr{C}/\mathbb{Q}[\mu])\).  
For complex variables \(\mu_j\)s, 
we let \(L\) operate firstly on the forms with 
{\it variables} \,\(\mu_j\)s\, of representative of basis of 
\(H_{\mathrm{dR}}^1(\mathscr{C}/\mathbb{Q}[\mu])\), \
then we restore \(\mu_j\)s to the original values in \(\mathbb{C}\). 
% ----------------------------------------------------------------------
Taking a derivative \(L\in\vec{L}\), 
we define \(\Gamma^L\in\mathrm{Mat}(2g,\mathbb{Q}[{\mu}])\)  
as the representation matrix of the action of \(L\) by
\begin{equation}\label{gauss_manin_connection}
  L(\vec{\omega})=\vec{\omega}\,\tp{\Gamma^L} \ \ \ 
  \mbox{for} \ \ 
  \vec{\omega}=(\omega_{w_g},\,\cdots,\,\omega_{w_1},\,\eta_{-w_g},\,\cdots,\,\eta_{-w_1})
  \in H_{\mathrm{dR}}^1(\mathscr{C}/\mathbb{Q}[{\mu}]).
\end{equation}
% ------------------------------------------------------------------------
In \cite{bl_2008}, the matrix \(-\!\tp(\Gamma^L)\) defined by 
(\ref{gauss_manin_connection}) is called 
the {\it Gauss-Manin connection}\, for the derivation (vector field) \(L\).

From here to the end of Section \ref{section2}, 
we assume that all the \(\mu_j\)s are complex variables 
which vary as \(\mathscr{C}\) is non-singular. 
We can then use the periods \(\omega_{ij}\)s and \(\eta_{ij}\)s.

By integrating (\ref{gauss_manin_connection}) along each element in the
chosen symplectic basis of \(H_1(\mathscr{C},\,\mathbb{Z})\), 
we get the natural action
\begin{equation}\label{bridge_formula}% ---------------------------------
  L(\varOmega)=\Gamma^L\,\varOmega
\end{equation}
of \(L\) on the space of \(\varOmega\)s for all the choices of 
symplectic basis of \(H_1(\mathscr{C},\,\mathbb{Z})\). 
So, we see also how \(L\) operates on 
the field \(\mathbb{Q}(\{{\omega'}_{ij}\},\,\{{\omega''}_{ij}\})\). 
Of course, since \(\varOmega\) is the period
matrix of a symplectic basis, these elements must satisfy the
constraint
\begin{equation}\label{legendre}
  \tp{\varOmega} J \varOmega = 2 \pi i J, \ \ \ 
  \mbox{where \(J=\bigg[\begin{matrix}  & 1_g\\ -1_g & \end{matrix}\bigg]\)}
\end{equation}
by (\ref{symplectic}).  
This is none other than the generalisation of 
Frobenius-Stickelberger's relation (\ref{L_for_genus1}),  
and is to say that 
\((2\pi{i})^{-\frac12}\varOmega\in \mathrm{Sp}(2g,\mathbb{C})\).  
It follows immediately that
\begin{equation}\label{legendreinv}
  \varOmega^{-1}   =\frac{1}{2\pi i}J^{-1}\,
  \tp{\varOmega}J  =\frac{1}{2\pi i}
  \bigg[
\begin{array}{rr}
  \tp{\eta''} &\ -\tp{\omega''}\\
  -\tp{\eta'} &\  \tp{\omega'}
\end{array}
  \bigg].
\end{equation}
After operating \(L\) on both sides of (\ref{legendre}), using
(\ref{bridge_formula}) and (\ref{legendre}), we see that the matrix
\(\Gamma^L\) satisfies
\begin{equation}\label{Gamma_property}
  \tp\Gamma^LJ + J \Gamma^L = 0,
  \ \ \mbox{i.e.}\ \ 
  \tp{(\Gamma^LJ)}=\Gamma^LJ
\end{equation}
because \(L(\tp{\varOmega})=\tp{\varOmega}\,\tp{\Gamma^L}\), 
which is to say that \,\(\Gamma^L\)\, is in 
the Lie algebra \,\(\mathfrak{sp}(2g,\mathbb{Q}[\mu])\)\,  
of \(\mathrm{Sp}(2g,\mathbb{Q}[\mu])\). 
Thus we may write
\begin{equation}\label{GammaL}
  -\Gamma^LJ=\bigg[\,
\begin{matrix}
     \alpha & \ \beta \\
  \tp{\beta}& \ \gamma
\end{matrix}\,
\bigg], \ \ \ \Gamma^L = \bigg[
\begin{array}{cr} 
  -\beta  &  \ \alpha \\ 
  -\gamma & \ \tp{\beta}
\end{array}
  \bigg]
\end{equation}
with \(\tp{\alpha}=\alpha\) and \(\tp{\gamma}=\gamma\). 
%% \vskip 5pt
\begin{remark}\label{dictionary_notation}
{\rm
(1) \,We use a different notation for \(D(x,y,\lambda)\) 
compared to p.273 of \cite{bl_2008}, 
and for \(\Omega\), \(\Gamma\) and \(\beta\) 
compared to p.274 of loc.\ cit.
Our \(\tp{\vec{\omega}}\) equals \(D(x,y,\lambda)\) by
transposing and changing the sign on the latter half entries. 
The others are naturally modified according to this difference 
and taking transposes.  We will give a detailed comparison of our
notation with theirs in \ref{dictionary} 
at the end of Subsection \ref{sigma_is_hat_sigma}. \\
(2) \,In general, it requires some work to write down 
the given operator \(L\) as a partial differential operator
with respect to the periods \({\omega'}_{ij}\) and
\({\omega''}_{ij}\) similar to the LHS of (\ref{L_for_genus1}).
However, we do not use such an expression in the present paper.
} % End of \rm
\end{remark}
Conversely, starting from a matrix 
\begin{equation*}
  \Gamma
  = \bigg[
    \begin{array}{cr}
      -\beta  & \ \alpha \\ 
      -\gamma & \ \tp{\beta}
    \end{array}
    \bigg]\in\mathfrak{sp}(2g,\mathbb{Q}[{\mu}]),
\end{equation*}
with \(\tp{\alpha}=\alpha\) and \(\tp{\gamma}=\gamma\), we get
uniquely an operator
\(L\in\vec{L}\)\, such that \(\Gamma^L=\Gamma\).
So far, this is a natural generalisation of the situation investigated
by Frobenius-Stickelberger \cite{fs_1882}.
%%%%%%%%%%%%%%%%%%%%%%%%%%%%%%%%%%%%%%%%%%%%%%%%%%%%%%%%%%%%%%%%%%%%%%
\subsection{The primary heat equation}%%%%%%%%%%%%%%%%%%%%%%%%%%%%%%%%%%%
%%%%%%%%%%%%%%%%%%%%%%%%%%%%%%%%%%%%%%%%%%%%%%%%%%%%%%%%%%%%%%%%%%%%%%
\label{S2.4}
In this Section, we review the general heat equations satisfied by
the sigma functions. 
\par
If we want to find second-order linear partial differential
equations (heat equations) satisfied by the sigma function, we should
proceed in as general a way as possible.  Here, note that the equation
(\ref{heq_jacobi_theta}) is satisfied not only by the Jacobi theta
function (\ref{jacobi_theta}) but also by each individual term 
of the sum in (\ref{jacobi_theta}).  
This corresponds a statement in the
proof of Theorem 13 in page 274 of \cite{bl_2008}.  Here we will
review their theory of such equations, correcting a few minor errors,
and apply it explicitly to more general curves than considered in
\cite{bl_2008}.  We shall start from this point of view.
\par
Take an derivative \(L\in\vec{L}\) and assume it is of homogeneous weight, 
say \(\wt(L)=k\) (see (\ref{wt_on_d_dmu})), 
we have the symmetric matrix 
\(-\Gamma^L J=
\bigg[\,
\begin{matrix}
   \alpha & \beta \\
\tp{\beta}& \gamma
\end{matrix}\, \bigg]\)
with 
\begin{equation*}
\alpha=\underset{\substack{i\in\wt(u)\\ j\in\wt(u)}}{[\alpha_{-i,k-j}]}, \ \ 
\beta=\underset{\substack{i\in\wt(u)\\ j\in\wt(u)}}{[\beta_{j,k-i}]}, \ \ 
\gamma=\underset{\substack{i\in\wt(u)\\ j\in\wt(u)}}{[\gamma_{-i,k+j}]}
\end{equation*}
and we also associate with it 
a second-order differential operator \(H^L\), 
given by
\begin{equation}\label{H_attached_to_L}
  \begin{aligned}
H^L&=\frac12[\,\tp{\partial_u}\ \ \tp{u}\,]
    \bigg[
      \begin{array}{rc}
        \alpha     & \ \beta \\
        \tp{\beta} & \ \gamma
      \end{array}
      \bigg]
    \bigg[
      \begin{array}{c}
        \partial_u \\ u
      \end{array}
     \bigg]\\
  &=\sum_{i\in\wt(u)}\sum_{j\in\wt(u)}
  \Big(\tfrac12\,{\alpha}_{-i,\,k-j}\ \frac{\partial^2}{\partial u_i\partial u_j}
  +{\beta}_{-j,\,k-i}\ u_i\frac{\partial}{\partial u_j}
  +\tfrac12\,{\gamma}_{-i,\,k+j}\ u_iu_j\Big)
  +\tfrac12\,{\mathrm{Tr}\,\beta}. 
\end{aligned}
\end{equation}
Here \(\partial_u\) denotes the column vector with \(g\) 
components \(\frac{\partial}{\partial u_i}\), 
and \(u\) the column vector with \(g\) components \(u_i\)s. 
The very last term comes from the commutation relation 
\(\tfrac{\partial}{\partial u_i}u_j=u_j\tfrac{\partial}{\partial u_i}+\delta_{ij}\). 
It is then straightforward to verify the following
\begin{lemma}
If we define a function \(G_0(u,\varOmega)\) {\rm(}this is a Green's function{\rm)} by
\begin{equation*}
G_0(u,\varOmega)
=\Big(\frac{(2\pi)^g}{\mathrm{det}\,\omega'}\Big)^{\frac12}
\exp\big(-\tfrac12\,\tp{u}\,\eta'{\omega'}^{-1}u\big),
\end{equation*}
then the following equation {\rm(}a heat equation{\rm )} holds\,{\rm:}
\begin{equation}\label{fundamentalsoln}
(L - H^L) G_0 =0.
\end{equation}
\end{lemma}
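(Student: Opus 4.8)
The plan is to verify the heat equation $(L-H^L)G_0=0$ by direct computation, organizing the calculation so that the two essential inputs—the action of $L$ on the period matrix $\varOmega$ via $\Gamma^L$, and the symmetry constraints on $\Gamma$—are brought to bear cleanly. The function $G_0$ is a Gaussian in $u$ multiplied by the normalizing prefactor $\big((2\pi)^g/\mathrm{det}\,\omega'\big)^{1/2}$, and $H^L$ is the second-order operator in (\ref{H_attached_to_L}) with symbol $-\Gamma J$. The key observation is that $L$ differentiates $G_0$ only through the $\mu$-dependence of the period matrices $\omega'$ and $\eta'$, whereas $H^L$ differentiates the explicit $u$-dependence; the claim is that these two contributions cancel.

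First I would write $G_0 = C(\omega')\exp\big(-\tfrac12\,\tp{u}\,M u\big)$, where $C(\omega')=\big((2\pi)^g/\mathrm{det}\,\omega'\big)^{1/2}$ and $M=\eta'{\omega'}^{-1}$. The symmetry of $M$ is itself something to record: from the Legendre relation (\ref{legendre}) one extracts $\tp{\omega'}\eta'=\tp{\eta'}\omega'$, which gives $\tp{M}=M$, so that the quadratic form is genuinely symmetric and the operator $H^L$ acts on it consistently. Next I would compute $L\,G_0$. Using (\ref{bridge_formula}) in the form $L\omega'=-\beta\omega'+\alpha\eta'$ and $L\eta'=-\gamma\omega'+\tp{\beta}\eta'$ read off from (\ref{ell_and_Gamma}), I obtain $L(\mathrm{det}\,\omega')$ via the logarithmic-derivative identity $L(\log\mathrm{det}\,\omega')=\mathrm{Tr}\big((L\omega'){\omega'}^{-1}\big)=\mathrm{Tr}(-\beta+\alpha M)$, and I compute $L M$ by the product rule $L M=(L\eta'){\omega'}^{-1}-\eta'{\omega'}^{-1}(L\omega'){\omega'}^{-1}$. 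Substituting the connection formulas and simplifying yields
\begin{equation*}
L M = -\gamma + \tp{\beta}M + M\beta - M\alpha M.
\end{equation*}

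Separately I would apply $H^L$ to $G_0$: differentiating the Gaussian gives $\partial_u G_0=-Mu\,G_0$ and $\partial_u\tp{\partial_u}G_0=(Muu^{\!\top}M - M)G_0$, so that a short matrix manipulation expresses $H^L G_0/G_0$ as a quadratic form in $u$ plus a $u$-independent trace term. Collecting the $u$-quadratic parts of $L G_0$ (which come entirely from $L M$) and of $H^L G_0$, the identity above is exactly what is needed to force the coefficient of each monomial $u_iu_j$ to vanish, using $\tp{\alpha}=\alpha$ and $\tp{\gamma}=\gamma$. The remaining $u$-independent terms—$L(\log C)=-\tfrac12\,\mathrm{Tr}(-\beta+\alpha M)$ on one side and the trace contributions $-\tfrac12\,\mathrm{Tr}(\alpha M)+\tfrac12\,\mathrm{Tr}\,\beta$ from $H^L$ on the other—must also cancel, and they do precisely because of the $\tfrac12\,\mathrm{Tr}\,\beta$ constant built into $H^L$.

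\textbf{The main obstacle} I anticipate is bookkeeping rather than conceptual: one must keep the noncommuting matrices in the correct order throughout (e.g.\ $M\alpha M$ versus $\alpha M^2$), correctly transpose the quadratic form when symmetrizing the mixed term $\tp{\beta}u\tp{\partial_u}$, and ensure the trace terms are matched with the right factors of $\tfrac12$. The symmetry hypotheses $\tp{\alpha}=\alpha$, $\tp{\gamma}=\gamma$ from (\ref{Gamma_property}) together with $\tp{M}=M$ are what make the quadratic-form cancellation work, so I would foreground exactly where each symmetry is invoked. Once the matrix identity for $LM$ is established, the verification is a finite, mechanical match of coefficients, and no genuine analytic difficulty arises.
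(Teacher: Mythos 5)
Your proposal is correct and follows essentially the same route as the paper's own proof: compute $L G_0$ through the Gauss--Manin action $L\omega'=-\beta\omega'+\alpha\eta'$, $L\eta'=-\gamma\omega'+\tp{\beta}\eta'$ on the periods, compute $H^L G_0$ by differentiating the Gaussian, and match the quadratic and trace terms using the Legendre relation and the symmetry of $\eta'{\omega'}^{-1}$. Your introduction of $M=\eta'{\omega'}^{-1}$ and the explicit identity $LM=-\gamma+\tp{\beta}M+M\beta-M\alpha M$ is just a cleaner packaging of the same computation, with the term-by-term cancellations checking out exactly as in the paper.
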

\begin{proof}
We have
{\footnotesize
\begin{equation}\label{eq5.11}
  \begin{aligned}
    L\,G_0 &=-\frac12 \mathrm{Tr}(\omega'^{-1}
    L(\omega'))G_0 -\frac12
    (\tp{u}\,L(\eta'){\omega'}^{-1}u)G_0
    +\frac12 (\tp{u}\,\eta'{\omega'}^{-1}L(\omega'){\omega'}^{-1}u)G_0\\
    &=\frac12\left[\mathrm{Tr}(\omega'^{-1}(\beta \omega' - \alpha
      \eta')) +(\tp{u}\,((\gamma \omega' - \tp{\beta}
      \eta')\omega'^{-1} -\eta'{\omega'}^{-1}(\beta \omega' - \alpha
      \eta')\omega'^{-1})
      \,u)\right]G_0\\
    &=\frac12\left[\mathrm{Tr}(\beta-\alpha \eta' \omega'^{-1}) +
      (\tp{u}\,(\gamma-\tp{\beta}\eta'{\omega'}^{-1}
      -\eta'{\omega'}^{-1}\beta
      +\eta'{\omega'}^{-1}\alpha\eta'{\omega'}^{-1})\,u) \right]G_0
\end{aligned}
\end{equation}
}% END of footnotesize
by using
\begin{equation}\label{ell_and_Gamma}
L(\varOmega)=\Gamma^L \varOmega
=\bigg[
\begin{array}{rr}
-\beta  & \ \alpha   \\ 
-\gamma & \ \tp{\beta}
\end{array}
\bigg]
\bigg[ 
\begin{array}{rr}
\omega' & \ \omega'' \\
\eta'   & \ \eta''
\end{array}
\bigg]
=\bigg[
\begin{array}{cc}
-\beta \omega'+\alpha \eta' & -\beta\omega''+\alpha \eta''  \\
-\gamma \omega' +\tp{\beta}\eta'  & -\gamma \omega'' +\tp{\beta} \eta''
\end{array}
\bigg].
\end{equation}
The result of (\ref{eq5.11}) coincides with that of
{\small %%%
\begin{equation*}
\begin{aligned}
H^{L_0}G_0
&= \frac12 \big[\tp{\partial_u}\ \ \tp{u}\big]
\bigg[\,
\begin{matrix}
   \alpha  & \ \beta \\ 
\tp{\beta} & \ \gamma
\end{matrix}\,
\bigg]
\bigg[
\begin{matrix}
\partial_u \\ u 
\end{matrix}
\bigg]G_0
=\frac12 \big[\tp{\partial_u} \ \  \tp{u}\big]
\bigg[\,
\begin{matrix}
   \alpha  & \ \beta \\
\tp{\beta} & \ \gamma
\end{matrix}\,
\bigg]
\bigg[\,
\begin{matrix}
 -\eta'\omega'^{-1}u  \\ u
\end{matrix}\,
\bigg]G_0 \\
&=\frac12 \mathrm{Tr}(\tp{\beta}
-\alpha \eta' \omega'^{-1})\,G_0
+\frac12
\big[{-}\tp{u}\eta' \omega'^{-1} \ \  u \,\big]
\bigg[
\begin{array}{rr}
    \alpha & \ \beta \\ 
\tp{\beta} & \ \gamma
\end{array}
\bigg]
\bigg[\,
\begin{matrix}
-\eta'\omega'^{-1}u \\ u
\end{matrix}\,
\bigg]G_0.
\end{aligned}
\end{equation*}
}% END of small
Here we have used the generalized Legendre relation (\ref{legendre}) and the
symmetry of \,\(\eta' \omega'^{-1}\).
\end{proof}
%% \vskip 5pt
Now we recall that the different terms in the expansion of the theta
function are periodic translates of one another.  Analogously, to
construct the different terms appearing in the expansion of the sigma
function, we act on \(G_0\) by iterating an element of the Heisenberg
group.  
For a variable
\(z\in\mathbb{Q}(\{{\omega'}_{ij}\},\,\{{\omega''}_{ij}\})\), and a
two \(g\)-component column vectors $p$, $q$ whose components also
belong to \(\mathbb{Q}(\{{\omega'}_{ij}\},\,\{{\omega''}_{ij}\})\), we
introduce 
\begin{equation*}
F(z,p,q)=\exp(z)\exp(\tp{p}u)\exp(\tp{q}\partial_u).
\end{equation*}
We write its inverse operator as
\begin{equation*}
F^{-1}(z,p,q)=\exp(-\tp{q}\partial_u)\exp(-\tp{p}u)\exp(-z).
\end{equation*}
% Then we have
%% \vskip 5pt
\begin{lemma}
Defining \(F(z,p,q)\) for any  \(p\), \(q\) and \(z\),  
and \(H^L\) for \(L\in\vec{L}\) as in {\rm (\ref{H_attached_to_L})}, 
the operator equality 
\begin{equation}\label{conjugacy}
  F^{-1}(z,p,q)(L - H^L)F(z,p,q)=L - H^L
\end{equation}
holds if and only if 
\begin{equation}\label{L_p_q_z}
L
\bigg[\,
\begin{matrix}
q \\ p
\end{matrix}\,
\bigg]
=\Gamma^L
\bigg[\,
\begin{matrix}
q \\ p
\end{matrix}\,
\bigg] \ \ \mbox{and} \ \ 
L(z)=\frac12\,(\,\tp{p}\alpha p - \tp{q}\gamma q\,). 
\end{equation}
Here \,\(\Gamma^L\) is that of \,{\rm(\ref{GammaL})}. 
\end{lemma}% ---------------------------------------------------
%% \vskip 5pt
\begin{proof}
We calculate directly that
{\small %%%
\begin{equation*}
\begin{aligned}
F^{-1}(z,p,q)\,LF(z,p,q)
&=F^{-1}(z,p,q)
\big(\,L(z) +L(\tp{p})\,u+L(\tp{q})
\,\partial_u\,\big)
F(z,p,q)+L \\
&=L+
\big(\,
L(z) +L(\tp{p})(u-q)+L(\tp{q})\,\partial_u\,\big).
\end{aligned}
\end{equation*}
}% END of small
Similarly, we find
\begin{equation*}
\begin{aligned}
F^{-1}(z,p,q)&\,H^LF(z,p,q)
=\frac12
\big[\tp{\partial_u}+\tp{p} \ \ \tp{u}-\tp{q}\big]
\bigg[\,
\begin{matrix}
    \alpha & \beta \\
 \tp{\beta}& \gamma 
\end{matrix}\,
\bigg]
\bigg[\,
\begin{matrix}
\partial_u+p \\ u-q
\end{matrix}\,
\bigg] \\
&=H^L
+\big[\tp{p}\ \ -\tp{q}\big]
\bigg[\,
\begin{matrix}
    \alpha  & \beta \\
\tp{\beta}  & \gamma
\end{matrix}\,
\bigg]
\bigg[\,
\begin{matrix}
\partial_u \\  u
\end{matrix}\,
\bigg]
+\frac12\big[\tp{p}\ \ -\tp{q}\big]
\bigg[\,
\begin{matrix}
    \alpha  & \beta \\
 \tp{\beta} &\gamma
\end{matrix}\,
\bigg]
\bigg[
\begin{matrix}
p \\ -q
\end{matrix}
\bigg].
\end{aligned}
\end{equation*}
Matching coefficients of \(\partial_u\), $u$, and $1$, and transposing and
rearranging, we see that, respectively,
\begin{equation*}
L(q)=[\,-\beta \ \ \alpha\,]
\bigg[\,
\begin{matrix}
q \\ p
\end{matrix}\,
\bigg], \ \ \ 
L(p)=[\,-\gamma \ \ \tp{\beta}\,]
\bigg[\,\begin{matrix}
q \\ p
\end{matrix}\,\bigg], \ \ \ 
L(z)=\frac12\,(\,\tp{p}\alpha p - \tp{q}\gamma q\,)
\end{equation*}
as desired. 
\end{proof}
%% \vskip 5pt
\begin{corollary}% ---------------------------------------------------
The formula {\rm (\ref{L_p_q_z})} holds if
\begin{equation}\label{ellGamma_p_q_z}
\bigg[\,
\begin{matrix}
q \\ p
\end{matrix}\,
\bigg]
=\varOmega
\bigg[
\begin{matrix}
b' \\ b''
\end{matrix}
\bigg], \ \ 
z = z_0 + \tfrac12\tp{p}q, 
\end{equation}
where $b'$ and $b''$ are arbitrary numerical constant vectors, 
and $z_0$ is an irrelevant numerical constant, which we set to zero below. 
\end{corollary}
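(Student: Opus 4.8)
The plan is to verify the two assertions in (\ref{ellGamma_p_q_z}) separately: the formula for the vector follows from the bridge formula (\ref{bridge_formula}) together with the invertibility of \(\varOmega\), and the formula for the scalar \(z\) follows from a direct product-rule computation using the relations supplied by the preceding lemma.

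For the vector, the lemma gives \(L\tp{[q\ \ p]}=\Gamma\tp{[q\ \ p]}\). First I would check that \(\varOmega\tp{[b'\ \ b'']}\), with \(b'\), \(b''\) constant and hence annihilated by \(L\), is a solution: by (\ref{bridge_formula}) one has \(L(\varOmega)=\Gamma\varOmega\), so
\[
L\Big(\varOmega\tp{[b'\ \ b'']}\Big)=L(\varOmega)\tp{[b'\ \ b'']}=\Gamma\varOmega\tp{[b'\ \ b'']}=\Gamma\Big(\varOmega\tp{[b'\ \ b'']}\Big),
\]
as required. Conversely, to see that any solution has this shape I would set \(\tp{[b'\ \ b'']}=\varOmega^{-1}\tp{[q\ \ p]}\) and show it is \(L\)-annihilated. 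Differentiating \(\varOmega^{-1}\varOmega=1_{2g}\) and using \(L(\varOmega)=\Gamma\varOmega\) yields \(L(\varOmega^{-1})=-\varOmega^{-1}\Gamma\), whence
\[
L\Big(\varOmega^{-1}\tp{[q\ \ p]}\Big)=-\varOmega^{-1}\Gamma\tp{[q\ \ p]}+\varOmega^{-1}L\tp{[q\ \ p]}=-\varOmega^{-1}\Gamma\tp{[q\ \ p]}+\varOmega^{-1}\Gamma\tp{[q\ \ p]}=0,
\]
so that \(\tp{[b'\ \ b'']}\) is constant in the present sense (independent of the \(\mu_j\)s, which is exactly the requirement imposed on \(b'\), \(b''\)). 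The invertibility of \(\varOmega\), guaranteed by the Legendre relation (\ref{legendre}), is what legitimises this step.

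For the scalar, the lemma requires \(L(z)=\tfrac12(\tp{p}\alpha p-\tp{q}\gamma q)\), and I claim \(\tfrac12\tp{p}q\) is a particular solution. Substituting \(L(q)=-\beta q+\alpha p\) and \(L(p)=-\gamma q+\tp{\beta}p\) from the lemma, and using the symmetries \(\tp{\alpha}=\alpha\) and \(\tp{\gamma}=\gamma\), the product rule gives
\[
L\big(\tfrac12\tp{p}q\big)=\tfrac12\big(\tp{(Lp)}q+\tp{p}(Lq)\big)=\tfrac12\big((-\tp{q}\gamma+\tp{p}\beta)q+\tp{p}(-\beta q+\alpha p)\big)=\tfrac12(\tp{p}\alpha p-\tp{q}\gamma q),
\]
the cross terms \(\pm\tp{p}\beta q\) cancelling. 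Hence \(z-\tfrac12\tp{p}q\) is killed by \(L\), i.e.\ equals a constant \(z_0\); this is the asserted form, and the freedom to put \(z_0=0\) is precisely the statement that it is irrelevant.

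These computations are essentially formal. The only points needing care are the bookkeeping of transposes together with the symmetry of \(\alpha\) and \(\gamma\) in the \(z\)-computation, and the derivation of \(L(\varOmega^{-1})=-\varOmega^{-1}\Gamma\) that powers the converse direction for the vector. I do not expect a genuine obstacle: everything reduces to (\ref{bridge_formula}), the relations furnished by the previous lemma, and the invertibility of \(\varOmega\).
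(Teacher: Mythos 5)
Your proposal is correct and follows essentially the same route as the paper: the paper's own proof simply observes that $L\big(\varOmega^{-1}\tp{[q\ \ p]}\big)=\vec{0}$ and $L(z-\tfrac12\tp{q}p)=0$, and concludes both quantities are constants. You have merely filled in the computations the paper leaves implicit, namely $L(\varOmega^{-1})=-\varOmega^{-1}\Gamma$ and the cancellation of the cross terms $\pm\tp{p}\beta q$ using $\tp{\gamma}=\gamma$.
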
% -----------------------------------------------------
%% \vskip 5pt
\begin{proof}
We have
\(L\Big(
\bigg[
\begin{matrix}
q \\ p    
\end{matrix}
\bigg]\Big)
=L(\Omega)
\bigg[
\begin{matrix}
b' \\ b''    
\end{matrix}
\bigg]
=\Gamma^L\Omega
\bigg[
\begin{matrix}
b' \\ b''    
\end{matrix}
\bigg]
=\Gamma^L
\bigg[
\begin{matrix}
p \\ q    
\end{matrix}
\bigg]
\), which is the former relation in (\ref{L_p_q_z}). 
The latter one is checked easily. 
\end{proof}
%% \vskip 5pt
Denoting the constant vector \(\tp{[b'\ b'']}\) simply by \(b\), 
we denote \(p\), \(q\), and \(z\) with \(z_0=0\) in (\ref{ellGamma_p_q_z}) 
by \(p(b)\), \(q(b)\), and \(z(b)\), respectively. We define
\begin{equation*}
G(b,u,\varOmega)=F\big(z(b),p(b),q(b)\big)\,G_0(u,\varOmega). 
\end{equation*}
Using the Legendre relation (\ref{legendreinv}), we note
that \(\tp{p}\omega'-\tp{q}\eta'= 2\pi i \tp{b''}\), 
and hence we obtain
\begin{equation*}
\begin{aligned}
G&(b,u,\varOmega)
 = F\big(z(b),p(b),q(b)\big)\,G_0(u,\varOmega)\\
&=\exp(\tfrac12 \tp{p}q)\,\exp(\tp{p}u)\,\exp(-\tp{q}\eta'\omega'^{-1}u)
  \exp(-\tfrac12 \tp{q}\eta' \omega'^{-1}q)\,G_0\\
&= \exp(-\tfrac12 (2\pi i \tp{b''} \omega'^{-1}q) \exp(2\pi i \tp{b}'' 
\omega'^{-1}u)\,G_0\\
&= \exp(-\tfrac12 (2\pi i \tp{b''} \omega'^{-1}(\omega' b' +\omega''b'') 
\exp(2\pi i \tp{b}'' \omega'^{-1}u)\,G_0 \\
&=\bigg(\frac{(2\pi)^g}{\mathrm{det}(\omega')}\bigg)^{\frac12}\,
\exp\big(-\tfrac12 \tp{u}\eta' \omega'^{-1} u\big)
\,\exp\Big(\,2\pi i\big(\,\tfrac12 \tp{b''} \omega'^{-1} \omega'' b''+ \tp{b''}(\omega'^{-1}u+b')\,\big)\Big).
\end{aligned}
\end{equation*}
Now, the following theorem, which is the foundation of BL theory, 
is obvious from (\ref{fundamentalsoln}) and (\ref{conjugacy}). 
%% \vskip 5pt
\begin{theorem}% ----------------------------------------------------------
\label{primary_heq}
 {\rm (The primary heat equation)} \ 
For the function \(G(b,u,\varOmega)\) above, one has
\begin{equation}\label{modified_heq}
(L-H^L)\,G(b,u,\varOmega)=0
\end{equation}
for any \(L\in\vec{L}\). 
% satisfying {\rm(\ref{Gamma_property})}. 
\end{theorem}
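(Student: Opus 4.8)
The plan is to read the statement off as an immediate consequence of the two structural facts already established: the heat equation (\ref{fundamentalsoln}) satisfied by the Green's function $G_0$, and the conjugation invariance (\ref{conjugacy}) of the operator $L-H^L$ under the Heisenberg element $F$. Since by definition $G(b,u,\varOmega)=F\bigl(z(b),p(b),q(b)\bigr)\,G_0(u,\varOmega)$, the entire content is to commute $L-H^L$ past $F$ and then annihilate $G_0$.

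First I would rewrite (\ref{conjugacy}) in the equivalent operator form $(L-H^L)\,F=F\,(L-H^L)$, which is valid exactly when the hypotheses of the preceding lemma hold for the triple $(z,p,q)$. So the one thing that genuinely requires checking is that the specific data $p(b),q(b),z(b)$ fixed in (\ref{ellGamma_p_q_z}) do satisfy those hypotheses, namely
\begin{equation*}
L\begin{bmatrix}q\\ p\end{bmatrix}=\Gamma\begin{bmatrix}q\\ p\end{bmatrix},\qquad L(z)=\tfrac12\bigl(\tp{p}\alpha p-\tp{q}\gamma q\bigr).
\end{equation*}
The first identity I would obtain directly from the bridge formula (\ref{bridge_formula}): writing $\begin{bmatrix}q\\ p\end{bmatrix}=\varOmega\begin{bmatrix}b'\\ b''\end{bmatrix}$ with $b',b''$ constant, one has $L\begin{bmatrix}q\\ p\end{bmatrix}=L(\varOmega)\begin{bmatrix}b'\\ b''\end{bmatrix}=\Gamma\varOmega\begin{bmatrix}b'\\ b''\end{bmatrix}=\Gamma\begin{bmatrix}q\\ p\end{bmatrix}$. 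The second identity I would verify by a short transpose computation: substituting $L(q)=-\beta q+\alpha p$ and $L(p)=-\gamma q+\tp{\beta}p$ into $L(z)=\tfrac12 L(\tp{p}q)=\tfrac12\bigl(\tp{(Lp)}q+\tp{p}(Lq)\bigr)$, the two $\tp{p}\beta q$ terms cancel and the symmetry $\tp{\alpha}=\alpha$, $\tp{\gamma}=\gamma$ leaves exactly $\tfrac12\bigl(\tp{p}\alpha p-\tp{q}\gamma q\bigr)$.

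With both hypotheses confirmed, (\ref{conjugacy}) applies, and the conclusion follows by the chain
\begin{equation*}
(L-H^L)\,G=(L-H^L)\,F\,G_0=F\,(L-H^L)\,G_0=F\cdot 0=0,
\end{equation*}
the last equality being (\ref{fundamentalsoln}). The main obstacle, such as it is, is not conceptual but notational: the only place an error can creep in is the sign and transpose bookkeeping in the $L(z)$ check, and in making sure one invokes the ``if'' direction of the equivalence behind (\ref{conjugacy}) rather than merely the ``only if'' direction recorded in the corollary. Everything else is forced by the definitions.
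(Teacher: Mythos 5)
Your proposal is correct and follows essentially the same route as the paper, which declares the theorem ``obvious from (\ref{fundamentalsoln}) and (\ref{conjugacy})'': conjugate $L-H^L$ past $F$ and apply the Green's-function equation. Your explicit check that $p(b),q(b),z(b)$ satisfy the hypotheses of the conjugacy lemma (via the bridge formula and the short transpose computation for $L(z)$) simply fills in the step the paper leaves implicit, and it is carried out correctly.
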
% --------------------------------------------------------
%%%%%%%%%%%%%%%%%%%%%%%%%%%%%%%%%%%%%%%%%%%%%%%%%%%%%%%%%%%%%%%%%%%%%%%
\subsection{The algebraic heat operators}%%%%%%%%%%%%%%%%%%%%%%%%%%%%%%
\label{alg_heat_eq} %%%%%%%%%%%%%%%%%%%%%%%%%%%%%%%%%%%%%%%%%%%%%%%%%%%
%%%%%%%%%%%%%%%%%%%%%%%%%%%%%%%%%%%%%%%%%%%%%%%%%%%%%%%%%%%%%%%%%%%%%%%
For the coordinates of the space in which the sigma function is defined, 
we do not use \((u_1,\cdots,u_g)\) for subscripts of the variable \(u\), 
but denote instead, as in (\ref{u_coordinates})
\begin{equation*}
u=(u_{w_g},\cdots,u_{w_1}).   
\end{equation*}
That is, the components of \(u\) are labelled by their weights, 
which are the Weierstrass gaps.  
\(\Gamma^L\in\mathfrak{sp} (2g,\mathbb{Q}[\mu])\).  
As a corollary to Theorem \ref{primary_heq}, we have
%% \vskip 5pt
\begin{corollary}% ----------------------------------------------
\label{heat_op_pre_classical_sigma} \ 
Let 
\begin{equation}
\rho(u)=\sum_{b}c_b\,G(b,u,\varOmega),
\end{equation}
where \(b\) runs through the elements of any set 
\(\subset\) \(\mathbb{C}^{2g}\) and \(c_b\in\mathbb{C}\) are constants 
such that the sum converges absolutely. 
Then we have, for any \(L\in\vec{L}\), that
\begin{equation*}
(L-H^L)\,\rho(u,\varOmega)=0.
\end{equation*}
% in \(\bigoplus_j\mathbb{Q}[\mu]\frac{\partial}{\partial \mu_j}\). 
% satisfying {\rm(\ref{Gamma_property})}. 
\end{corollary}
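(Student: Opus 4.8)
The plan is to obtain the statement as an immediate consequence of the primary heat equation (Theorem \ref{primary_heq}) together with the linearity of the operator \(L-H^L\); the only substantive point is the analytic justification for applying the operator to the series term by term.

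First I would record that \(L-H^L\) is a linear differential operator. Indeed, \(L\) is a first order operator in the parameters \(\mu_j\), acting on \(G(b,u,\varOmega)\) through the dependence of \(\varOmega\) on \(\mu\) via the bridge formula (\ref{bridge_formula}), while \(H^L\) as written in (\ref{H_attached_to_L}) is a second order operator in the variables \(u=(u_{w_g},\cdots,u_{w_1})\) with polynomial coefficients. By Theorem \ref{primary_heq}, each individual summand is annihilated, \((L-H^L)\,G(b,u,\varOmega)=0\), as in (\ref{modified_heq}). Granting that the operator may be moved inside the summation, one then has
\begin{equation*}
(L-H^L)\,\rho(u,\varOmega)
=\sum_{b}(L-H^L)\,G(b,u,\varOmega)
=\sum_{b}0=0,
\end{equation*}
which is the assertion.

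The main, and essentially the only, obstacle is the interchange of the differential operator \(L-H^L\) with the infinite sum. For this I would invoke the standard theorem that a linear differential operator commutes with an infinite series of holomorphic functions provided each of the relevant derived series converges locally uniformly. Here the key structural feature is that every summand \(G(b,u,\varOmega)\) is, up to a constant prefactor, a Gaussian in \(u\) times an exponential that is quadratic in the index \(b\); consequently each derivative occurring in \(L-H^L\) (the entries \(\frac{\partial^2}{\partial u_i\partial u_j}\), \(u_i\frac{\partial}{\partial u_j}\), multiplication by \(u_iu_j\), and, through the chain rule, the derivatives in \(\mu\) coming from \(L\)) multiplies the term by a factor that is at most polynomial in \(u\) and in \(b\). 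Such polynomial factors are absorbed by the Gaussian decay that already guarantees the assumed absolute convergence, so on any compact subset of \(u\)-space (with \(\mu\) kept away from the zero locus of \(\varDelta\)) the differentiated series converges uniformly. Weierstrass' theorem on term-by-term differentiation then legitimises the interchange above and completes the proof. I would note that the hypothesis is to be understood in this locally uniform sense: for the Gaussian-type terms \(G(b,u,\varOmega)\) absolute convergence at one point already propagates to the required local uniform convergence of every derived series.
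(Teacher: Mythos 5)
Your proposal is correct and takes essentially the same route as the paper: the paper's proof is the one-line observation that \(L\) and \(H^L\) are independent of \(b\), so each summand is killed by Theorem \ref{primary_heq} and the sum follows by linearity. The only difference is that you also supply the justification for interchanging the operator with the infinite sum (locally uniform convergence of the derived series via Gaussian decay), a point the paper leaves implicit in its hypothesis of absolute convergence.
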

%----------------------------------------------------------
%% \vskip 5pt
\begin{proof}
Since both of \(L\) and \(H^L\) are independent of \(b\),
each term of \(\rho(u)\) satisfies (\ref{modified_heq}). 
\end{proof}
%% \vskip 5pt
\begin{remark}\label{heq_tilde_sigma} %---------------------------------
{\rm 
Assume that \(\mathscr{C}_{{\mu}}\) is non-singular.  
Let \(\varOmega\) be the usual period matrix defined by 
{\rm (\ref{per_matrix})}, and \(\delta\) be its Riemann constant.
The function defined at {\rm (\ref{classical_tilde_sigma})} is
written as
\begin{equation}\label{pre_classical_sigma}
\tilde{\sigma}(u)=\tilde{\sigma}(u,\varOmega)
=\sum_{n\in\mathbb{Z}^g}G( \tp{[ \delta'\ n+\delta'']},u,\varOmega).
\end{equation}
Since the imaginary part of \(\omega'^{-1}\omega''\) is positive
definite, this series converges absolutely.  This is a special case of
\(\rho(u)\) of {\rm(\ref{heat_op_pre_classical_sigma})}.  
}% End of \rm
\end{remark}
Because both \(L\) and \(H^L\)
are independent of \(b\), there are infinitely many linearly 
independent entire functions \(\rho(u)\) on \(\mathbb{C}^g\)
satisfying \((L-H^L)\rho(u)=0\).  
Moreover, since, for a fixed \(b\), 
the function \(G(b,u,\varOmega)\) is independent of \(L\), 
we see that, by switching the choice of \(L\),
there are infinitely many linearly independent operators of 
the form \((L-H^L)\) which satisfy \((L-H^L)\,\rho(b,u,\varOmega)=0\) 
for some fixed \(\rho(u)\). 
\par
However, because our aim is to find a method to calculate
the power series expansion of the sigma function, 
we need a more detailed discussion.  
For our purpose, we require
\begin{Aitem}
\item % [\rm{(A1)}] 
to show that for any (or any element of a good generators) \(L\in\vec{L}\) 
find a good mapping to associate \(L\) to 
a (quadratic linear differential) operator 
which annihilate the sigma function (\ref{classical_sigma}) 
as an element in the ring \(\mathbb{Q}[{\mu}][[u]]\),
and 
\item % [\rm{(A2)}] 
to show that the sigma function and its absolute constant multiples 
are exactly the functions in \(\mathbb{Q}[{\mu}][[u]]\) 
killed by the operators obtained in (A1) 
\end{Aitem}
Since \(L\) is a derivation with respect to the \(\mu_j\)s 
but \(H^L\) is a differential operator with respect to the \(u_{w_j}\)s, 
we see that, for some function \(\varXi\)
depending only on the \(\mu_j\)s,
\begin{equation*}
L\,\varXi\tilde{\sigma}(u)
=(L\varXi)\,\tilde{\sigma}(u)+\varXi(L\tilde{\sigma}(u)),\ \ \ \ 
H^L\,\varXi\tilde{\sigma}(u)
=\varXi(H^L\tilde{\sigma}(u)).
\end{equation*}
Therefore, \(\varXi\tilde{\sigma}(u)\) satisfies
\begin{equation}\label{pre_alg_heq}
(L-H^L)(\varXi\tilde{\sigma}(u))
=\tfrac{L\varXi}{\varXi}\,\varXi\,\tilde{\sigma}(u)
=(L\log\varXi)\varXi\,\tilde{\sigma}(u).
\end{equation}
If \(\varXi\,\tilde{\sigma}(u)\) is the correct sigma function, 
the left hand side of the above is in \(\mathbb{Q}[\mu][[u]]\), 
So, in order to get the function \(\hat{\sigma}(u)\) 
of (\ref{classical_sigma}), 
\(L\log\varDelta\in\mathbb{Q}[\mu]\). 
We shall show that this indeed holds for any \(L\in\vec{L}\) 
in the next section. 
\newpage
%%%%%%%%%%%%%%%%%%%%%%%%%%%%%%%%%%%%%%%%%%%%%%%%%%%%%%%%%%%%%%%%%%%%%%%%
\subsection{The matrix \texorpdfstring{\(V\)}{Lg}}%%%%%%%%%%%%%%%%%%%%%%
\label{Section Matrix_V}%%%%%%%%%%%%%%%%%%%%%%%%%%%%%%%%%%%%%%%%%%%%%%%%
%%%%%%%%%%%%%%%%%%%%%%%%%%%%%%%%%%%%%%%%%%%%%%%%%%%%%%%%%%%%%%%%%%%%%%%%
Throughout this section, we suppose that all the \(\mu_j\)'s, \(x\),
and \(y\) are variables or indeterminates.  
In view of the approach in \cite{bl_2008},
we require that \(L\log\varDelta\) belongs to \(\mathbb{Q}[{\mu}]\)
because of conditions (A1), (A2) and equation (\ref{pre_alg_heq}).
We also explain suitable choices for \(\Gamma\) for which \(L\)
satisfies the condition.  
\par
Now we explain another method, known in singularity theory, 
to calculate the {\it discriminant} \ \(\varDelta\) \ 
and a basis of the space of the vector fields tangent to 
the variety defined by \(\varDelta=0\). 
\par
Let \((x,y)\) and \((z,w)\) are different generic points of \(\mathscr{C}\). 
Following p.112 of \cite{bl_2005}, we define a function 
\(\ph\big((x,y),(z,w)\big)\) (pre-hessian), which is defined by
\begin{equation*}
\ph\big((x,y),(z,w)\big)
=\frac12\left|\,
\begin{matrix}
\dfrac{f_1(x,y)-f_1(z,w)}{x-z} & \dfrac{f_2(x,y)-f_2(z,w)}{x-z} \\[8pt]
\dfrac{f_1(z,y)-f_1(x,w)}{y-w} & \dfrac{f_2(z,y)-f_2(x,w)}{y-w}
\end{matrix}\,
\right|\in\mathbb{Z}[\mu,\,x,\,y,\,z,\,w].
\end{equation*}
For any \(F\in\mathbb{Q}[\mu][x,y]\), we define
\begin{equation}\label{hessian}
\mathrm{Hess}\,F=\Bigg|\,
 \begin{matrix}
   \frac{\partial^2}{\partial x^2}F         &
   \frac{\partial^2}{\partial x\partial y}F \\[4pt]
   \frac{\partial^2}{\partial y\partial x}F &
   \frac{\partial^2}{\partial y^2}F 
 \end{matrix}\, 
 \Bigg|.
\end{equation}
%% \vskip 5pt
\begin{lemma}\label{properties_H}
{\rm (Buchstaber-Leykin \cite{bl_2005}, p.64)} \ 
Let \,\(I\)\, be the ideal in \(\mathbb{Q}[{\mu}][x,y,z,w]\) 
generated by \(f_1(x,y)\), \(f_2(x,y)\), \(f_1(z,w)\), and \(f_2(z,w)\). 
The determinant \,\(\ph\big((z,w),(x,y)\big)\) % \,F\big((x,y),(z,w)\big)\) 
has the following properties. \\
{\rm(1)} \(\ph\big((x,y),(x,y)\big)=\mathrm{Hess}\,f(x,y)\). \\
{\rm(2)} \(\ph\big((x,y),(z,w)\big)=\ph\big((z,w),(x,y)\big)\). \\
{\rm(3)} We have 
\begin{equation*}
\ph\big((z,w),(x,y)\big)\,F\big((x,y),(z,w)\big)
\equiv \ph\big((x,y),(z,w)\big)\,F\big((z,w),(x,y)\big)\,\bmod{I}
\end{equation*}
for any \(F\big((x,y),(z,w)\big)\in\mathbb{Q}[{\mu}][x,y,z,w]\).
\end{lemma}
%% \vskip 5pt
\begin{proof}
(1) Taking the limit \(z\to x\) after subtracting the second row times
\(\frac{y-w}{x-z}\) from the first row in 
\(\ph\big((z,w),(x,y)\big)\,F\big((x,y),(z,w)\big)\), 
we have
\begin{equation*}
  \frac12
  \left|
\begin{array}{cc}
  f_{11}(x,y)+f_{11}(x,w) & f_{21}(x,y)+f_{21}(x,w) \\[3pt]
  \dfrac{f_1(x,y)-f_1(x,w)}{y-w} & \dfrac{f_2(x,y)-f_2(x,w)}{y-w} 
\end{array}
  \right|,
\end{equation*}
where \(f_{11}(x,y)=\frac{\partial^2}{\partial x^2}(x,y)\), etc. 
Then, by taking limit \(y\to w\) we get \(\mathrm{Hess}\,f(x,y)\). \\
(2) is trivial. \\
(3) By expanding the matrix, we see that the numerator
%{\footnotesize %%%
\begin{equation}\label{symmetry_of_H-det}
\begin{aligned}
&\ \ \ \ \big(
  f_1(x,y)f_2(z,y)
 -f_1(x,y)f_2(x,w)
 -f_1(z,w)f_2(z,y)
 +f_1(z,w)f_2(x,w)
 \big)\\
&\hskip 5pt
-\big(
  f_1(z,y)f_2(x,y)
 -f_1(z,y)f_2(z,w)
 -f_1(x,w)f_2(x,y)
 +f_1(x,w)f_2(z,w)
 \big)\\
&=\big(f_1(x,y)f_2(z,y)-f_1(z,y)f_2(x,y)\big)
 -\big(f_1(x,y)f_2(x,w)-f_1(z,y)f_2(z,w)\big)\\ 
&\hskip 5pt
 -\big(f_1(z,w)f_2(z,y)-f_1(x,w)f_2(x,y)\big)
 +\big(f_1(z,w)f_2(x,w)-f_1(x,w)f_2(z,w)\big)\\
&=\big(f_1(x,y)f_2(z,y)-f_1(x,w)f_2(z,w)\big)
 -\big(f_1(x,y)f_2(x,w)-f_1(x,w)f_2(x,y)\big)\\ 
&\hskip 5pt 
 -\big(f_1(z,w)f_2(z,y)-f_1(z,y)f_2(z,w)\big)
 +\big(f_1(z,w)f_2(x,w)-f_1(z,y)f_2(x,y)\big)
\end{aligned}
\end{equation}
%}% END of footnotesize
is divisible by \((z-x)(w-y)\),  because 
the second expression is clearly divisible by \((z-x)\), 
while the third expression is divisible by \((w-y)\). 
Hence, \(\ph\big((x,y),(x,y)\big)\in\mathbb{Q}[\mu][x,y]\). 
Moreover, the second expansion is equal to
\begin{align*}
   &\big(f_1(x,y)f_2(z,y)-f_1(x,y)f_2(x,y)+f_1(x,y)f_2(x,y)-f_1(z,y)f_2(x,y)\big)\\
-\,&\big(f_1(x,y)f_2(x,w)-f_1(x,y)f_2(z,w)+f_1(x,y)f_2(z,w)-f_1(z,y)f_2(z,w)\big)\\ 
-\,&\big(f_1(z,w)f_2(z,y)-f_1(z,w)f_2(x,y)+f_1(z,w)f_2(x,y)-f_1(x,w)f_2(x,y)\big)\\
+\,&\big(f_1(z,w)f_2(x,w)-f_1(z,w)f_2(z,w)+f_1(z,w)f_2(z,w)-f_1(x,w)f_2(z,w)\big)\\
=\,&f_1(x,y)\big(f_2(z,y)-f_2(x,y)\big)+\big(f_1(x,y)-f_1(z,y)\big)f_2(x,y)\\
-\,&f_1(x,y)\big(f_2(x,w)-f_2(z,w)\big)-\big(f_1(x,y)-f_1(z,y)\big)f_2(z,w)\\ 
-\,&f_1(z,w)\big(f_2(z,y)-f_2(x,y)\big)-\big(f_1(z,w)-f_1(x,w)\big)f_2(x,y)\\
+\,&f_1(z,w)\big(f_2(x,w)-f_2(z,w)\big)+\big(f_1(z,w)-f_1(x,w)\big)f_2(z,w),
\end{align*}
which implies that \,\(\ph\big((x,y),(z,w)\big)(w-y)\)\, already belongs to \(I\). 
A similar calculation shows that \,\(\ph\big((x,y),(z,w)\big)(z-x)\in{I}\). 
% It is not difficult to check that the entries \(H_{ij}\) belong to 
% \(\mathbb{Z}[\mu][x,y]\) by looking at the terms in \(f(x,y)\) in detail. 
For \(F\big((z,w),(x,y)\big)=x^ay^b\), by using (2), we see
\begin{equation*}
  \begin{aligned}
    \ph&\big((z,w),(x,y)\big)\,x^ay^b
    -\ph\big((x,y),(z,w)\big)\,z^aw^b\\
   & =\ph\big((z,w),(x,y)\big)\,(x^ay^b-z^aw^b)\\
   & =\ph\big((z,w),(x,y)\big)\,(x^ay^b-x^aw^b+x^aw^b-z^aw^b)\\
   & =\ph\big((z,w),(x,y)\big)\,\big(x^a(y^b-w^b)+(x^a-z^a)w^b\big)
    \in I.
  \end{aligned}
\end{equation*}
Hence, (3) has been proved. 
\end{proof}
\vskip 5pt
Below, we will use, instead of \(T\), the {\em symmetric}
\(2g\times 2g\) matrix 
\begin{equation*}
  V=\underset{\substack{a\in\wt(M)\\b\in\wt(\mu)}}{[\,V_{a,b}\,]}
  =\underset{\substack{1\leq i\leq 2g\\1\leq j\leq 2g}}
  {[\,V_{v_i,\,eq-v_{\kern0.5pt 2g-j+1}}\,]}
  \in\mathrm{Sym}(2g,\,\mathbb{Q}[\mu])
\end{equation*}
where all the indices run in increasing order, 
defined by the equation
\begin{equation}\label{V}
  \tp{(\rev{M}}(x,y))\,\cdot{V}\cdot\,\rev{M}(z,w)=f(x,y)\,
  \ph\big((z,w),(x,y)\big)\,F\big((x,y),(z,w)\big)
\end{equation}
in the ring
\begin{equation*}
\mathbb{Q}[\mu][x,y,z,w]\,\big/\,I. 
\end{equation*}
We note that the weight of any entry of \(V\) is given by
\begin{equation}\label{V_1j}
\wt(V_{a,\,b})=-(a+b). 
\end{equation}
We define
\begin{equation*}
H=\underset{\substack{a\in\wt(M)\\b\in\wt(\mu)}}{[\,H_{a-q+2,\,b-q-2}\,]}
 =\underset{\substack{1\leq i\leq 2g\\1\leq j\leq 2g}}{[\,H_{v_i-q+2,\,eq-v_{\kern0.5pt 2g-j+1}-q-2}\,]}
\end{equation*}
as the matrix given by 
\begin{equation*}
\ph\big((x,y),(z,w)\big)=\tp{(\rev{M}}(x,y))\,H\,\rev{M}(z,w).
\end{equation*}
We see \(H\in\mathrm{Mat}(2g,\,\mathbb{Q}[\mu])\) by consideration of 
\ref{symmetry_of_H-det} in the proof of \ref{properties_H}. 
Moreover, as for \(V\), the weight of any entry of \(H\) is given by
\begin{equation*}
\wt(H_{a-q+2,\,b-q-2})=-(a+b-2q). 
\end{equation*}
\begin{example}% ----------------------------------------------------------
{\rm In the case \((e,q)=(2,2g+1)\), 
for \(a=v_i=2i-2\) and \(b=eq-v_{2g-j+1}=4g+2-2(2g-j+1)+2=2j+2\) 
with \(1\leq i\leq g\) and \(1\leq j\leq g\), 
we see that
\begin{equation*}
\begin{aligned}
\wt(V_{a,b})&=\wt(V_{2i-2,2j+2})=-2(i+j), \\
\wt(H_{a-q+2,\,b-q-2})&
\,(=a+b-2q)=\wt(H_{v_i-(2g+1)+2,\,v_{2g-j+1}-(2g+1)-2})
=2(i-j)-(2g+1)
\end{aligned}
\end{equation*}
} % \rm
\end{example}
%% \vskip 5pt
\begin{lemma}% %%%%%%%%%%%%%%%%%%%%%%%%%%%%%%%%%%%%%%%%%%%%%%%%%
The matrix \(H\) is of the form 
\begin{equation}\label{H_ij}
H=
\left[\,
  \begin{matrix}
       &         &       & -eq \\[-3pt]
       &         &   -eq  & *   \\[-3pt]
       & \iddots & \vdots & \vdots \\[-3pt]
  -eq  & \cdots  &   *    & *
  \end{matrix}\,
    \right]. 
\end{equation}
If \(e=2\), then we have explicitly 
\begin{equation}\label{hyperell_H_ij}
H=
\mbox{\footnotesize\(\left[\,
\begin{matrix}
      &      &              &         &               &               &  -2q      \\
      &      &              &         &               &  -2q          &  0         \\
      &      &              &         &     -2q       &   0           & -2(q-2)\mu_4   \\
      &      &              & \iddots &  \vdots       & \vdots        &   \vdots       \\ 
      &      & -2q          & \cdots  & -12\mu_{2(q-6)} & -10\mu_{2(q-5)} & -8\mu_{2(q-4)}   \\
      & -2q  &  0           & \cdots  & -10\mu_{2(q-5)} & -8\mu_{2(q-4)}  & -6\mu_{2(q-3)}   \\
 -2q  &  0   & -2(q-2)\mu_4 & \cdots  & -8\mu_{2(q-4)}  & -6\mu_{2(q-3)}  & -4\mu_{2(q-2)}
\end{matrix}\,
\right]\)}.  
\end{equation}
\end{lemma}
%% \vskip 5pt
\begin{proof}
Setting all the \(\mu_j\) to be \(0\), we have
\begin{equation*}
    \frac12\,\left|\,
    \begin{matrix}
      \frac{q(-x^{q-1}+z^{q-1})}{x-z} & \frac{e(y^{e-1}-w^{e-1})}{x-z} \\[10pt]
      \frac{q(x^{q-1}-z^{q-1})}{y-w} & \frac{e(y^{e-1}-w^{e-1})}{y-w} 
    \end{matrix}\,
    \right|
=-eq(x^{q-2}+x^{q-3}z+\cdots+z^{q-2})(y^{e-2}+y^{e-3}w+\cdots+w^{e-2}).
\end{equation*}
It follows that the counter-diagonal entries of \(H\) are \(-eq\). 
From the definitions, the weight of \(H\) is \(-2(eq-e-q)\) and 
\(\mathrm{wt}(M_{2g}(x,y))=-2g-1+w_g=-2(eq-q-e)\). 
Therefore the entries below the counter-diagonal must be \(0\). 
For the case \(e=2\), we have
\begin{equation*}
  \begin{aligned}
    \frac12\,&
    \left|\ 
    \begin{matrix}
      -\frac{q(x^{q-1}-z^{q-1})+(q-2)\mu_4(x^{q-3}-z^{q-3})+\cdots+\mu_{eq-e}(x-z)}{x-z} &
      \frac{2y-2w}{x-z} \\[3pt]
  \ \ \frac{q(x^{q-1}-z^{q-1})+(q-2)\mu_4(x^{q-3}-z^{q-3})+\cdots+\mu_{eq-e}(x-z)}{y-w} &
      \frac{2y-2w}{y-w}
    \end{matrix}\ 
    \right|\\
 &=-2\,\frac{q(x^{q-1}-z^{q-1})+(q-2)\mu_4(x^{q-3}-z^{q-3})+\cdots+\mu_{eq-e}(x-z)}{x-z}\\
 &=-2\,\big(q(x^{q-2}+x^{q-3}z+\cdots+z^{q-2}) % & \ \ \ \ 
    +(q-2)\mu_4(x^{q-4}+x^{q-5}z+\cdots+z^{q-4})+\cdots+\mu_{eq-e}\big),
  \end{aligned}
\end{equation*}
giving the desired form of \(H\). 
\end{proof}
%% \vskip 5pt
\begin{lemma}\label{det_V_and_det_T}
We have \ \(\det(V)=(-1)^g\det(T)\).
\end{lemma}
%% \vskip 5pt
\begin{proof}
Since
\begin{equation*}
\begin{aligned}
 f(x,y)\,\mathrm{ph}\big((x,y),(z,w)\big)
&=f(x,y)\,\tp{(\rev{M}(z,w))}\,H\cdot\,\rev{M}(x,y)\\
&\equiv\tp{(\rev{M}}(z,w))\,\rev{\!(-\tfrac{1}{eq}\,T)}\,\,H\cdot\,\rev{M}(x,y)\bmod{I}
\end{aligned}
\end{equation*}
by (\ref{matrix_T}), and the entries in \(M(x,y)\) form a basis of \(\mathbb{Q}[\mu][x,y]/(f_1(x,y),f_2(x,y))\), 
we see
\begin{equation}\label{V_and_T}
V=-\tfrac1{\,eq\,}\,\rev{T}\,H. \ \ 
\end{equation}
Since \(H\) is a skew-upper-triangular
matrix of the form (\ref{H_ij}), we have proved 
\,\(\det(V)=(-1)^g\det(T)\)\, as desired. 
\end{proof}
\begin{lemma}\label{weight_detV_detT}
We have \,\(\wt(\det V)=\wt(\det T)=-eq(e-1)(q-1)\). 
\end{lemma}
%% \vskip 5pt
\begin{proof}
Since the determinants of \(T\) and \(V\) are of homogeneous weight, 
it suffices to check the sum of weights of the counter-diagonal entries, which is given by
\begin{equation*}
\sum_{q\in\wt(M)}\wt(T_{a,\,eq-a})
=\sum_{q\in\wt(M)}(-eq)=-2geq=-eq(e-1)(q-1).
\end{equation*}
This is the same as \(\wt(\det(V))\) and \(\wt(\det(T))\). 
\end{proof}
%% \vskip 5pt
\begin{lemma}\label{lemma_discri}
We have \,\(\wt(\varDelta)=-eq(e-1)(q-1)\). 
If \,\(\gcd(e-1,q-1)=1\), then we have
\begin{equation*}
\det T=(-1)^g\det V=c\cdot\varDelta
\end{equation*}%
with \,\(c\in\mathbb{Q}^{\times}\). 
\end{lemma}%
\begin{proof}%
Letting all the coefficients \(\mu_j\) of
\(p_j(x)\) for \(1\leq j\leq e-1\) to be zero, the discriminant
\(\varDelta\) becomes a power of the square of 
the difference of all the roots of \(p_e(x)=0\).  
Since the weight of any root is \(-e\), 
the weight of the square of the difference is
\begin{equation*}
2\cdot\binom{q}{2}\cdot(-e)=-eq(q-1). 
\end{equation*}
Similar arguments on \(y\) shows that \(\wt\,\varDelta\) is
\(-qe(e-1)\) times an positive integer.  By the assumption
\(\gcd(e-1,q-1)=1\), we have \(\wt\,\varDelta\) is \(-eq(e-1)(q-1)\)
times a positive integer.  The statement follows from
\ref{weight_detV_detT} combined with \ref{jet} (1).
\end{proof}
%---
%% \vskip 5pt
\begin{remark}
The condition \(\gcd(e-1,q-1)=1\) in {\rm\ref{lemma_discri}} holds 
if \(e=2\) or \((e,q)=(3,4)\), \((3,5)\), 
for which we already know \(\varDelta\) explicitly as mentioned 
in {\rm \ref{wt_discriminant}}. 
\end{remark}
%---
From now on we assume the modality of \(C\) to be \(0\). 
For any \(a\in\wt(M)\), the coefficient \(\mu_{eq-a}\) 
appears in the Weierstrass form, 
which is the reason why the modality is so important. 
Let
\begin{equation}\label{ell_operators}
  L_k = \sum_{j\in\wt(\mu)}\,V_{kj}\,
  \frac{\partial}{\partial \mu_j} \ \ \ (k\in\wt(M)). 
\end{equation}
It is natural to define
\begin{equation*}
\wt\Big(\frac{\partial}{\partial \mu_{j}}\Big)=j, \ \ \ 
\wt({L}_{v_i})=v_i.
\end{equation*}
We recall the definition of the matrix \(\Gamma^{L_k}\) 
for each \(L_k\) (\(\wt(L_k)=-k\)) (see \ref{gauss_manin_connection}), 
which we denote as\,\footnote{These should not be confused with
the symplectic basis of cycles \(\alpha_j\) and  \(\beta_j\)
in (\ref{per_matrix})}
\begin{equation*}
\Gamma^{L_k}
= \bigg[
      \begin{array}{rc}
       -\beta_k  &\ \ \alpha_k       \\
       -\gamma_k &\ \ \tp{\!\beta_k} 
      \end{array}
      \bigg],
\end{equation*}
where \(\alpha_k\), \(\gamma_k\in\mathrm{Sym}(g,\mathbb{Q}[\mu])\), and
\begin{equation}\label{alphabetagamma}
\alpha_k=\underset{\substack{i\in\wt(u)\\ j\in\wt(u)}}{[\alpha_{k;\,-i,k-j}]}, \ \ 
\beta_k=\underset{\substack{i\in\wt(u)\\ j\in\wt(u)}}{[\beta_{k;\,j,k-i}]}, \ \ 
\gamma_k=\underset{\substack{i\in\wt(u)\\ j\in\wt(u)}}{[\gamma_{k;\,-i,k+j}]}
\end{equation}
with decreasing order in \(\wt(u)\). 
Then the sum of two indices of any entry gives its weight;
\begin{equation*}
\wt(\alpha_{k;\,-i,k-j})=-i+(k-j), \ \ 
\wt( \beta_{k;\,j,k-i})=j+(k-i), \ \ 
\wt(\gamma_{k;\,-i,k+j})=-i+(k+j). 
\end{equation*}
Note once again that \,\(\Gamma^{L_k}\in\mathrm{Mat}(2g,\mathbb{Q}[\mu])\) 
follows from \ref{sato_formula}. 
Just to be sure, we shall redefine, as defined in (\ref{H_attached_to_L}), 
\begin{equation}\label{H_attached_to_L_j}
  \begin{aligned}
H^{L_k}
&=\strutf{40pt}{15pt}\smash{ %%%
  \frac1{\,2\,}\,\big[\,\tfrac{\partial\ }{\partial u_{w_g}} \ \cdots \ \tfrac{\partial\ }
  {\partial u_{w_1}} \ \ u_{w_g} \ \cdots \ u_{w_1}\,\big]
    \Bigg[
      \begin{array}{rc}
        \alpha_k     &\ \,\beta_k \\[5pt]
        \tp{\!\beta_k} &\ \,\gamma_k
      \end{array}
      \Bigg]\!
    \left[
      \begin{array}{c}
      \tfrac{\partial\ }{\partial u_{w_g}} \\
      \vdots \\
      \tfrac{\partial\ }{\partial u_{w_1}}\\
      u_{w_g} \\[-2pt]
      \vdots \\[-4pt]
      u_{w_1}
      \end{array}
     \right]
      }% END of smash
\\
  & =\frac12\,\sum_{i\in\wt(u)}\sum_{j\in\wt(u)}
  \Big({\alpha}_{k;\,-i,\,k-j}\,\frac{\partial^2}{\partial\,u_i\partial\,u_j}\\
  &\ \ \ \ \ \ \ 
   +2{\beta}_{k;\,j,k-i}\,u_i\,\frac{\partial\ \ }{\partial u_j}
   +{\gamma}_{k;-i,k+j}\ u_iu_j\Big)
  +\tfrac12\,{\mathrm{Tr}\,\beta_k}. 
\end{aligned}
\end{equation}
%%%%%%%%%%%%%%%%%%%%%%%%%%%%%%%%%%%%%%%%%%%%%%%%%%%%%%%%%%%%%%%
\subsection{The operators tangent to the discriminant}%%%%%%%%%%%%%%%%%%
\label{operators_and_discriminant}%%%%%%%%%%%%%%%%%%%%%%%%%%%%%%%%%%%%%%
We prove the following proposition for which there is no proof in \cite{bl_2008}.
%% \vskip 5pt
\begin{proposition}\label{hessian_formula_}% --- Prop. 5.42 ------
\,{\rm (Buchstaber-Leykin)}\, Let
\begin{equation*}
L={M(x,y)}\,\tp{[\,L_{v_1}\ L_{v_2}\ \cdots\ L_{v_{2g}}\,]}.
\end{equation*}
Then, in the ring \,\(\mathbb{Q}[\mu][x,y]/(f_1, f_2)\), we have
\begin{equation}\label{hessian_formula}
  L(\varDelta)=-\mathrm{Hess}f\cdot\varDelta. 
\end{equation}
\end{proposition}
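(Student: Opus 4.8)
The plan is to turn the asserted vector identity into a single scalar congruence in the Tjurina algebra $A=\mathbb{Q}[\mu][x,y]/(f_x,f_y)$ and then to evaluate that congruence through the critical points of $f$, where the Bezoutian $H$ acts as a reproducing kernel. Two structural facts come for free from the modality $0$ Weierstrass normalization: each basis monomial $M_j(x,y)$ occurs in $f$ with coefficient $-\mu_{eq-v_j}$, so $\partial f/\partial\mu_{eq-v_j}=-M_j$ and hence $L_{v_i}(f)=-\sum_j V_{ij}M_j=-(VM)_i$. Multiplying by $M_i$ and summing, $L(f)=\sum_i M_iL_{v_i}(f)=-\tp{M(x,y)}\,V\,M(x,y)$, which by the defining relation (\ref{V}) of $V$ on the diagonal $(z,w)=(x,y)$ is $-f\,H((x,y),(x,y))$; since the diagonal value of $H$ equals $\mathrm{Hess}\,f$ for all $\mu$ (a short Taylor computation, illustrated by the $\mu=0$ case of the first lemma of \ref{operators_and_discriminant}), we get $L(f)\equiv -f\,\mathrm{Hess}\,f$ in $A$. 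This is a consistency check but not yet the claim.

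To reach $\varDelta$ I would invoke Proposition \ref{tangent_to_disc}: each $L_{v_i}$ is tangent to the discriminant, so $L_{v_i}(\varDelta)=c_i\varDelta$ with $c_i\in\mathbb{Q}[\mu]$, whence $L(\varDelta)=\big(\sum_i c_iM_i\big)\varDelta$ and the theorem becomes the congruence $\sum_i c_iM_i\equiv\mathrm{Hess}\,f\pmod{(f_x,f_y)}$. By Lemma \ref{disc_as_det}, $\varDelta$ is, up to a nonzero constant, the determinant of multiplication by $f$ on $A$, whose eigenvalues are the values $f(p)$ at the points $p$ with $f_x(p)=f_y(p)=0$; thus $\varDelta$ is a constant times $\prod_p f(p)$. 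The target identity is polynomial in the $\mu_j$, so it suffices to work for generic $\mu$, where the $2g$ critical points are simple and distinct and $A$ splits as a product of copies of the ground field. Because $p$ is critical, $\nabla f(p)=0$ kills the contribution of the $\mu$-dependence of $p$, so $L_{v_i}(f(p))=(L_{v_i}f)(p)$ and therefore $c_i=\sum_p (L_{v_i}f)(p)/f(p)$.

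Assembling these, and using the defining relation of $V$ once more with $(z,w)=p$ (legitimate since $p$ annihilates $f_x$ and $f_y$), I obtain
\begin{equation*}
\sum_i c_iM_i(x,y)=\sum_p\frac{1}{f(p)}\sum_i M_i(x,y)\,(L_{v_i}f)(p)=-f(x,y)\sum_p\frac{H((x,y),p)}{f(p)}.
\end{equation*}
The crux is that $H$ is the Bezoutian of the pair $(f_x,f_y)$ and hence the reproducing kernel for the Grothendieck residue pairing on $A$: for every $g$ one has $g(x,y)\equiv\sum_p H((x,y),p)\,g(p)/\mathrm{Hess}\,f(p)$ in $A$, the residue at a simple critical point contributing the Jacobian $\det\partial(f_x,f_y)/\partial(x,y)=\mathrm{Hess}\,f$ in the denominator. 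Taking $g=\mathrm{Hess}\,f\cdot f^{-1}$ (with $f$ invertible in $A$ since $\varDelta\ne0$) gives $\sum_p H((x,y),p)/f(p)=\mathrm{Hess}\,f\cdot f^{-1}$ in $A$, so that $-f\sum_p H(\cdot,p)/f(p)=-\mathrm{Hess}\,f$ in $A$, which up to the overall sign discussed below is the assertion (\ref{hessian_formula}).

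The step I expect to be the main obstacle is exactly this reproducing-kernel identity together with the sign bookkeeping: establishing that $H$ (with its $\tfrac12$ and its symmetric $(z,y),(x,w)$ arrangement) is genuinely the residue kernel, and pinning the overall sign, which is governed by the orientation in the residue and by the three normalizations in play, namely $H|_{\mathrm{diag}}=\mathrm{Hess}\,f$, the factor $-eq$ in the definition of $T$, and the flipped sign of the $p_j$ in $f$. As both a cross-check and a self-contained route in the case $e=2$, the algebra collapses to $A=\mathbb{Q}[\mu][x]/(\phi'(x))$ for $f=y^2-\phi(x)$, where $\varDelta$ is the one-variable discriminant of $\phi$, $\mathrm{Hess}\,f=-2\phi''$, and the congruence follows from the classical formula for the logarithmic derivatives of $\mathrm{disc}(\phi)$ with respect to the coefficients; this is the independent verification for $e=2$. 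Finally, a route avoiding genericity is to pair both sides of the target congruence against the monomial basis $M(x,y)$ using the Gorenstein duality of $A$, in which $\mathrm{Hess}\,f$ is the socle generator, thereby reducing the claim to finitely many residue evaluations valid for $\mu$ indeterminate.
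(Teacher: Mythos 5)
Your route is genuinely different from the paper's, and in fact more ambitious: the paper proves (\ref{hessian_formula}) analytically only for \(e=2\), by passing to a splitting field of \(p(x)\), expanding \(p''/p\) in partial fractions over the roots \(a_i\) of \(p\) (not of \(p'\)), and computing \(\partial\log\varDelta/\partial\mu_{2i}\) explicitly from \(\varDelta=\mathrm{const}\cdot\prod_{i<j}(a_i-a_j)^2\); for \((3,4)\) and \((3,5)\) the paper offers only a Maple verification. You instead localize at the \emph{critical} points of \(f\), use Lemma \ref{disc_as_det} to write \(\varDelta\) as a constant times \(\prod_p f(p)\), observe that \(\nabla f(p)=0\) makes \(L_{v_i}\log\varDelta=\sum_p (L_{v_i}f)(p)/f(p)\), and then close the argument with the Grothendieck-residue reproducing property of the Bezoutian \(H\). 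If carried through, this yields a uniform proof for every modality-zero case, which is exactly what the paper lacks; it also re-derives the tangency statement of Proposition \ref{tangent_to_disc} for these curves rather than assuming it, so no circularity arises provided you justify polynomiality of \(c_i\) (or simply work in the free rank-\(2g\) module over \(\mathbb{Q}(\mu)\) and note both sides of the final congruence have polynomial coordinates in the basis \(M\), so generic verification suffices).

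Two caveats. First, the reproducing-kernel identity \(g\equiv\sum_p H(\cdot,p)\,g(p)/\mathrm{Hess}f(p)\) is the entire content of your argument and you do not prove it; it is a classical fact (Scheja--Storch, or the Euler--Jacobi/global duality formula for the Bezoutian of a zero-dimensional complete intersection), but a citation or a short proof via the trace form is needed, since without it the proof reduces to an assertion. Second, the sign you obtain, \(L(\varDelta)=-\mathrm{Hess}f\cdot\varDelta\), is not a defect of your argument: a direct check for \((2,3)\), where \(L_0\varDelta=12\varDelta\), \(L_2\varDelta=0\) and \(\mathrm{Hess}(y^2-x^3-\mu_4x-\mu_6)=-12x\), shows that the stated identity cannot hold with the literal Hessian determinant and either ordering of the pairing with \(M\); the paper itself uses \(2p''(x)\) (i.e.\ \(-\mathrm{Hess}f\)) in Lemma \ref{hyp_disc_tangent} and reverses the ordering of \(M\) between (\ref{V}) and its \((2,3)\) example. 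So rather than leaving the sign as an open ``obstacle,'' you should fix the conventions once (diagonal value of \(H\), orientation of \(M\) versus \(\check M\), and the factor \(-eq\)) and state the identity you actually prove; as written, the proposal ends one bookkeeping step short of a complete argument.
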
%
%% \vskip 5pt
\begin{proof} 
For the case \(e=2\), \(f(x,y)\) is of the form \(y^2-p_2(x)\). 
We denote \(p_2(x)=p(x)\) for simplicity. 
Moreover, we denote \(p'(x)=\frac{\partial}{\partial x}p(x)\) 
and \(p''(x)=\frac{\partial^2}{\partial x^2}p(x)\). 
In this case, the ring \(\mathbb{Q}[\mu][x,y]/(f_1,f_2)\) is identified 
with \(\mathbb{Q}[\mu][x]/(p'(x))\) 
since \(f_1(x,y)=\frac{\partial}{\partial y}f(x,y)=2y\). 
Let \(F\) be a splitting field of \(p(x)\). 
We write the factorisation of \(p(x)\) in \(F\) as 
\(p(x) = (x-a_1)\cdots(x-a_q)\). 
Then \(\mu_{2i}\) is \((-1)^i\) times the fundamental symmetric 
function of \(a_1\), \(\cdots\), \(a_q\) of degree \(i\). 
Of course the ring \(\mathbb{Q}[\mu]\) is a sub-ring of 
\(\mathbb{Q}[a_1,\,\cdots,\,a_q]\). 
The Hessian of \(f(x,y)=y^2-p(x)\) is 
\begin{equation*}
%{\small%%%
\begin{aligned}
\bigg|\,
\begin{matrix}
-p''(x) & 0 \\
0       & 2 
\end{matrix}\,
\bigg|
&=-2\,p''(x).
\end{aligned}
%}% END of small
\end{equation*}
The main idea is to consider \(\frac{\mathrm{Hess}\,f}{2f}
 =\frac{p''(x)}{p(x)}\) in the localised ring 
\begin{equation*}
\big(F[x]/(p'(x))\big)_{p(x)}
\end{equation*}
of \(F[x]/(p'(x))\) with respect to the multiplicative set
 \(\{1,\,p(x),\,p(x)^2,\,\cdots\}\) (see \cite{matsumura}, Section 4). 
The following calculation is done in the localised ring above. 
Since \(\gcd(p'(x),\,p(x))=1\) in this situation, 
we see \(\big(F[x]/(p'(x))\big)_{p(x)}=F[x]/(p'(x))\). 
Now, we have
{\small %%%
\begin{align*}
  \frac{p''(x)}{p(x)} &=\sum_{(i,j),i<j}\frac{2}{(x-a_i)(x-a_j)}
  =\sum_{(i,j),i<j}\frac{2}{a_i-a_j}\Big(\frac{1}{x-a_i}
  -\frac{1}{x-a_j}\Big)\\
  &=2\sum_{i=1}^q\Big(\sum_{j\neq i}\frac{1}{a_i-a_j}\Big)\frac{1}{x-a_i}
   =-2\sum_{i=1}^q\Big(\sum_{j\neq
    i}\frac{1}{a_i-a_j}\Big)\frac{1}{p'(a_i)}
  \frac{-p'(a_i)}{x-a_i}\\
  &=-2\sum_{i=1}^q\Big(\sum_{j\neq i}\frac{1}{a_i-a_j}\Big)\frac{1}{p'(a_i)}
  \frac{p'(x)-p'(a_i)}{x-a_i}
   = -2 \sum_{i=1}^q  c_i\,\frac{p'(x)-p'(a_i)}{x-a_i},
\end{align*}
}% ENF of small
where
\begin{equation*}
 c_i =\Big(\sum_{j \neq i}\frac{1}{a_i-a_j}\Big)\frac{1}{p'(a_i)}.
\end{equation*}
Since \(\mathrm{Hess}f(x,y)=-2p''(x)\) and \(p(x)=-f(x,y)\) in the 
localised ring, it suffices to show that
\begin{equation*}
\frac{\partial}{\partial\mu_{2i}}\log\varDelta=-2\sum_{j=1}^q c_j {a_j}^{q-i}
\end{equation*}
up to a non-zero constant multiple.  
Indeed, if we have the formula above, we have
\begin{equation*}
  \begin{aligned}
    \frac{L(x)\varDelta}{\varDelta}
    &=\sum_{i,k}M_i(x,y)V_{ik}\frac{\partial}{\partial\mu_{2i}}\log\varDelta
    =-2\sum_{i,k}M_i(x,y)V_{ik}\sum_{j=1}^qc_j{a_j}^{q-k}\\
    &=-2\sum_{j=1}^qc_j\sum_{i,k}M_i(x,y)V_{ik}{a_j}^{q-k}
    =-2\sum_{j=1}^qc_j\,f(x,y)\,\mathrm{ph}\big((x,y),(a_j,0)\big)\\
    &=-2f(x,y)\,\sum_{j=1}^qc_j\,(-2)\,\frac{p'(x)-p'(a_j)}{x-a_j}
    =-2f(x,y)\,\frac{p''(x)}{p(x)}
    =2p''(x)
    =-\mathrm{Hess}\,f(x,y).
  \end{aligned}
\end{equation*}
Here we have used 
\begin{equation*}
\mathrm{ph}\big((x,y),(a_j,0)\big)
=\frac1{\,2\,}
\left|
\begin{array}{cc}
\frac{-p'(x)-p'(a_j)\,}{x-a_j} & \frac{2y}{\,x-a_j\,}\\[3pt]
\frac{-p'(x)-p'(a_j)\,}{y}     & \frac{2y}{\,y\,}
\end{array}
\right|
=-2\frac{\,p'(x)-p'(a_j)\,}{x-a_j}. 
\end{equation*}
To calculate \(\frac{\partial\log(\varDelta)}{\partial\mu_{2i}}\), 
we remove the assumption \(\mu_2=0\). 
Since \(\varDelta\) is some non-zero constant multiple of
\begin{equation*}
\prod_{i<j}(a_i-a_j)^2,
\end{equation*}
we easily get the \(q\times q\)-matrix
\(\big[\frac{\partial \mu_{2i}}{\partial a_j}\big]\), and then we get
\(\frac{\partial\log(\varDelta)}{\partial\mu_{2i}}\) by using its
inverse matrix.  For \((e,q)=(3,4)\), \((3,5)\), we know only a proof
by direct calculation with Maple by using the explicit form of
\(\varDelta\) and the operators \(L_{v_j}\)s.
\end{proof}
\vskip 5pt
\begin{remark}
{\rm%
There is another proof of \ref{hessian_formula_} in \cite{oss2024}). 
}% END rm
\end{remark}
%%%%%%%%%%%%%%%%%%%%%%%%%%%%%%%%%%%%%%%%%%%%%%%%%%%%%
The following lemma is required by the proof of the next proposition. 
\begin{lemma}\label{liftable}
The derivations \(\ell_j\)s and \(L_j\)s are liftable 
to the vector fields on \(\mathscr{C}\). 
Namely, there exist vector fields \,\(\widetilde{\ell_j}\)s 
and \,\(\widetilde{\ell_j}\)s on \(\mathscr{C}\) 
such that their induced vector fields with respect to
\,\(\pi\,:\,\mathscr{C}\rightarrow \mathrm{Spec}\,\mathbb{Q}[\mu]\) 
coincide with \,\(\ell_j\)s and \,\(L_j\)s, respectively. 
\end{lemma}%%%%%%%%%%%%%%%%%%%%%%%%%%%%%%%%%%%%%%%%%%%%%%%
\begin{proof}%
Recall the definition of \(T\) that is (\ref{def_T_tilde}), namely, 
\begin{equation*}
-eq\cdot f(X,Y)\cdot M_i=\sum_{b\in\wt{M}}^{2g}T_{a,b}\,M_a+A_a\,f_1(X,Y)+B_a\,f_2(X,Y). 
\end{equation*}
Then the vector field
\begin{equation*}
\ell_i=\sum_{j=1}^{2g}T_{ij}\,\frac{\partial}{\,\partial\mu_{w_j}}\in\Der(\mathbb{Q}[\mu])
\end{equation*}
is liftable with respect to  \,\(\pi\) : \(\mathscr{C}\rightarrow \mathrm{Spec}\,\mathbb{Q}[\mu]\). 
Indeed, if we define
\begin{equation*}
\widetilde{\ell}_i
=\sum_{j=1}^{2g}T_{ij}\,\frac{\partial}{\,\partial\mu_{w_j}}
+A_i\,\frac{\partial}{\,\partial x}
+B_i\,\frac{\partial}{\,\partial y},
\end{equation*}
namely, for any \,\(G(\mu,X,Y)\in\mathbb{Q}[\mu,X,Y]\),
\begin{equation*}
\widetilde{\ell}_i\,:\,
G(\mu,X,Y) \ \longmapsto\ 
\bigg(\,\sum_{j=1}^{2g}T_{ij}\,\frac{\partial}{\,\partial\mu_{w_j}}
+A_i\,\frac{\partial}{\,\partial X}
+B_i\,\frac{\partial}{\,\partial Y}\bigg)G(\mu,X,Y)\bigg|_{(X,Y)=(x,y)},
\end{equation*}
we see at \((x,y,\mu)\in\mathscr{C}\) that
\begin{equation*}
\widetilde{\ell}_if
=\sum_{j=1}^{2g}T_{ij}\,\frac{\partial f}{\,\partial\mu_{w_j}}
+A_i\,\frac{\partial f}{\,\partial x}
+B_i\,\frac{\partial f}{\,\partial y}
=\sum_{j=1}^{2g}T_{ij}\,M_j+A_i\,f_1(x,y)+B_i\,f_2(x,y)
=f(x,y)\cdot M_i=0.
\end{equation*}
Therefore \(\widetilde{\ell}_i\) is a vector field on \(\mathscr{C}\) and 
its restriction to \(\mathrm{Spec}\,\mathbb{Q}[\mu]\) is \(\ell_i\). 
We see similarly that \(L_i\) is also liftable to \(\mathscr{C}\). 
\end{proof}
\vskip 5pt
On the operators in \(\vec{L}\) and the discriminant \(\varDelta\), 
we have the following.  
%% \vskip 5pt
\begin{proposition}\label{tangent_to_disc}% -- Prop. 5.45 ------------
For a derivation \(D\in\Der(\mathbb{Q}[{\mu}])\), 
\(D\) is tangent to the discriminant \(\varDelta\) 
if and only if \(D\in\vec{L}\). 
\end{proposition}
%% \vskip 5pt
\begin{proof}
This follows from Kyoji Saito's theorem 
(see Theorem A4 in \cite{bruce}).  
See also Corollary 3 on p.2716 to the theorem on 
the previous page in \cite{zakalyukin_1984} 
% See also Corollary 3 to Theorem (p.2716) in \cite{zakalyukin_1984} of Zakalyukin, 
and Corollary 3.4 in \cite{arnold_1976}.  
However, the \lq\lq\,if\,\rq\rq-part of the statement for the cases 
\((e,q)=(2,q)\), \((3,4)\) is contained in \ref{hessian_formula_}. 
\end{proof}
\vskip 8pt
Obviously, we see
\begin{equation*}
\bigg[\bigoplus_i\mathbb{Q}[\mu]\frac{\partial}{\,\partial\mu_i\,},\ \bigoplus_i\mathbb{Q}[\mu]\frac{\partial}{\,\partial\mu_i\,}\bigg]
\subset\,\bigoplus_i\mathbb{Q}[\mu]\frac{\partial}{\,\partial\mu_i\,},
\end{equation*}
but on \(\{\ell_i\}\), we have the following. 
%%%%%%%%%%%%%%%%%
\begin{corollary}\label{commutator_is_lin_comb}
We have  \,\([\ell_i,\,\ell_j]\in\bigoplus_i\mathbb{Q}[\mu]\,\ell_i\) 
in \,\(\Der(\mathbb{Q}[\mu])\). 
Moreover, 
\begin{equation*}
\vec{L}=\bigoplus_i\mathbb{Q}[\mu]\,\ell_i
\subset\Der_{\mathbb{Q}[\mu]}\left(
\mathbb{Q}[\mu,\,x,\,y]\frac{dx}{\,f_y\,}
+d\,\bigg(\mathbb{Q}[\,\mu,\,x,\,y]\frac{1}{\,f_y\,}\bigg)
\right). 
\end{equation*}
\end{corollary}
\begin{proof}
Since \,\([\ell_i,\,\ell_j]\)\, is tangent to \(\varDelta\) by \ref{hessian_formula_}, 
\ref{tangent_to_disc} shows the first assertion. 
Therefore, we have the second assertion from \ref{coroll_chevalley}. 
\end{proof}
\vskip 5pt
\noindent
Because of \ref{V_and_T}, \ref{commutator_is_lin_comb}, 
and \(\det([H_{ij}])=(eq)^{2g}\in\mathbb{Q}\), 
we see the Lie algebra generated by \(L_k\)s in (\ref{ell_operators}) 
is no other than \(\vec{L}\), namely, 
\begin{equation}\label{Lie_L_is_span_Lj}
\vec{L}=\bigoplus_k\mathbb{Q}[\mu]\,L_k.
\end{equation}
The structure constants of the algebra \(\vec{L}\) with respect to 
\(\{L_{v_j}\}\) belong to \(\mathbb{Q}[\mu]\), 
so it is a polynomial Lie algebra, as discussed in \cite{bl_2002}.  
The corresponding fundamental relations 
of \(\{L_{v_j}\}\) for \((e,q)=(2,3)\),
\((2,5)\), \((2,7)\), and \((3,4)\) are available on request.
\par
\newpage
%%%%%%%%%%%%%%%%%%%%%%%%%%%%%%%%%%%%%%%%%%%%%%%%%%%%%%%%%%%%%%%%%%%%%%%%
\subsection{The sigma function as a solution of the heat equations} %%%%%%%%
\label{sigma_is_hat_sigma}%%%%%%%%%%%%%%%%%%%%%%%%%%%%%%%%%%%%%%%%%%%%%%
%%%%%%%%%%%%%%%%%%%%%%%%%%%%%%%%%%%%%%%%%%%%%%%%%%%%%%%%%%%%%%%%%%%%%%%%
Before showing that the function (\ref{classical_sigma}) 
is exactly the sigma function \(\sigma(u)\) 
(see Lemma 4.17 in \cite{bl_2008}), 
we shall first describe some heuristic arguments supporting this result. 
\par
From the definition of \(L_0\) and \ref{V_1j}, we have
\begin{equation}\label{wt_0_operator}
L_0=\sum_j(eq-v_j)\mu_{eq-v_j}\frac{\partial}{\partial\mu_{eq-v_j}}, \ \text{and} \ \
L_0\big(F({\mu})\big)=-\wt\big(F({\mu})\big)F({\mu})
\end{equation}
for any homogeneous form \(F({\mu})\in\mathbb{Q}[{\mu}]\). 
The operator \(L_0\) is called \textit{Euler vector field}. 
%% \vskip 5pt
\begin{lemma}\label{Gamma_of_L_0}
On \(H_{\mathrm{dR}}^1(\mathscr{C}/\mathbb{Q}[\mu])\), 
we have
\begin{equation*}
{L}_0\,\tp{\vec{\omega}} = 
\left[
  \begin{array}{ccc|ccc}
 -w_g &        &       \\[-5pt]
      & \ddots &       \\[-5pt]
      &        & -w_1 &     &        &      \\
\hline
      &        &      & w_g &        &      \\[-5pt]
      &        &      &     & \ddots &      \\[-5pt]
      &        &      &     &        & w_1
  \end{array}
\right]\tp{\vec{\omega}}
=\Gamma_0\tp{\vec{\omega}}.
\end{equation*}
\end{lemma}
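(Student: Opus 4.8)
The plan is to prove the sharper statement that $L_0$ acts on each weight-homogeneous class of $H^1(\mathscr{C},\mathbb{Q}[\mu])$ as multiplication by the negative of its weight. Applied to the basis forms $\omega_{w_i}$ (weight $w_i$) and $\eta_{-w_i}$ (weight $-w_i$), this gives $L_0\,\omega_{w_i}=-w_i\,\omega_{w_i}$ and $L_0\,\eta_{-w_i}=w_i\,\eta_{-w_i}$, which is exactly the diagonal matrix $\Gamma_0$.

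First I would realise $L_0$ as a concrete derivation of the function field $R=\mathbb{Q}(\mu,x,y)$ of $\mathscr{C}$. By Lemma \ref{chevalley} I may compute the Gauss--Manin action with the choice $\xi=x$, so that $L_0x=0$ and $L_0\mu_j=j\mu_j=-\wt(\mu_j)\mu_j$. Let $E=\sum_j\wt(\mu_j)\mu_j\frac{\partial}{\partial\mu_j}+\wt(x)\,x\frac{\partial}{\partial x}+\wt(y)\,y\frac{\partial}{\partial y}$ be the total weighted Euler derivation; because $Ef=\wt(f)f=-eq\,f\in(f)$ it descends to $R$. Writing $D_x$ for the geometric derivative along $\mathscr{C}$ (with $D_xx=1$, $D_xy=-f_x/f_y$), both $L_0$ and $-E-ex\,D_x$ are $\mathbb{Q}$-derivations of $R$ that vanish on $x$ and send $\mu_j\mapsto j\mu_j$; since such a derivation is determined by its values on $x$ and the $\mu_j$, I obtain the operator identity $L_0=-E-ex\,D_x$.

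Then, for a homogeneous representative $\psi=\varphi\,dx$ of weight $w$ (so $\varphi=h/f_y$ has weight $w+e$, as $\wt(dx)=-e$), the vanishing $L_0x=0$ gives $L_0\psi=(L_0\varphi)\,dx$, and the Euler identity $E\varphi=(w+e)\varphi$ together with $(D_x\varphi)\,dx=d\varphi$ yields
\[
L_0\psi=-(w+e)\,\psi-ex\,d\varphi .
\]
Integrating by parts via $d(x\varphi)=\varphi\,dx+x\,d\varphi$ rewrites the last term as $ex\,d\varphi=e\,d(x\varphi)-e\psi$; since $d(x\varphi)$ is exact it dies in $H^1$, leaving $L_0\psi\equiv-(w+e)\psi+e\psi=-w\psi$. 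Taking $\psi=\omega_{w_i}$ gives eigenvalue $-w_i$ and $\psi=\eta_{-w_i}$ gives $+w_i$, with no further terms, establishing $L_0\,\tp{\vec{\omega}}=\Gamma_0\,\tp{\vec{\omega}}$ with $\Gamma_0$ diagonal as claimed.

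The step needing the most care is the integration by parts: $x\varphi=xh/f_y$ is not a polynomial — it has poles at the finite branch points where $f_y$ vanishes — so $d(x\varphi)$ is not the differential of a function regular away from $\infty$ as in (\ref{h1}). I would therefore carry out the reduction in the meromorphic de Rham model of $H^1(\mathscr{C},\mathbb{Q}[\mu])$, namely second-kind forms modulo differentials of \emph{meromorphic} functions, which computes the same $2g$-dimensional module and in which $d(x\varphi)$ is genuinely exact (hence residue-free and of the second kind, as is $ex\,d\varphi=-(w+e)\psi-L_0\psi$). Verifying that every intermediate form stays of the second kind and that this reduction respects the $\mathbb{Q}[\mu]$-structure is the only delicate bookkeeping; the genus-one check, where $L_0\,\tfrac{dx}{2y}\equiv-\tfrac{dx}{2y}$ and $L_0\,\tfrac{x\,dx}{2y}\equiv\tfrac{x\,dx}{2y}$ after one such reduction, is a reassuring model.
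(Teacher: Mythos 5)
Your proof is correct, and it arrives at the same eigenvalue statement $L_0\psi\equiv-\wt(\psi)\,\psi$ modulo exact forms as the paper, but by a genuinely different route. The paper expands a weight-homogeneous form at $\infty$ in a weight-one local parameter $t$, writes $\omega=\sum_j c_jt^{j+w-1}dt$ with $\wt(c_j)=-j$, and observes that $d\bigl(\sum_j c_jt^{j+w}\bigr)=L_0\omega+w\,\omega$, so that the discrepancy between $L_0\omega$ and $-w\omega$ is (formally) exact at $\infty$. You replace this local computation by the global function-field identity $L_0=-E-ex\,D_x$, with $E$ the full weighted Euler derivation, which produces the primitive in closed form: $L_0\psi+w\psi=-e\,d(x\varphi)$. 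These are literally the same identity — your $-ex\varphi$ is the global function whose expansion at $\infty$ is the paper's $\sum_jc_jt^{j+w}$ — but your version makes explicit something the paper's one-line local computation leaves implicit, namely that the primitive at $\infty$ globalizes to a meromorphic function on $\mathscr{C}$, which is what is actually needed for $L_0\omega+w\omega$ to vanish in $H^1$ rather than merely to be formally exact in $\mathbb{Q}[\mu]((t))\,dt$ (where every residue-free form is exact). The price, which you correctly identify, is that $x\varphi=xh/f_y$ has poles at the finite zeros of $f_y$, so the reduction must be carried out in the full second-kind-modulo-exact model of $H^1$ rather than in the presentations (\ref{h1}) or (\ref{H1_over_Qmu}); this is harmless since the two models are canonically isomorphic, and the paper itself tacitly works in the larger model in its explicit genus-one reductions in Subsection \ref{section_23-curve} (e.g. $d(x/y)=-\omega_1-L_0\omega_1$, where $x/y$ has poles at the finite Weierstrass points) — indeed your genus-one check reproduces exactly that formula.
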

%% \vskip 5pt
\begin{proof}
For a power series expansion at \(\infty\) of any 1-form \(\omega\) of homogeneous weight \(w\) 
\begin{equation*}
\omega=\sum_j c_j\,t^{j+w}\,dt, 
\end{equation*}
where \(t\) is a local parameter at \(\infty\) of weight \(1\), 
\(c_j\in\mathbb{Q}[{\mu}]\) is of homogeneous weight 
\(-j-1\), 
\begin{equation*}
\begin{aligned}
L_0\Big(\sum_j c_j t^{j+w}dt\Big)&+w\sum_j c_j t^{j+w}dt
 =\sum_j (j+1)c_j t^{j+w}dt+\sum_j wc_j t^{j+w}dt\\
&=\sum_j (j+w+1)c_j t^{j+w}dt
 =d\Big(\sum_j c_j\,t^{j+w+1}\Big).
\end{aligned}
\end{equation*}
If \(\omega\) is any one of \,\(\omega_{w_i}\), which is of the form
\smash{\(\dfrac{h(x,y)}{f_2(x,y)}\,dx\)} with
\(h(x,y)\in\mathbb{Q}[{\mu}][x,y]\), we see the last above is
\begin{equation}\label{exact_form}
d\bigg(\frac{h(x,y)}{f_2(x,y)}\cdot{t}\,\frac{dx}{dt}\bigg).
\end{equation}
Since we can choose \(t\) as a quotient of monomials of \(x\) and \(y\)
(see \cite{onishi_2018}, Section 3), (\ref{exact_form}) is an exact form. 
So that
\begin{equation*}
  L_0(\omega)=-\wt(\omega)\,\omega \ \ \mbox{in \,
    \(H_{\mathrm{dR}}^1(\mathscr{C}/\mathbb{Q}[{\mu}])\)}. 
\end{equation*}
As \,\(\wt(\omega_i)=i\) \,and \,\(\wt(\eta_{-i})=-i\), the statement
is now obvious.  
\end{proof} % ---------------------------------------------------
% -----------------------------------------------------------------
\vskip 5pt The function \(\sigma(u)\), characterized in
\ref{char_sigma}, is a power series of homogeneous weight, which must
be written as\footnote{%
  A different notation here, the \(\ell_{w_j}\)s here are positive
  integers, not the operators in (\ref{small_ell_operators}). }
\begin{equation}\label{homogeneous_sigma}
\begin{aligned}
\sigma&(u)={u_1}^{(e^2-1)(q^2-1)/24}\\
&\cdot\hskip -10pt\sum_{\{n_{eq-v_j},\,\ell_{w_i}\}}
\hskip -10pt\strutf{0pt}{10pt}
\smash{a(\ell_{w_2},\cdots,\ell_{w_g},n_{eq-v_1},\cdots,n_{eq-v_{2g}})
 \hskip 0pt\prod_{j=1}^{2g}(\mu_{eq-v_j}{u_1}^{eq-v_j})^{n_{eq-v_j}}
  \prod_{i=2}^g{\bigg(\dfrac{u_{w_i}}{{u_1}^{w_i}}\bigg)^{\ell_{w_i}}},}
  \end{aligned}
\end{equation}
where the
\(a(\ell_{w_2},\cdots,\ell_{w_g},n_{eq-v_1},\cdots,n_{eq-v_{2g}})\)'s
are absolute constants and the set of \(3g-1\) variables
\(\{n_{eq-v_j},\,\ell_{w_i}\}\) runs through the non-negative integers
such that
\begin{equation*}
\frac{(e^2-1)(q^2-1)}{24}-\sum_{i=2}^g\ell_{w_i}+\sum_{j=1}^{2g}n_{eq-v_j}\geq0.
\end{equation*}
\vskip 8pt
\par
Here, we shall mention that in Lemma 4.17 of \cite{bl_2005} and its proof here 
along the lines of our reconstruction of BL theory, 
that is if \(\sigma(u)\) can be written as 
\(\sigma(u)=\varDelta^{-M}\tilde{\sigma}(u)\) with a numerical 
constant \(M\), then \(M=\frac18\).  
Before doing so, we point out the following lemma. 
\begin{lemma}\label{weight_of_top_constant} % = Lemma 4.18
The constant
\begin{equation}\label{top_constant}
\big({(2\pi)^g}/{(\mathrm{det}\,{\omega'})}\big)^{\frac12}\det(V)^{-\frac18}
=\big({(2\pi)^g}/{(\mathrm{det}\,{\omega'})}\big)^{\frac12}\det(T)^{-\frac18}, 
\end{equation}
is of weight \({(e^2-1)(q^2-1)}/{24}\).
Hence, 
if \((e,q)=(2,3)\), \((2,5)\), \((2,7)\), or \((3,4)\), 
{\rm (}see {\rm \ref{lemma_discri})} or if the conjecture \ref{conj_D} is true, 
\(\big({(2\pi)^g}/{(\mathrm{det}\,{\omega'})}\big)^{\frac12}\varDelta^{-\frac18}\) 
is also of weight \({(e^2-1)(q^2-1)}/{24}\).
\end{lemma}%%%%%%%%%%%%%%%%%%%%%%%%%%%%%%%%%%%%%%%%%%%%%%%%%%%%%%
\vskip 5pt
\begin{proof}% [Proof of \,{\rm \ref{weight_of_top_constant}}]
We know the weight of \(\det(V)^{\frac18}\) is \(-eq(e-1)(q-1)/8\) by \ref{weight_detV_detT}. 
The weight of \(\mathrm{det}(\omega')\) is \(\sum_{j=1}^gw_j\), 
which equals 
\begin{equation*}
\sum_{j=1}^gw_j=\frac{eq(e-1)(q-1)}4-\frac{(e^2-1)(q^2-1)}{12}
\end{equation*}
by p.97 of \cite{bl_2004}. 
Hence the weight of the  constant (\ref{top_constant}) is \(\frac{(e^2-1)(q^2-1)}{24}\). 
\end{proof}
%%%% --------------- %%%%
\vskip 5pt
\begin{remark}{\rm 
(1) \,Note that the weight of the constant above is exactly that of \,\(\sigma(u)\). \\
(2) \,From (\ref{pre_alg_heq}) and (\ref{top_constant}), 
we see the series (\ref{homogeneous_sigma}) and both of 
\begin{equation*}
\det(V)^{-\frac{1}{\,8\,}}\tilde{\sigma}(u) \ \ \text{and} \ \ 
\varDelta^{-\frac{1}{\,8\,}}\tilde{\sigma}(u)
\end{equation*}
are killed by 
\begin{equation}\label{homo_weight}
\begin{aligned}
L_0-&H^{L_0}-\frac{1}{\,8\,}L_0\log\varDelta
=L_0-H^{L_0}+\mathrm{wt}(\sigma(u))\\
&=\sum_{j=1}^g(eq-v_j)\mu_{eq-v_j}\frac{\partial}{\partial \mu_{eq-v_j}}
  -\sum_{j=1}^g w_j\,u_{w_j}\frac{\partial}{\partial u_{w_j}}
  +\dfrac{(e^2-1)(q^2-1)}{24}. 
\end{aligned}
\end{equation}
}%END of rm
\end{remark}
\par
In the rest of the paper, we use the notation
\begin{equation*}
H_{v_i}=H^{L_{v_i}}-\tfrac18{L}_{v_i}\log{\varDelta},
\end{equation*}
where \(H^{L_{v_i}}\) is defined by (\ref{H_attached_to_L_j}). 
Then \ref{heq_tilde_sigma} and (\ref{pre_alg_heq}) imply the following.
%% \vskip 5pt
\begin{theorem}\label{heq_for_hat_sigma}\ %-----------------------------------
We have \,
\(({L}_{v_j}-H_{v_j})\,\hat{\sigma}(u)=0\) \,
for \(j=1\), \(\cdots\), \(2g\).
\end{theorem}
The following theorem is one of the important consequences of the BL-theory. 
%% \vskip 5pt
\begin{theorem}\label{Main_Thm}
The function \,\(\sigma(u)\)\, is equal to \,\(\hat{\sigma}(u)\) 
up to a non-zero absolute constant. 
\end{theorem} 
%% \vskip 5pt
\begin{proof}
For 
\((e,q)=(2,3)\), \((2,5)\), \((2,7)\), and \((3,4)\), 
we will solve the system of equations
\begin{equation}\label{heat_eq_for_sigma}
(L_{v_i}-H_{v_i})\,\varphi(u)=0 \ \ \ 
(\,i=1, \ \cdots, \ 2g\,)
\end{equation}
for an unknown entire function \(\varphi(u)\), 
and show that the solution space of this system 
is of dimension \(1\), in Section \ref{section3}. 
Any solution eventually satisfies the properties of \(\sigma(u)\) 
in (\ref{char_sigma}). 
Hence we have proved that \(\sigma(u)\) is equal to \(\hat{\sigma}(u)\) 
up to a non-zero absolute constant. 
\end{proof}
\vskip 8pt
From now on, throughout this paper, we denote 
\begin{equation*}
\Gamma_{v_j}=\Gamma^{L_{v_j}}.
\end{equation*}
Especially, \(\Gamma_0=\Gamma^{L_0}=\Gamma^{L_{v_1}}\). 
\vskip 5pt
\newpage
\begin{remark} %
{\rm  %%%
\label{dictionary}
As noted above in {\rm\ref{dictionary_notation}}, our notation differs from
that of Buchstaber and Leykin; we denote the matrix \(\Gamma_j\) in
{\rm p.274} of {\rm\cite{bl_2008}} by \(\Gamma^{\mathrm{BL}}_j\) and we define
the sub-matrices of \(-J\Gamma^{\mathrm{BL}}_j\) 
and \(\Gamma_{v_j}J\) by
\begin{equation*}
-J\Gamma^{\mathrm{BL}}_j
=\Bigg[\,
\begin{matrix}
\alpha^{\mathrm{BL}}_j & \tp{(\beta^{\mathrm{BL}}_j)} \\[5pt]
\beta^{\mathrm{BL}}_j & \gamma^{\mathrm{BL}}_j 
\end{matrix}\,
\Bigg],  \ \ 
\Gamma_{v_j}J
=\Bigg[\,
\begin{matrix}
\alpha_{v_j}       & \beta_{v_j} \\[5pt]
\tp{(\beta_{v_j})} & \gamma_{v_j} 
\end{matrix}\,
\Bigg]
\end{equation*}
by following the notation of {\rm\cite{bl_2008}} and the present paper. 
Then we have for any \(j\) that
\begin{equation*}
    \alpha^{\mathrm{BL}}_j=\alpha_{v_j}, \ \  
    \beta^{\mathrm{BL}}_j=\tp{(\beta_{v_j})}, \ \  
    \gamma^{\mathrm{BL}}_j=\gamma_{v_j}, \ \ 
    \Gamma^{\mathrm{BL}}_j=\tp{(\Gamma_{v_j})}.
\end{equation*}
}% End of \rm
\end{remark}
\newpage
%%%%%%%%%%%%%%%%%%%%%%%%%%%%%%%%%%%%%%%%%%%%%%%%%%%%%%%%%%%%%%%%%%%%
\section{Solving the heat equations}\label{section3}%%%%%%%%%%%%%%%%
%%%%%%%%%%%%%%%%%%%%%%%%%%%%%%%%%%%%%%%%%%%%%%%%%%%%%%%%%%%%%%%%%%%%
\subsection{The initial conditions}
For the rest of the paper, 
we shall solve the system of equations (\ref{heat_eq_for_sigma}) 
for the \((2,3)\)-, \((2,5)\)-, \((2,7)\)-, and \((3,4)\)-curves.  
We frequently switch from regarding the \(\mu_j\)s as indeterminates 
to regarding them as elements in \(\mathbb{C}\).
We suppose the following two initial conditions for any solution 
\(\varphi(u)\) solving (\ref{heat_eq_for_sigma}):\\
\textbf{IC1}. \,\(\varphi(u)\in\mathbb{Q}[{\mu}][[u_{w_g},\,\cdots,\,u_{w_1}]]\), and\\
\textbf{IC2}. \(\varphi(u)\) is of homogeneous weight 
\,\(\frac{1}{\,24\,}(e^2-1)(q^2-1)\) 
with respect to \(u_i\)s and \(\mu_j\)s.
\par
Since the property (4) in \ref{char_sigma} is stronger than 
these conditions, there may be a possibility to reduce the
characterization in \ref{char_sigma} of the sigma function, 
in general.
\par
It is not clear to the authors which part of \cite{bl_2008} shows that
the space of the solutions
\(\varphi(u)=\varphi({\mu},u_{w_g},\cdots, u_{w_2},u_{w_1})
\in\mathbb{Q}[{\mu}][[u_{w_g},\cdots, u_{w_2},u_{w_1}]]\) of
(\ref{heat_eq_for_sigma}) is one dimensional.  The main part of the
present paper, that is from \ref{section_23-curve} to the end of the
paper, is a partial answer to this question.
%%%%%%%%%%%%%%%%%%%%%%%%%%%%%%%%%%%%%%%%%%%%%%%%%%%%%%%%%%%%%%%%%%%%%%%%%%%%%
\subsection{General results for the \texorpdfstring{\((2,3)\)}{Lg}-curve}%%%%
\label{on_2q_curves} %%%%%%%%%%%%%%%%%%%%%%%%%%%%%%%%%%%%%%%%%%%%%%%%%%%%%%%%
In this subsection, we discuss the hyperelliptic case, 
that is the case \(e=2\).  
Firstly, we give the explicit expression for the entries of 
the matrix \(V\) of (\ref{V}).  
The authors know that the issue in this subsection is 
described in p.566 in V.I.Arnol'd's \cite{arnold_1976}
and p.65 in \cite{arnold_1990}.  
Since they do not know any source which contains a proof, 
we shall give a detailed proof here. 
%% \vskip 5pt
\begin{lemma}\label{hyp_T-matrix}
We have
\begin{equation*}
\begin{aligned}
V_{2i-2,\,2j+2}&=-\frac{2i(q-j)}{q}\,\mu_{2i}\mu_{2j}
    +\sum_{m=1}^{m_0}2(j-i+2m)\,\mu_{2(i-m)}\,\mu_{2(j+m)}\\
    &=-\frac{2i(q-j)}{q}\,\mu_{2i}\mu_{2j}
    +\sum_{\ell=\ell_0}^{i-1\ \mathrm{or}\ j}2(i+j-2\ell)\,\mu_{2\ell}\,
    \mu_{2(i+j-\ell)},
  \end{aligned}
\end{equation*}
where \(\mu_0=1\), \(\mu_2=0\), \(m_0=\mathrm{min}\{i,\,q-j\}\), 
and  \(\ell_0=\mathrm{max}\{0,i+j-q\}\). 
\end{lemma}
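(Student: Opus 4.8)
The plan is to read the entries of \(V\) directly off its defining equation \((\ref{V})\), specialized to \(e=2\). Here \(f(x,y)=y^2-p(x)\) with \(p(x)=\sum_{k=0}^{q}\mu_{2k}x^{q-k}\) (\(\mu_0=1\), \(\mu_2=0\)), so \(f_1=-p'(x)\), \(f_2=2y\), and \(\mathbb{Q}[\mu][x,y]/(f_1,f_2)=\mathbb{Q}[\mu][x]/(p'(x))\) with basis \(M_i=x^{i-1}\) (\(1\le i\le 2g=q-1\)). As already computed in the proof giving \((\ref{hyperell_H_ij})\), for \(e=2\) one has \(H=-2\,\frac{p'(x)-p'(z)}{x-z}\), independent of \(y,w\); write \(B=\frac{p'(x)-p'(z)}{x-z}\). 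Since \(y=0\) in the quotient, \(f(x,y)H\equiv 2p(x)B\). Comparing with \(\tp{M}(x,y)\,V\,M(z,w)\), and recalling that \(V\) is recorded in the reversed basis \(\check{M}\) (the \(\check{M}\) of \((\ref{def_T})\), i.e. the ``rows reversed'' normalization), the entry \(V_{ij}\) is the coefficient of \(x^{2g-i}z^{2g-j}\) in the reduction of \(2p(x)B\) modulo \((p'(x),p'(z))\).

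First I would note that \(B\) has \(z\)-degree \(\le q-2\), so no reduction is needed in \(z\); extracting the coefficient of \(z^{2g-j}\) leaves the one-variable problem of reducing \(2p(x)q_j(x)\) modulo \(p'(x)\), where \(q_j(x)=\sum_{l=0}^{j-1}(q-l)\mu_{2l}x^{j-1-l}\) is the top part of \(p'\). The \(1/q\) in the first term of the formula comes from the Euler decomposition \(p(x)=\tfrac1q xp'(x)+\tilde p(x)\), \(\tilde p(x)=\tfrac1q\sum_s s\mu_{2s}x^{q-s}\): since \(\tfrac1q xp'(x)q_j(x)\) carries a factor \(p'(x)\) it dies modulo \(p'(x)\), so \(2p(x)q_j\equiv 2\tilde p(x)q_j\). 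Equivalently, in the two-variable picture the identity \((x-z)B=p'(x)-p'(z)\equiv 0\) gives \(x^kB\equiv z^kB\), which simultaneously forces the symmetry of \(V\) and lets one transfer excess powers between \(x\) and \(z\).

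The cleanest way to finish is a generating-function verification. I would set \(\Phi(x,z)=\sum_{i,j}V_{ij}\,x^{2g-i}z^{2g-j}\) using the claimed formula and prove \(\Phi\equiv 2p(x)B\) modulo \((p'(x),p'(z))\); uniqueness of the expansion in the basis \(\{x^{2g-i}\}\) then identifies the \(V_{ij}\). The first term contributes the exact factorization \(\Phi_1=-\tfrac2q P_1(x)P_2(z)\) with \(P_1(x)=\sum_i i\mu_{2i}x^{2g-i}\equiv q\,\frac{p(x)-p(0)}{x}\pmod{p'(x)}\) and \(P_2(z)=\sum_j(q-j)\mu_{2j}z^{2g-j}=p'(z)-qz^{2g}\equiv -qz^{2g}\). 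For the sum term \(\Phi_2\), collecting monomials with \(i+j\) fixed turns the inner summation into a geometric series \(\sum_i(z/x)^i\), whose denominator \(1-z/x\) regenerates the factor \(\frac{1}{x-z}\) of \(B\) while its two boundary terms recombine, via the identities for \(P_1,P_2\), into \(p(x)\) and the reduction remainder.

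The step I expect to be the main obstacle is exactly this collapse: showing that the high-degree part of \(2p(x)q_j\) reduces to the single monomial \(-\frac{2i(q-j)}{q}\mu_{2i}\mu_{2j}\) rather than to a spread of terms. Concretely this is the bookkeeping of the boundary terms of the geometric sum together with the index ranges \(\ell_0=\max(0,i+j-q)\) and \(m_0=\min(i,q-j)\), arranged so that everything off the stated support is annihilated by \(p'(x)\equiv 0\) and by the conventions \(\mu_0=1\), \(\mu_2=0\), \(\mu_{2k}=0\) for \(k\notin[0,q]\). A parallel route reuses material already in hand---compute \(T\) from \(-2q\,f(x,y)M(x,y)\equiv T\check{M}(x,y)\) and apply the explicit anti-diagonal \([H_{ij}]\) of \((\ref{hyperell_H_ij})\) through the relation \(V=(\text{rows reversed})\bigl(-\tfrac1{2q}T[H_{jk}]\bigr)\)---but it rests on the same reduction modulo \(p'(x)\), so I would expect the generating-function check to be the most economical.
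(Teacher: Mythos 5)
Your setup coincides with the paper's: both arguments reduce the determination of $V_{ij}$ to a one--variable division problem in $\mathbb{Q}[\mu][x]/(p'(x))$, since for $e=2$ one has $f\equiv -p(x)$ and $H\equiv -2\,\frac{p'(x)-p'(z)}{x-z}$ modulo the Jacobian ideal, no reduction in $z$ is needed, and extracting the $z$--coefficient leaves the task of reducing $2p(x)q_j(x)$ modulo $p'(x)$. Your Euler decomposition $p=\tfrac1q\,xp'+\tilde p$ is also exactly where the denominators $\tfrac1q$ come from (in the paper this appears as the constant term $\tfrac{2i}{q}\mu_{2i}$ of the quotient). The problem is that $\tilde p(x)q_j(x)$ still has degree up to $q+j-3$, so a genuine further reduction modulo $p'(x)$ remains to be done, and that is precisely where your argument stops: you yourself flag the ``collapse'' of the high--degree part onto the claimed coefficients as ``the main obstacle,'' and the generating--function identity $\Phi\equiv 2p(x)B$ is asserted as a plan rather than verified. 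As it stands, the proposal recasts the statement in a convenient form but does not prove it.

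The ingredient you are missing is the paper's enabling trick: it exhibits the quotient of the division of $p(x)M^{(i)}(x)$ by $p'(x)$ in closed form, namely $Q^{(i)}(x)=\sum_{m=1}^{i}2\mu_{2(i-m)}x^m+\tfrac{2i}{q}\mu_{2i}$ (here $M^{(i)}=2q_i$ in your notation), so that the remainder $V^{(i)}(x)=p(x)M^{(i)}(x)-p'(x)Q^{(i)}(x)$ is an explicit polynomial whose coefficients are matched term by term; the proof then checks four ranges of the exponent $k$ ($k\ge q$, $k=q-1$, $i-1<k<q-1$, $k<i-1$), the first two showing that the degree really drops below $q-1$ and the last two producing the two cases $m_0=i$ and $m_0=q-j$ of the stated sum. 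To complete your route you would need either to guess and verify such a quotient (equivalently, a closed form for the reduction of the excess powers $x^k$, $k\ge q-1$, against $p'(x)$), or to carry out in full the boundary--term bookkeeping of your geometric--series computation. Without one of these, the central identity --- that the reduction yields exactly $-\tfrac{2i(q-j)}{q}\mu_{2i}\mu_{2j}+\sum_m 2(j-i+2m)\mu_{2(i-m)}\mu_{2(j+m)}$ and nothing else --- remains unproved.
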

%% \vskip 5pt
\begin{proof} % ------------------------------ (taken from JCE's proof)
First of all, assuming the first equality, we show the second equality. 
To change the first expression to the second with summation to \(i-1\), 
we use the substitution \(\ell=i-m\). 
It is obvious that the second equality with summation to \(i-1\) is equal to 
one with summation to \(j\) for \(i=j\), \(j+1\). 
For the case of \(i<j\), the difference of the two is expressed as 
\begin{equation*}
\sum_{\ell=i}^{j}2(i+j-2\ell)\,\mu_{2\ell}\,\mu_{2(i+j-\ell)}
=-\sum_{\ell'=i}^{j}2(i+j-2\ell')\,\mu_{2(i+j-\ell')}\,\mu_{2\ell'}
\end{equation*}
setting \(\ell'=i+j-\ell\), it is clear that this vanishes.  We see
the case \(j<i\) in a similar way.  The matrix \(V=[V_{2i-2,\,2j+2}]\) is
symmetric by definition.  However, if the Lemma is proved, we see this
directly, by subtracting the term for \(\ell=j\) from the first term.
Now, noting that in the hyperelliptic case, 
the \(M_{2j-2}(X,Y)\) are independent of \(Y\), 
we define \(M^{(2i-2)}=M^{(2i-2)}(X)\in\mathbb{Z}[\mu][X]\) 
by using \([H_{2i-q,\,2j-q}]\) of
(\ref{hyperell_H_ij}):
\begin{equation*}
M^{(2i-2)}(X)=\sum_{j=1}^{q-1}H_{2i-q,\,2j-q}M_{2q-2-2j}(X,Y)
=\sum_{m=0}^{i-1}2(q+1+m-i)\,\mu_{2(i-m-1)}\,X^m.
\end{equation*}
While we are treating \(f(X,Y)=Y^2-p_2(X)\), 
we denote \(p_2(X)\) by \(p(X)\) in this proof, 
for a less cumbersome notation.
\par 
Since \(f_1=p'(X)\), \(f_2=2Y\), we see \(\mathbb{Q}[\mu][X,Y]/(f_1,f_2)\) 
is isomorphic to \(\mathbb{Q}[\mu][X]/(p'(X))\). 
So that, it suffices to know explicitly the
residue \(V^{(i)}=V^{(i)}(X)\) of degree less than \(q-1\) of 
the division of \(p(X)\,M^{(2i-2)}(X)\) by \(p'(X)\) for \(1\leq i\leq q-1\).
The key to this proof is that we actually know the quotient
\(Q^{(2i)}=Q^{(2i)}(X)\in\mathbb{Q}[\mu][X]\), 
as well as \(V^{(2q+2i-2)}=V^{(2q+2i-2)}(X)\) defined below,
of this division!  
Namely, we will show that, if we define functions
\begin{equation*}
Q^{(2i)}(X)=\sum_{m=1}^{i}2\mu_{2(i-m)}\,X^m+\frac{2i}{q}\,\mu_{2i},
\end{equation*}
then the expression
\begin{equation}\label{V_residue}
V^{(2q+2i-2)}(X)=p(X)\,M^{(2i-2)}(X)-p'(x)\,Q^{(2i)}(X)
\end{equation}
is of degree less than \(q-1\).  
Moreover, we can calculate all the terms of \(V^{(2q+2i-2)}\) explicitly, 
which are no other than the \(V_{2i-2,\,2j-2}\)'s. 
\par
Let us start to calculate each term of \(X^k\) of the right hand side of 
(\ref{V_residue})
for any \(k\geq0\). We divide the calculation into four cases. 
\\
(i) The case \(k\geq q\). 
In this case, \(M^{(2i-2)}\) has terms only up to \(X^{i-1}\) 
(\(i-1\leq q-2<q\leq k\)), 
and \(p(X)\) has terms up to \(X^{q}\) (\(q\leq k\)). 
Therefore, we find that the coefficient \(C_{2k}\) 
of \(X^k\) in \(M^{(2i-2)}(X)\,p(X)\) is given by
\begin{equation*}
  \begin{aligned}
    &C_{2k} =\sum_{m=k-q}^{i-1}2(q+1-i+m)\,\mu_{2(i-1-m)}\,\mu_{2(q-k+m)}\\
    &=\sum_{m'=q-1-k}^{i}2(k-m'+1)\,\mu_{2(q-1-k+m')}\,\mu_{2(i-m')},
  \end{aligned}
\end{equation*}
where we have changed the summation index by \(q-k+m=i-m'\). 
On the other hand, \(Q^{(2i)}\) has terms up to \(X^{i}\) 
(\(i\leq q-1<q\leq k\)), 
and \(p'(X)\) has terms up to \(X^{q-1}\) (\(q-1<k\)), so we see
\begin{equation*}
\mbox{\lq\lq\,Coeff. of \,\(X^k\)\, in \(Q^{(i)}(X)\,p'(X)\)\,''} %%%%%%%%%%
    =\!\!\!\sum_{m=k-q+1}^{i}\!\!2\,\mu_{2(i-m)}\,\mu_{2(q-1-k+m)}(k-m+1).
\end{equation*}
So the right hand side of (\ref{V_residue}) has no term
in \(X^k\) for \(k\geq q\). \\
(ii) The case \(k=q-1\). %%%%%%%%%%%%%%%%%
Since \(M^{(2i-2)}\) has terms only up to \(X^{i-1}\)
(\(i-1\leq q-2<q-1=k\)), we see that
\begin{equation*}
  \begin{aligned}
    &\mbox{\lq\lq\,Coeff. of \,\(X^k\)\, in \(M^{(2i-2)}(X)\,p(X)\)\,''}
     =\sum_{m=0}^{i-1}2(q+1-i+m)\,\mu_{2(i-1-m)}\,\mu_{2(1+m)}\\
    &=\sum_{m'=0}^{i-1}2(q-m')\,\mu_{2m'}\,\mu_{2(i-m')}
     =\sum_{m'=1}^{i-1}2(q-m')\,\mu_{2m'}\,\mu_{2(i-m')}+2q\,\mu_0\mu_{2i},
  \end{aligned}
\end{equation*}
where we have changed the index of summation by \(m+1=i-m'\). 
In this case  \(Q^{(2i-2)}\) has terms up to \(X^{i}\) (\(i\leq q-1=k\)), 
and \(p'(X)\) has terms up to \(X^{q-1}\) (\(q=k\)), 
we have that the coefficient \(C_{2k}\) of \(X^k\) 
in \(Q^{(2i)}(X)\,p'(X)\) is given by
\begin{equation*}
  \begin{aligned}
    C_{2k}&=\sum_{m=1}^i2\,\mu_{2(i-m)}\,\mu_{2m}(q-m)+\frac{2i}{q}
    \mu_{2i}\,\mu_0\,q\\
    &=\sum_{m=1}^{i-1}2\,\mu_{2(i-m)}\,\mu_{2m}(q-m)+2(q-i)\,\mu_0\,\mu_{2i}
    +2i\,\mu_{2i}\,\mu_0.
  \end{aligned}
\end{equation*}
So the right hand side of (\ref{V_residue}) has no term in \(X^{q-1}\). \\
%%%%%%%%%%%%%%
(iii) The case \(i-1<k<q-1\). 
Since \,\(M^{(2i-2)}\)\, has terms only up to \,\(X^{i-1}\), 
we see that the coefficient \,\(D_{2k}\)\, of \,\(X^k\)\, in
\,\(M^{(2i-2)}(X)\,p(X)\)\, is given by
\begin{equation*}
\begin{aligned}
D_{2k}&=\sum_{m=0}^{i-1}2(q+1+m-i)\,\mu_{2(i-m-1)}\,\mu_{2(q-k+m)}\\
   &=\sum_{m=1}^i2(q+m-i)\,\mu_{2(i-m)}\,\mu_{2(q-1-k+m)}
\end{aligned}
\end{equation*}
by rewriting \(m\) as \(m-1\).  
On the other hand, the coefficient
\(C_{2k}\) of \(X^k\) in \(Q^{(2i)}(X)\,p'(X)\) is
\begin{equation*}
C_{2k}=\frac{2i}{q}\mu_{2i}\,(k+1)\mu_{q-1-k}+\sum_{m=1}^{i}2\,\mu_{2(i-m)}
\,(k-m-1)\mu_{2(q-1-k+m)}. 
\end{equation*}
So the coefficient of \,\(X^k\)\, in the right hand side of
(\ref{V_residue}) is
\begin{equation*}
\sum_{m=1}^{i}2(q-1-k+2m-i)\,\mu_{2(i-m)}\,\mu_{2(q-1-k+m)}
    +\frac{2i}{q}(k+1)\mu_{2i}\,\mu_{2(q-1-k)}
\end{equation*}
and \(V_{2i-2,\,2j-2}\), which is no other than the value of this at
\(k=q-1-j\), is given by
\begin{equation*}
\sum_{m=1}^{i}2(j+2m-i)\,\mu_{2(i-m)}\,\mu_{2(j+m)}
    +\frac{2i}{q}(q-j)\mu_{2i}\,\mu_{2j}
\end{equation*}
as desired since \(i<k+1=q-j\).
\\
(iv) The case \(k<i-1\). 
Since \(M^{(2i-2)}\) has terms up to \(X^{i-1}\), of higher degree than
\(X^k\), we see that the coefficient \(D_{2k}\) of \,\(X^k\)\, in
\(M^{(2i-2)}(X)\,p(X)\) is given by
\begin{equation*}
\begin{aligned}
D_{2k}&=\sum_{m=0}^k2(q+1+m-i)\,\mu_{2(i-m-1)}\,\mu_{2(q-k+m)}\\
   &=\sum_{m=1}^{k+1}2(q+m-i)\,\mu_{2(i-m)}\,\mu_{2(q-1-k+m)}
\end{aligned}
\end{equation*}
on replacing the summation index \(m\) by \(m+1\). 
Similarly, \(Q^{(2i)}\) has terms up to \(X^i\) exceeding \(X^k\) again, and
\begin{equation*}
  \begin{aligned}
    &\mbox{\lq\lq\,Coeff. of \,\(X^k\)\, in \(p'(X)\,Q^{(2i)}(X)\)\,''}\\
    &=\frac{2i}{q}\mu_{2i}\,\mu_{q-1-k}+\sum_{m=1}^{k+1}2\,\mu_{2(i-m)}
    \,(k-m-1)\mu_{2(q-1-k+m)}
  \end{aligned}
\end{equation*}
with an extra term for \(m=k+1\) which is zero.  
So the coefficient of \,\(X^k\)\, in the right hand side 
of (\ref{V_residue}) is
\begin{equation*}
\sum_{m=1}^{k+1}2(q-1-k+2m-i)\,\mu_{2(i-m)}\,\mu_{2(q-1-k+m)}
    +\frac{2i}{q}(k+1)\mu_{2i}\,\mu_{2(q-1-k)}
\end{equation*}
and then \(V_{ij}\), 
which is no other than the value of this at \(k=q-1-j\), is given by
\begin{equation*}
V_{2i-2,\,2j-2}=\sum_{m=1}^{q-j}2(j+2m-i)\,\mu_{2(i-m)}\,\mu_{2(j+m)}
    +\frac{2i}{q}(q-j)\mu_{2i}\,\mu_{2j}
\end{equation*}
as desired since \(q-j=k+1\leq i\).
\end{proof}
\vskip 8pt Secondly, we give the values
\(L_{v_j}(\log\varDelta)=\frac{\,L_{v_j}\varDelta\,}{\varDelta}\) for
the case \((e,q)=(2,q)\).
\begin{lemma}\label{hyp_disc_tangent} 
If \(e=2\), then we have \ 
\(L_{2j}(\varDelta)=-2(q-j)(q-1-j)\mu_{2j}\,\varDelta\) \ 
for \(0\leq j\leq q-2\). %, \(q-2\), \(\cdots\), \(2\), \(1\). 
\end{lemma}
%% \vskip 5pt
\begin{proof} %
Since the Hessian of \(f(X,Y)=Y^2-p_2(X)\) is 
\begin{equation*}
%{\small%%%
\begin{aligned}
\bigg|\,
\begin{matrix}
-{p_2}''(X) & 0 \\
0          & 2 
\end{matrix}\,
\bigg|
&=-2\,{p_2}''(X)\\
&=-2\big(q(q-1)X^{q-2}+(q-2)(q-3)\mu_4X^{q-4}+\cdots+2\cdot1\,\mu_{q-3}\big),
\end{aligned}
%}% END of small
\end{equation*}
this lemma follows from (\ref{hessian_formula}). 
\end{proof}
\vskip 10pt
%%%%%%%%%%%%%%%%%%%%%%%%%%%%%%%%%%%%%%%%%%%%%%%%%%%%%%%%%%%%%%%%%%%%%%%%%%%%
\subsection{Heat equations for the \texorpdfstring{\((2,3)\)}{Lg}-curve}%%%%
\label{section_23-curve}%%%%%%%%%%%%%%%%%%%%%%%%%%%%%%%%%%%%%%%%%%%%%%%%%%%%
%%%%%%%%%%%%%%%%%%%%%%%%%%%%%%%%%%%%%%%%%%%%%%%%%%%%%%%%%%%%%%%%%%%%%%%%%%%%
In this section we recall Weierstrass' result which gives 
a recursive relation for the coefficients of 
the power series expansion of his sigma function at the origin.  
We refer the reader to (12) and (13) in p.\ 314 of \cite{fs_1882} also. 
Here we derive Weierstrass' result by following the
method of \cite{bl_2008}, namely, following the theory described in
previous sections, but without using the general results \ref{hyp_T-matrix} 
and \ref{hyp_disc_tangent}, in order to demonstrate the ideas of the theory. 

Weierstrass' original method is explained in \cite{weierstrass_1882}
and some explanation of it is available in \cite{onishi_2015b}.  
It is easy to get \(L_0\) and \(L_2\):
\begin{equation}\label{23L_operators}
V=\Bigg[\ \begin{matrix}
4\mu_4 & 6\mu_6\\[3pt]
6\mu_6 & -\frac43{\mu_4}^2
\end{matrix}\ \Bigg], \ \ \mbox{and}\ \ 
\left\{\ 
\begin{aligned}
L_0&=4{\mu_4}\frac{\partial}{\partial\mu_4}+6{\mu_6}\frac{\partial}
{\partial\mu_6}\\
L_2&=6{\mu_6}\frac{\partial}{\partial\mu_4}-\frac43{\mu_4}^2
\frac{\partial}{\partial\mu_6}.
\end{aligned}
\right.
\end{equation}
In this case, we see \,\(V=T\)\, since
\begin{equation*}
\mathrm{ph}\big((x,y),(z,w)\big)=-6x-6z=[\,x\ \ 1\,]
\bigg[\,
  \begin{matrix}
     & -6 \\
  -6 &  
  \end{matrix}\,
\bigg]\,
\bigg[
\begin{matrix}
z \\  1
\end{matrix}
\bigg].
\end{equation*}
Then \(\varDelta=2^2\cdot3\cdot\det(V)\) (see \ref{def_disc_23-curve}). 
The differential forms
\begin{equation*}
  \omega_1=\frac{dx}{2y}, \ \ 
  \eta_{-1}=\frac{xdx}{2y} 
\end{equation*}
form a symplectic basis of 
\(H_{\mathrm{dR}}^1(\mathscr{C}/\mathbb{Q}[\mu])\).  
We have \(\vec{\omega}=(\,\omega_1, \ \eta_{-1}\,)\). 
Bearing in mind Lemma \ref{chevalley}, we proceed by using 
\(x^{-\frac12}\) as the local parameter satisfying 
\(\frac{\partial}{\partial\mu_j}x=0\) for \(j=4\), \(6\), 
and we compute the matrix \(\Gamma\) as follows. 
Using \(f(x,y)=0\), we see \(2y\frac{\partial}{\partial\mu_4}y=x\) and 
\(2y\frac{\partial}{\partial\mu_6}y=1\), so that
\begin{equation*}
  \frac{\partial}{\partial\mu_4}y=\frac{x}{2y}, \ \ \ 
  \frac{\partial}{\partial\mu_6}y=\frac{1}{2y}.
\end{equation*}
Therefore, we have
{\small %%%
\begin{equation}\label{exact00}
  \frac{\partial}{\partial\mu_6}\omega_1=-\frac{1}{4y^3}dx, \ \ \ 
  \frac{\partial}{\partial\mu_4}\omega_1
  =\frac{\partial}{\partial\mu_6}
  \eta_{-1}=-\frac{x}{4y^3}dx, \ \ \ 
  \frac{\partial}{\partial\mu_4}\eta_{-1}=-\frac{x^2}{4y^3}dx,
\end{equation}
}% END of small
and
{\small %%%
\begin{align}
  d\Big(\frac{1}{y}\Big)\label{exact0}
  &=-\frac1{y^2}dy
    =-\frac1{y^2}\frac{dy}{dx}dx
    =-\frac1{y^2}\,\frac{3x^2+\mu_4}{2y}\,dx
   =6\frac{\partial}{\partial\mu_4}\,\eta_{-1}+2\mu_4\frac{\partial}
    {\partial\mu_6}\,\omega_1,\notag\\
  d\Big(\frac{x}{y}\Big)
  &=\frac{ydx-xdy}{y^2}
    =\frac{y-x\frac{dy}{dx}}{y^2}\,dx
    =\frac{y-x\frac{3x^2+\mu_4}{2y}}{y^2}\,dx\\
  &=\frac{y-\frac{3y^2-2\mu_4x-3\mu_6}{2y}}{y^2}\,dx %\notag \\
   =-\omega_1-4\mu_4\frac{\partial}{\partial\mu_4}\omega_1
    -6\mu_6\frac{\partial}{\partial\mu_6}\omega_1 
   =-\omega_1-L_0\omega_1,\notag \\ 
  d\Big(\frac{x^2}{y}\Big)
  &=\frac{2xydx-x^2dy}{y^2}
    =\frac{2xy-x^2\frac{dy}{dx}}{y^2}\,dx
    =\frac{2xy-x^2\frac{3x^2+\mu_4}{2y}}{y^2}\,dx  \\
  &=\frac{2xy-\frac{3x(y^2-\mu_4x-\mu_6)+\mu_4\,x^2}{2y}}{y^2}\,dx
    =\frac{xdx}{2y}+\frac{\mu_4x^2}{y^3}dx+\frac{3}{2}\frac{\mu_6x}{y^3}dx \notag\\
  &=\eta_{-1}-L_0\eta_{-1} \notag \\ 
  &=\eta_{-1}+\frac43{\mu_4}^2\frac{\partial}{\partial\mu_6}\omega_1
    -6\mu_6\frac{\partial}{\partial\mu_4}\omega_1-\frac23\mu_4d\Big(\frac{1}{y}\Big)  \ \ \ \ 
    \mbox{(\ by (\ref{exact0})\ )} \notag \\ 
  &= \eta_{-1}-L_2\omega_1-\frac23\mu_4d\Big(\frac{1}{y}\Big). \notag
\end{align}
}\relax % END of small
Accordingly, we see
\begin{align*}
  L_2\eta_{-1} &=6\mu_6\frac{\partial}{\partial\mu_4}\eta_{-1}
  -\frac43{\mu_4}^2\frac{\partial}{\partial\mu_6}\eta_{-1}\\
  &=-2\mu_6\mu_4\frac{\partial}{\partial\mu_6}\omega_1
  -\frac43{\mu_4}^2\frac{\partial}{\partial\mu_4}\omega_1
  +\mu_6\,d\Big(\frac1{y}\Big)\
  \ \ \ \
  \mbox{(\ by (\ref{exact00}) and (\ref{exact0})\ )}\\
  &=-\frac{\mu_4}{3}\Big(6\mu_6\frac{\partial}{\partial\mu_6}\omega_1
  +4\mu_4\frac{\partial}{\partial\mu_4}\omega_1\Big)
  +\mu_6\,d\Big(\frac1{y}\Big)\\
  &=-\frac{\mu_4}{3}L_0\omega_1+\mu_6\,d\Big(\frac1{y}\Big)
  =\frac{\mu_4}{3}\omega_1+\mu_6\,d\,\Big(\frac1{y}\Big).
\end{align*}
Summarising these results, we have on
\(H_{\mathrm{dR}}^1(\mathscr{C}/\mathbb{Q}[\mu])\) 
that
\begin{equation*}
  L_0\tp{\vec{\omega}}=\Gamma_0\tp{\vec{\omega}}, \ \ 
  L_2\tp{\vec{\omega}}=\Gamma_2\tp{\vec{\omega}}, \ \ 
  \mbox{where}\ \ 
  \Gamma_0
  =\bigg[\,
\begin{matrix}
-1 &  \\  & 1
\end{matrix}\, \bigg], \ \ \Gamma_2 =\bigg[\,
\begin{matrix}
& 1 \\ \frac{\mu_4}3 & 
\end{matrix}
\,\bigg].
\end{equation*}
Note that, by these equations, we have 
\(L_j\varOmega=\Gamma_j\varOmega\) with 
\(\varOmega=\bigg[\begin{matrix}\omega' & \omega''\\\eta' & \eta''
\end{matrix}\bigg]\) as in (\ref{ell_and_Gamma}). 
Since \((L_0,\,L_2)(\det(T))=(12,\,0)\det(T)\) 
(by \ref{hyp_disc_tangent}), 
\(\varDelta=2^2\cdot3\cdot\det(T)\), 
and (\ref{Gamma_of_L_0}), we have arrived at
{\small %%%
\begin{equation}\label{heat23}
\begin{aligned}
  (L_0-H_0)\sigma(u) &=\left(4{\mu_4}\frac{\partial}{\partial\mu_4}
    +6{\mu_6}\frac{\partial}{\partial\mu_6}
    -u\frac{\partial}{\partial u}+1\,\right)\varphi(u)=0,\\
  (L_2-H_2)\sigma(u) &=\left(6{\mu_6}\frac{\partial}{\partial\mu_4}
    -\frac43{\mu_4}^2\frac{\partial}{\partial\mu_6}
    -\frac12\frac{\partial^2}{\partial u^2}
    +\frac16{\mu_4}u^2\,\right)\varphi(u)=0,
\end{aligned}
\end{equation}
}% END of small
where \ \(H_j=H^{L_j}+\tfrac18L_j\log\varDelta\) \ for \ \(j=0\) and
\(2\).  From the first of (\ref{heat23}) and the conditions
\textbf{IC1}, \textbf{IC2}, the solution function is of the form
\begin{equation*}
\varphi(u)=u\sum_{n_4,n_6\geqq0}b(n_4,n_6)
\frac{(\mu_4u^4)^{n_4}(\mu_6u^6)^{n_6}}{(1+4n_4+6n_6)!}.
\end{equation*}
Using the second equation we then have a recurrence relation
\begin{equation}
\begin{aligned}\label{recurrence}
b(&n_4,n_6)=\tfrac23(4n_4+6n_6-1)
(2n_4+3n_6-1)b(n_4-1,n_6)\\
&\qquad  -\tfrac{8}{3}(n_6+1)b(n_4-2,n_6+1) +12(n_4+1)b(n_4+1,n_6-1)
\end{aligned}
\end{equation}
with \ \(b(n_4,n_6)=0\) \ if \ \(n_4<0\) \ or \ \(n_6<0\).  Since the
term \(b(n_4,n_6)\) on the left hand side has weight \(4n_4+6n_6\),
and the terms \(b(i,j)\) on the right hand side have weight smaller
than this, all terms may be found from (\ref{recurrence}).  Therefore,
any solution of (\ref{heat23}) is a constant times the function
\begin{equation*}
  \varphi(u)=\sigma(u)=u 
  +2\mu_4\tfrac{u^5}{5!}
  +24\mu_6\tfrac{u^7}{7!}
  -36\mu_4^2\tfrac{u^9}{9!}
  -288\mu_4\mu_6\tfrac{u^{11}}{11!}
  +\cdots.
\end{equation*}
%% \vskip 5pt
\newpage
%%%%%%%%%%%%%%%%%%%%%%%%%%%%%%%%%%%%%%%%%%%%%%%%%%%%%%%%%%%%%%%%%%%%%%%%%%%%%%
\subsection{Heat equations for the \texorpdfstring{\((2,5)\)}{Lg}-curve}%%%%%%
\label{section_25-curve}%%%%%%%%%%%%%%%%%%%%%%%%%%%%%%%%%%%%%%%%%%%%%%%%%%%%%%
%%%%%%%%%%%%%%%%%%%%%%%%%%%%%%%%%%%%%%%%%%%%%%%%%%%%%%%%%%%%%%%%%%%%%%%%%%%%%%
In this section, we list the analogous results for the heat equations
for the curve
\begin{equation*}
\mathscr{C}_{{\mu}}\, :\, y^2=x^5+\mu_4x^3+\mu_6x^2+\mu_8x+\mu_{10}.
\end{equation*}
We note here that our results correct a sign in \cite{bl_2005}; 
the overall constant \(\frac1{80}\) at the \(4\)th line from bottom in
page 68 of \cite{bl_2005} should be \(-\frac1{80}\).  
Here we give the Hurwitz series version of the algorithm. 
Now, we take a usual symplectic basis of differentials
\begin{equation*}
  \omega_3=\frac{1}{2y}dx,\ \ 
  \omega_1=\frac{x}{2y}dx,\ \ 
  \eta_{-3}=\frac{3x^3+\mu_4x}{2y}dx,\ \ 
  \eta_{-1}=\frac{x^2}{2y}dx
\end{equation*}
of \(H_{\mathrm{dR}}^1(\mathscr{C}/\mathbb{Q}[\mu])\).  
The matrix \(V\) for this case is given by
\begin{equation*}
  V=
  \left[\,\begin{matrix}
      4\mu_{4} &
      6\mu_{6} &
      8\mu_{8} &
      10\mu_{10} \\
      6\mu_{6} &
      -\tfrac{12}{5}\mu_{4}^2+8\mu_{8} &
      -\tfrac85\mu_{4}\mu_{6}+10\mu_{10} &
      -\tfrac45\mu_{4}\mu_{8} \\
      8\mu_{8} &
      -\tfrac85\mu_{4}\mu_{6}+10\mu_{10} &
      -\tfrac{12}5\mu_{6}^2+4\mu_{4}\mu_{8} &
      6\mu_{4}\mu_{10}-\tfrac65\mu_{6}\mu_{8} \\
      10\mu_{10} &
      -\tfrac45\mu_{4}\mu_{8} &
      6\mu_{4}\mu_{10}-\tfrac65\mu_{6}\mu_{8} &
      4\mu_{10}\mu_{6}-\tfrac85\mu_{8}^2
\end{matrix}\,\right]. 
\end{equation*}
A calculation by \texttt{Maple} along \ref{def_discriminant} gives \(\varDelta=2^4\cdot5\cdot\det(V)\). 
The operators \(L_j\) are given by
\begin{equation*} \tp{[L_0 \ L_2 \ L_4 \ L_6]} =V\,
\tp{\bigg[
\frac{\partial}{\partial\mu_4}\ \
\frac{\partial}{\partial\mu_6}\ \ 
\frac{\partial}{\partial\mu_8}\ \
\frac{\partial}{\partial\mu_{10}}
\bigg]}.
\end{equation*}
While the authors have the explicit commutation relations of these \(L_i\), 
we shall not include these here because their 
explicit forms are not needed in this paper.
However, these commutators are all in the span of the \(L_i\). 
By \ref{hyp_disc_tangent}, we see that 
these \(L_j\)'s operate on the discriminant \(\varDelta\) as follows:
\begin{equation*}
[\,L_0\ \ L_2\ \ L_4\ \ L_6\,]\varDelta=[\,40\ \ \ 0\ \ 
12\,\mu_4\ \ 4\,\mu_6\,]\varDelta.
\end{equation*}
The representation matrices \(\Gamma_j\) for the \(L_j\) acting on 
\(H_{\mathrm{dR}}^1(\mathscr{C}/\mathbb{Q}[\mu])\) are 
{\small %%%
\begin{equation*}
\begin{aligned}
\Gamma_0&=
\left[\,
\begin{matrix}
-3 &    &   &   \\
   & -1 &   &   \\ 
   &    & 3 &   \\ 
   &    &   & 1
\end{matrix}\,
\right], \ \ \ 
\Gamma_2
=\left[\,
\begin{matrix}
                               & -1              &   &                  \\[1pt]
\tfrac45\,\mu_4                &                 &   & 1                \\[1pt]
\tfrac45\,{\mu_4}^{2}-3\mu_8 &                 &   & -\tfrac45\mu_4     \\[1pt]
                               & \tfrac35\,\mu_4 & 1 &   
\end{matrix}\,
\right],\\
\Gamma_4
&=\left[\,
\begin{matrix}
-\mu_4                           &                 &       & 1             \\[1pt]
\tfrac65\,\mu_6                  &                 & 1     &               \\[1pt] 
\tfrac65\,\mu_4\mu_6-6\mu_{10} & -\mu_8          & \mu_4 & -\tfrac65\mu_6  \\[1pt]
-\mu_8                           & \tfrac25\,\mu_6 &       & 
\end{matrix}\,
\right], \,
\Gamma_6
=\left[\,
\begin{matrix}
                     &                 & 1 &                  \\[1pt]
\tfrac35\,\mu_8      &                 &   &                  \\[1pt]
\tfrac35\,\mu_4\mu_8 & -2\,\mu_{10}    &   & -\tfrac35\,\mu_8 \\[1pt]
-2\,\mu_{10}         & \tfrac15\,\mu_8 &   &    
\end{matrix}\,
\right]. \\
\end{aligned}
\end{equation*}
}% END of small
Therefore, we find the following operators \(H_j\):
\begin{align*}
  H_0&=3{u_3}\frac{\partial}{\partial{u_3}}+{u_1}\frac{\partial}
  {\partial{u_1}}-3,\\
  H_2&=\frac12\frac{\partial^2}{\partial{u_1}^2}
  +{u_1}\frac{\partial}{\partial{u_3}}
  -\frac45{\mu_4}{u_3}\frac{\partial}{\partial{u_1}}
  -\frac3{10}{\mu_4}{u_1}^2
  -\Big(\frac32{\mu_8}-\frac25{\mu_4}^2\Big){u_3}^2,\\
  H_4&=\frac{\partial^2}{\partial{u_1}\partial{u_3}}
  -\frac65{\mu_6}{u_3}\frac{\partial}{\partial{u_1}}
  +{\mu_4}{u_3}\frac{\partial}{\partial{u_3}}
  -\frac15{\mu_6}{u_1}^2+{\mu_8}{u_1}{u_3}%\\
  +\Big(3{\mu_{10}}-\frac35{\mu_4}{\mu_6}\Big){u_3}^2\mathrlap{-\,\mu_4,}\\
  H_6&=\frac12\frac{\partial^2}{\partial{u_3}^2}
  -\frac35{\mu_8}{u_3}\frac{\partial}{\partial{u_1}}
  -\frac1{10}{\mu_8}{u_1}^2 +2{\mu_{10}}{u_3}{u_1}
  -\frac3{10}{\mu_8}{\mu_4}{u_3}^2 -\frac12{\mu_6}.
\end{align*}
By the equation \((L_0-H_0)\,\varphi(u)=0\) 
and the conditions \textbf{IC1} and \textbf{IC2}, 
the solution function must be of the form
\begin{equation*}
\begin{aligned}
  \varphi(u)=\sigma(u_3,u_1)&=\hskip -40pt
  \sum_{\substack{m,n_4,n_6,n_8,n_{10}\geq0\\3-3m+4n_4+6n_6+8n_8+10n_{10}\geq0}}
  \hskip -20pt
  \Big[\,b(m,n_4,n_6,n_8,n_{10})\\
  &\cdot\frac{{u_1}^3\Big(\dfrac{u_3}{{u_1}^3}\Big)^m
    \Big(\mu_4{u_1}^4\Big)^{n_4}
    \Big(\mu_6{u_1}^6\Big)^{n_6}
    \Big(\mu_8{u_1}^8\Big)^{n_8}
    \Big(\mu_{10}{u_1}^{10}\Big)^{n_{10}}}
  {m!\,(3-3m+4n_4+6n_6+8n_8+10n_{10})!}\ \text{\raise 5pt\hbox{\(\Bigg]\)}}.
\end{aligned}
\end{equation*}
Let \ \(k=3-3m+4n_4+6n_6+8n_8+10n_{10}\). 
Then the other heat equations \((L_j-H_j)\,\varphi(u)=0\)
imply the following recursion scheme: 
\vspace{-3pt}
\begin{equation*}
  b(m,n_4,n_6,n_8,n_{10})
  =\left\{
  \begin{aligned}
    \ B_2 & \ \ \ \mbox{(\,if \(k>1\) and \(m\geq0\)\,)} \\
    \ B_1 & \ \ \ \mbox{(\,if \(k=1\) and \(m>0\)\,)}     \\
    \ B_0 & \ \ \ \mbox{(\,if \(k=0\) and \(m>1\)\,)},
  \end{aligned}
\right.
\end{equation*}
\vspace{-3pt}
where the \(B_i\) are given by
\begin{align*}
     B_2=             20(n_8+1)\,&b(m  ,n_4  ,n_6  ,n_8+1,n_{10}-1)\\
                      +16(n_6+1)\,&b(m  ,n_4  ,n_6+1,n_8-1,n_{10}  )\\
                      +12(n_4+1)\,&b(m  ,n_4+1,n_6-1,n_8  ,n_{10}  )\\
             -\tfrac{24}5(n_6+1)\,&b(m  ,n_4-2,n_6+1,n_8  ,n_{10}  )\\
             +\tfrac35(k-3)(k-2)\,&b(m  ,n_4-1,n_6  ,n_8  ,n_{10}  )\\
             -\tfrac85(n_{10}+1)\,&b(m  ,n_4-1,n_6  ,n_8-1,n_{10}+1)\\
             -\tfrac{16}5(n_8+1)\,&b(m  ,n_4-1,n_6-1,n_8+1,n_{10}  )\\
                         -2(k-2)\,&b(m+1,n_4  ,n_6  ,n_8  ,n_{10}  )\\
                        -3m(m-1)\,&b(m-2,n_4  ,n_6  ,n_8-1,n_{10}  )\\
                 +\tfrac45m(m-1)\,&b(m-2,n_4-2,n_6  ,n_8  ,n_{10}  )\\
                      +\tfrac85m\,&b(m-1,n_4-1,n_6  ,n_8  ,n_{10}  ),\\
   B_1=               +10(n_6+1)\,&b(m-1,n_4  ,n_6+1,n_8  ,n_{10}-1)\\
             -\tfrac{12}5(n_8+1)\,&b(m-1,n_4  ,n_6-2,n_8+1,n_{10}  )\\
             -\tfrac65(n_{10}+1)\,&b(m-1,n_4  ,n_6-1,n_8-1,n_{10}+1)\\
                       +8(n_4+1)\,&b(m-1,n_4+1,n_6  ,n_8-1,n_{10}  )\\
-\tfrac15(5m{-}10{+}8n_6{-}20n_8{-}30n_{10})\,&b(m-1,n_4-1,n_6  ,n_8  ,n_{10}  )\\
                    -3(m-1)(m-2)\,&b(m-3,n_4  ,n_6  ,n_8  ,n_{10}-1)\\
             +\tfrac35(m-1)(m-2)\,&b(m-3,n_4-1,n_6-1,n_8  ,n_{10}  )\\
                  +\tfrac65(m-1)\,&b(m-2,n_4  ,n_6-1,n_8  ,n_{10}), \\
  B_0=\ \      -\tfrac{16}5(1+n_{10})\,&b(m-2,n_4  ,n_6  ,n_8-2,n_{10}+1)\\
          -\tfrac15(12n_8-40n_{10}-5)\,&b(m-2,n_4  ,n_6-1,n_8  ,n_{10}  )\\
                           +20(n_4+1)\,&b(m-2,n_4+1,n_6  ,n_8  ,n_{10}-1)\\
                           +12(n_8+1)\,&b(m-2,n_4-1,n_6  ,n_8+1,n_{10}-1)\\
                     -\tfrac85(n_6+1)\,&b(m-2,n_4-1,n_6+1,n_8-1,n_{10}  )\\
                  +\tfrac35(m-2)(m-3)\,&b(m-4,n_4-1,n_6  ,n_8-1,n_{10}  )\\
                       +\tfrac65(m-2)\,&b(m-3,n_4  ,n_6  ,n_8-1,n_{10}).
\end{align*}%
From these, we see that the expansion of \(\sigma(u)\) 
is Hurwitz integral over \(\mathbb{Z}[\tfrac15]\). %
\begin{remark}
{\rm %%%
Actually the above recurrence scheme is one of 
several possible recurrence relations.  
However, we see any such system gives the same 
solution space by the following argument. 
Here, of course, we suppose that  \(b(m, n_4, \cdots, n_{10})=0\)  
if  \(k\)  or any of the explicit arguments is negative. 
For any finite subset  
\(S\subset\{(m,\,n_4,\,\cdots,\,n_{10})\,|\,k,\,n_4,\,\cdots,\,n_{10}\geq 0\}\), 
we take the set \,\(E_S\)\, of relations \,\(h\)\,  
between  \,\(\{b(m,\,n_4,\,\cdots,\,n_{10})\}\)\,  
such that any  \,\(b(m,\,n_4,\,\cdots,\,n_{10})\)\, appears 
as a term in \(h\) provided 
that \((m,n_4,\cdots,n_{10})\in S\). 
For instance, if we consider
\begin{equation*}
S= \{ (1,0,0,0,0),\ (0,0,0,0,0),\ (0,1,0,0,0),\ 
 (1,1,0,0,0),  \ (2,1,0,0,0)\}, 
\end{equation*}
then  \(E_S\)  consists of the following \(4\) equations: 
\begin{equation*}
\begin{aligned}
  b(0,0,0,0,0)& =  -2b(1,0,0,0,0), \\
  b(0,1,0,0,0)& =  12b(0,0,0,0,0)-10b(1,1,0,0,0),\\
  b(1,1,0,0,0)& = \tfrac65 b(1,0,0,0,0)-4b(2,1,0,0,0)
  -\tfrac{16}5b(0,0,0,0,0),\\
  b(2,1,0,0,0)& = \tfrac15\cdot{0}\cdot{}b(1,0,0,0,0).
\end{aligned}
\end{equation*}
The solution space of such a system of linear equations \(E_S\)
is of dimension 1 or larger because we have at least one iteration
system as above whose solution space is of dimension \(1\).
Since \(E_S\)
is independent of the choice of recursion system, any recursion
system must include the same solution space of dimension \(1\).
} % End of \rm
\end{remark}
The first few terms of the sigma expansion are given as
follows (up to a constant multiple):
\begin{equation*}
\begin{aligned}
  \sigma({u_3}&,{u_1})={u_3}-2\frac{{u_1}^3}{3!}
  -4\mu_4\frac{{u_1}^7}{7!}  -2\mu_4\frac{{u_3}{u_1}^4}{4!}
  +64\mu_6\frac{{u_1}^9}{9!}  -8\mu_6\frac{{u_3}{u_1}^6}{6!}\\
  &-2\mu_6\frac{{u_3}^2{u_1}^3}{2!3!}
  +\mu_6\frac{{u_3}^3}{3!}
  +(1600\mu_8-408{\mu_4}^2)\frac{{u_1}^{11}}{11!}\\
  &-(4{\mu_4}^2+32\mu_8)\frac{{u_3}{u_1}^8}{8!}
  -8\mu_8\frac{{u_3}^2{u_1}^5}{2!5!}
  -2\mu_8\frac{{u_3}^3{u_1}^2}{3!2!}  +\cdots.
\end{aligned}
\end{equation*}
%% \vskip 5pt
\newpage
%%%%%%%%%%%%%%%%%%%%%%%%%%%%%%%%%%%%%%%%%%%%%%%%%%%%%%%%%%%%%%%%%%%%%%%%%%%
\subsection{The heat equations for the \texorpdfstring{\((2,7)\)}{Lg}-curve}%%%
\label{section_27-curve}%%%%%%%%%%%%%%%%%%%%%%%%%%%%%%%%%%%%%%%%%%%%%%%%%%%
%%%%%%%%%%%%%%%%%%%%%%%%%%%%%%%%%%%%%%%%%%%%%%%%%%%%%%%%%%%%%%%%%%%%%%%%%%%
\noindent
We take the hyperelliptic genus three curve \(\mathscr{C}\)
in the Weierstrass form
\[
y^2=f(x)=x^7+\mu_4x^5+\mu_6x^4+\mu_8x^3+\mu_{10}x^2+\mu_{12}x+\mu_{14}.
\]
The discriminant \(\varDelta\) of \(\mathscr{C}\) is the resultant 
of \(f\) and \(f_1\). 
It has \(320\) terms and is of weight \(84\). 
The matrix \(V\) is given by
{\footnotesize %%%
\begin{align*}
V&=\left[
\begin{array}{ccccccccc}
  4\,\mu_4     & 6\,\mu_6
  & 8\,\mu_8                                                            \\
  6\,\mu_6     & -\tfrac47\,\left(5\,{\mu_4}^2-14\,\mu_8 \right)
  & -\tfrac27\,\left(8\,\mu_6\mu_4 -35\,\mu_{10}\right)                   \\
  8\,\mu_8     & -\tfrac27\,\left(8\,\mu_6\mu_{{4}}-35\,\mu_{10}\right)
  & \tfrac47\,\left(21\,\mu_{12}-6\,{\mu_6}^{2}+7\,\mu_4\mu_8\right)    \\
  10\,\mu_{10} & -\frac{12}{7}\,\left(\mu_4\mu_8-7\,\mu_{12} \right)
  & \tfrac27\,\left(49\,\mu_{14}-9\,\mu_6\mu_8+21\,\mu_4\mu_{10}\right) \\
  12\,\mu_{12} & -\tfrac27\,\left(4\,\mu_4\mu_{10} -49\,\mu_{14}\right)
  & \tfrac47\,\left(14\,\mu_4\mu_{12}-3\,\mu_6\mu_{10}\right)           \\
  14\,\mu_{14} & -\tfrac47\,\mu_4\mu_{12}
  & \tfrac27\,\left(35\,\mu_4 \mu_{14} -3\,\mu_6\mu_{12} \right)          
\end{array}\right.\\
&\left.
\begin{array}{ccccccccc}
  & 10\,\mu_{10}                                                                   & 12\,\mu_{12}                                                                 \\ 
  & -\tfrac{12}{7}\,\left(\mu_4\mu_8-7\,\mu_{12}\right)                            &  -\tfrac27\,\left(4\,\mu_4\mu_{10}-49\,\mu_{14}\right)                       \\
  & \tfrac27\,\left(21\,\mu_4\mu_{10}-9\,\mu_6\mu_8+49\,\mu_{14}\right)            & \tfrac47\,\left(14\,\mu_4\mu_{12}-3\,\mu_6\mu_{10}\right)                    \\  
  & \tfrac47\,\left(7\,\mu_6\mu_{10} -6\,{\mu_8}^{2}+14\,\mu_4\mu_{12} \right)     & \tfrac27\,\left(21\,\mu_6\mu_{12}-8\,\mu_{10}\mu_8+35\,\mu_4\mu_{14}\right)  \\ 
  & \tfrac27\,\left(21\,\mu_6\mu_{12} -8\,\mu_{10}\mu_8 +35\,\mu_4\mu_{14}\right)  & \tfrac47\,\left(7\,\mu_{12}\mu_8- 5\,{\mu_{10}}^2+14\,\mu_6\mu_{14}\right)   \\ 
  & \tfrac{8}{7}\,\left(7\,\mu_6\mu_{14} - \mu_{12}\mu_8\right)                    & \tfrac27\, \left(21\,\mu_{14}\mu_8 -5\,\mu_{12}\mu_{10}\right)                 
\end{array}\right.\\
&\hskip 100pt\left.
\begin{array}{ccccccccc}
& 14\,\mu_{14}                                                   \\
&  -\tfrac47\,\mu_4\mu_{12}                                      \\
& \tfrac27\,\left(35\,\mu_4\mu_{14} -3\,\mu_6\mu_{12}\right)     \\
& \frac{8}{7}\,\left(7\,\mu_6\mu_{14} -\mu_{12}\mu_8 \right)     \\
& \tfrac27\,\left(21\,\mu_{14}\mu_8 -5\,\mu_{12}\mu_{10}\right)  \\
& \tfrac47\,\left(7 \,\mu_{14}\mu_{10} -3\,{\mu_{12}}^{2}\right) 
\end{array}\right].
\end{align*}
}% END of footnotesize
Here a calculation by \texttt{Maple} 
along \ref{def_discriminant} shows that \(\varDelta=2^6\cdot7\cdot\det(V)\). 
Then we have
\begin{equation*} \tp{[L_0 \ L_2 \ L_4 \ L_6 \ L_8 \ L_{10}]}=V
\tp{\bigg[\frac{\partial}{\partial\mu_4}\ \
  \frac{\partial}{\partial\mu_6}\ \ \frac{\partial}{\partial\mu_8}\ \
  \frac{\partial}{\partial\mu_{10}}\ \
  \frac{\partial}{\partial\mu_{12}}\ \
  \frac{\partial}{\partial\mu_{14}}\bigg]}.
\end{equation*}
Using \ref{hyp_disc_tangent}, 
their operation on \(\varDelta\) are given by
\begin{equation*}
  [L_0 \ L_2 \ L_4 \ L_6 \ L_8 \ L_{10}](\varDelta)=[84 \ \ 0 \ \
  40\mu_4 \ \ 24\mu_6 \ \ 12\mu_8 \ \  4\mu_{10}]\varDelta.
\end{equation*}
As for the \((2,5)\)-case, we have fundamental relations for these
\({L}_i\) as a set of generators of certain Lie algebra, which we do
not include here.  The symplectic basis of
\(H_{\mathrm{dR}}^1(\mathscr{C}/\mathbb{Q}[{\mu}])\) is
{\small %%%
\begin{equation*}
  \begin{aligned}
   &\omega_5=\frac{dx}{2y}, \ \ \ 
    \omega_3=\frac{xdx}{2y}, \ \ \ 
    \omega_1=\frac{x^2dx}{2y}, \ \
   \eta_{-5}=\frac{(5x^5+3\mu_4x^3+2\mu_6x^3+\mu_8x^2)dx}{2y}, \\
  &\eta_{-3}=\frac{(3x^4+\mu_4x^2)dx}{2y}, \ \ 
   \eta_{-1}=\frac{x^3dx}{2y}. 
  \end{aligned}
\end{equation*}
}% END of small
With respect to these, the matrices 
\(\Gamma_j
=\smash{\bigg[
\begin{array}{cc}
  -\beta_j & \alpha_j \\ 
  -\gamma_j & \tp{\beta_j}\end{array}
\bigg]}\) 
are given as follows:
{\small
 \begin{align*}
  \alpha_0 & = O, \ \ 
            \beta_0 = \left[
            \begin{array}{ccc}
              5 &   &   \\[-4pt]
                & 3 &   \\[-4pt]
                &   & 1 
            \end{array}
                      \right], \ \ 
                      \gamma_0=O, \\ % ------------
  \alpha_2&=\left[
            \begin{array}{ccc}
            &     &   \\
            &     &   \\
       \ \  & \ \ & 1
            \end{array}
            \right], \ 
 \beta_2=\left[
   \begin{array}{ccc}
                &        3       &      \\
 -\tfrac47\mu_4 &                &   1  \\
                & -\tfrac87\mu_4 &    
   \end{array}\right], 
\ \gamma_2 =\left[
 \begin{array}{ccc}
   -\tfrac17(4\mu_4\mu_8-35\mu_{12})
   &        &      \\
   & -\tfrac17(8{\mu_4}^2-21\mu_8) &  \\
            &          & -\tfrac57\mu_4 
\end{array}
\right],\\
\alpha_4 % ----------------------------
&=\left[%
\begin{array}{ccc}
       &   &   \\
       &   & 1 \\
 \ \   & 1 &   
\end{array}
\right], \ \ 
\beta_4
=\left[
\begin{array}{ccc}
  3\mu_4        &                  &   1  \\
 -\tfrac67\mu_6 & \mu_4            &      \\
                & -\frac{12}7\mu_6 &                   
\end{array}
\right], 
\ \ \gamma_4
=\left[
\begin{array}{ccc}
-\tfrac27(3\mu_6\mu_8-35\mu_{14}) & 3\mu_{12}                        &                \\
3\mu_{12}                         & -\tfrac67(2\mu_4\mu_6-7\mu_{10}) &  \mu_8         \\
                                  & \mu_8                            & -\tfrac47\mu_6   
\end{array}
\right], \\
\alpha_6 % -------------------------------------------------
&=\left[
\begin{array}{ccc}
  &   & 1 \\
  & 1 &   \\
1 &   &   
\end{array}
\right], \ 
\beta_6
=\left[
\begin{array}{ccc}
     2\mu_6     &       \mu_4    & \ \ \ \ \ \ \\
-\frac87\,\mu_8 &                &             \\
                & -\tfrac97\mu_8 &    
\end{array}
\right], \ 
\gamma_6
=\left[
\begin{array}{ccc}
3\mu_4\mu_{12}-\tfrac87{\mu_8}^2 & 6\mu_{14}                     & \mu_{12}      \\
6\mu_{14}                        & -\tfrac97\mu_4\mu_8+9\mu_{12} & 2\mu_{10}     \\
\mu_{12}                         & 2\mu_{10}                     & -\tfrac37\mu_8  
\end{array}
\right], \\
\alpha_{8} % --------------------------------------------------------
&=\left[
\begin{array}{ccc}
  & 1 &  \\
1 &   &  \\
  &   &
\end{array}
\right], \ \ 
\beta_8
=\left[
\begin{array}{ccc}
             \mu_8     &                   & \ \ \ \ \ \ \ \ \\
  -\tfrac{10}7\mu_{10} &                   &                 \\
            \mu_{12}   & -\tfrac67\mu_{10} &                  
\end{array}
\right], \\ 
&\gamma_8
=\left[
\begin{array}{ccc}
6\mu_4\mu_{14}+2\mu_6\mu_{12}-\tfrac{10}7\mu_8\mu_{10}
                    & \mu_4\mu_{12}                     & 2\mu_{14}         \\
    \mu_4\mu_{12}   & -\tfrac67\mu_4\mu_{10}+12\mu_{14} & 3\mu_{12}         \\
        2\mu_{14}   & 3\mu_{12}                         & -\tfrac27\mu_{10}
\end{array}
\right], \\
\alpha_{10}
& =
\left[
\begin{array}{ccc} 
1 & \ \ \ & \ \ \ \\
  &       &       \\
  &       & 
\end {array}
\right], \quad
\beta_{10} =  
\left[
\begin{array}{ccc} 
 \ \ \ \ \ \        &                    & \ \ \ \ \ \ \ \ \ \\
 -\frac57\,\mu_{12} &                    &                   \\
    2\,\mu_{14}     & -\frac37\,\mu_{12} & 
\end{array}
\right], \ 
\gamma_{10} =
\left[
\begin{array}{ccc}
  4\,\mu_6\mu_{14}-\frac57\,\mu_{8}\mu_{12} & 2\,\mu_4\mu_{14}        &
  \\
  2\,\mu_4\mu_{14} & -\frac37\,\mu_4\mu_{12} & 4\,\mu_{14}
  \\ & 4\,\mu_{14}             & -\frac17\,\mu_{12}
\end{array}
 \right]. 
 \end{align*}
}% END small
These give a set of heat equations \((L_j-H_j)\,\sigma(u)=0\) as before. 
%% \vskip 5pt
%%%%%%%%%%%%%%%%%%%%%%%%%%%%%%%%%%%%%%%%%%%%%%%%%%%%%%%%%%%%%%%%%%%%%%%%%%
\subsection{The sigma function for the \texorpdfstring{\((2,7)\)}{Lg}-curve }\ %%%%%%%%%%%%%
\label{section_27-sigma} %%%%%%%%%%%%%%%%%%%%%%%%%%%%%%%%%%%%%%%%%%%
%%%%%%%%%%%%%%%%%%%%%%%%%%%%%%%%%%%%%%%%%%%%%%%%%%%%%%%%%%%%%%%%%%%%%%
We now solve (\ref{heat_eq_for_sigma}) in the \((2,7)\) case. 
The initial conditions \textbf{IC1}, \textbf{IC2} of (\ref{heat_eq_for_sigma}) 
in this case are as follows:
\begin{equation*}
\varphi(u)\in\mathbb{Q}[{\mu}][[u_5,\,u_3,\,u_1]], \ 
\mbox{and \,\(\varphi(u)\)\, is of homogeneous weight \(6\).}
\end{equation*}
Following \cite{bl_2005} but in the Hurwitz series form as
\cite{weierstrass_1882}, % for the (2,7) case 
we write any solution \,\(\varphi(u)\)\, as
\begin{equation*}
\begin{aligned}
  \varphi&(u_5,u_3,u_1)=\sum_{\substack{{\ell},m,n_4,n_6,n_8,\\ n_{10},n_{12},n_{14}}}
  \!\Bigg[\,b({\ell},m,n_4,n_6,n_8,n_{10},n_{12},n_{14})\,\\
  &\cdot\left(\mu_4{u_1}^4\right)^{n_4}
  \left(\mu_6{u_1}^6\right)^{n_6} \left(\mu_8{u_1}^8\right)^{n_8}
  \left(\mu_{10}{u_1}^{10}\right)^{n_{10}}
  \left(\mu_{12}{u_1}^{12}\right)^{n_{12}}
  \left(\mu_{14}{u_1}^{14}\right)^{n_{14}}\\
  &\cdot\frac{{u_1}^6
    \Big(\dfrac{u_5}{{u_1}^5}\Big)^{\ell}
    \Big(\dfrac{u_3}{{u_1}^3}\Big)^m}
{\,(6-5{{\ell}}-3m+4n_4+6n_6+8n_8+10n_{10}+12n_{12}+14n_{14})!\,{{\ell}}!\,m!\,}\Bigg],
\end{aligned}
\end{equation*}
giving a solution of \((L_0-H_0)\,\varphi(u)=0\).  
If we define
\begin{equation*}
  k = 6-5{\ell}-3m+4n_4+6n_6+8n_8+10n_{10}+12n_{12}+14n_{14},
\end{equation*}
the above expression is rewritten as 
\begin{equation}\label{27_sigma_expansion}
\begin{aligned}
  \varphi(u_5,u_2,u_1)&=\sum
  b({\ell},m,n_4,n_6,n_8,n_{10},n_{12},n_{14})\\
  &\ \ \ \ \
  \cdot{\mu_4}^{n_4}\,{\mu_6}^{n_6}\,{\mu_8}^{n_8}\,{\mu_{10}}^{n_{10}}\,
  {\mu_{12}}^{n_{12}}\,{\mu_{14}}^{n_{14}}\,
  \smash{
        \frac{{u_5}^{{\ell}}}{{\ell}!}\, 
        \frac{{u_3}^m}{m!}\,
        \frac{{u_1}^k}{k!}
        },
\end{aligned}
\end{equation}
where we require all the integer indices \(k\), \({\ell}\), \(m\),
\(n_4\), \(n_6\), \(n_8\), \(n_{10}\), \(n_{12}\), \(n_{14}\) to be
non-negative.
\par
Note that the $u$-weight of this expression is
\(k_0=6+4n_4+6n_6+8n_8+10n_{10}+12n_{12}+14n_{14}\), 
which does not depend on \(\ell\) or \(m\). 
(Note also that \(k=k_0-5{\ell}-3m\)). 
For fixed \(n_4\), \(n_6\), \(n_8\), \(n_{10}\), \(n_{12}\),
\(n_{14}\geq0\), \(k_0\geq0\) is fixed, and for non-negative \(k\), we
require \({\ell}=0\), \(\dots\) , \(\lfloor (k_0+6)/5\rfloor\),
\(m=0\), \(\dots\)\,, \(\lfloor(6+k_0-5{\ell})/3 \rfloor\).
As noted above, if we insert this ansatz into the equation for
\((L_0-H_0)\varphi=0\), we get an expression which is identically zero,
for any set of \(b({\ell},m,n_4,n_6,n_8,n_{10},n_{12},n_{14})\).
\par
If we insert this ansatz into the expression for \((L_2-H_2)\varphi=0\), 
we get (after some algebra, and providing \(k>0\)) 
the recurrence relation shown below, involving \(20\) terms 
(compare the equations on p.68 of \cite{bl_2005} for the genus 2 case). 
We can structure this relation by the {\em weight} of 
each \(b\) coefficient of (\ref{27_sigma_expansion}) 
(more precisely by the weight of the corresponding 
term in the expansion).  
We will call this  \(P_2\) :
\begin{align*}
  -7\,&b({{\ell}},m,n_4,n_6,n_8,n_{10},n_{12},n_{14})\\
  +14(2-k)\,&b({{\ell}},m{+}1,n_4,n_6,n_8,n_{10},n_{12},n_{14})\\
  -42m\,&b({{\ell}}{+}1,m{-}1,n_4,n_6,n_8,n_{10},n_{12},n_{14})\\
  +196(n_{12}+1)\,&b({{\ell}},m,n_4,n_6,n_8,n_{10},n_{12}{+}1,n_{14}{-}1)\\
  +168(n_{10}+1)\,&b({{\ell}},m,n_4,n_6,n_8,n_{10}{+}1,n_{12}{-}1,n_{14})\\
  +140(n_8+1)\,&b({{\ell}},m,n_4,n_6,n_8{+}1,n_{10}{-}1,n_{12},n_{14})\\
  +112(n_6+1)\,&b({{\ell}},m,n_4,n_6{+}1,n_8{-}1,n_{10},n_{12},n_{14})\\
  -40(n_6+1)\,&b({{\ell}},m,n_4{-}2,n_6{+}1,n_8,n_{10},n_{12},n_{14})\\
  +5(3-k)(2-k)\,&b({{\ell}},m,n_4{-}1,n_6,n_8,n_{10},n_{12},n_{14})\\
  -8(n_{14}+1)\,&b({{\ell}},m,n_4{-}1,n_6,n_8,n_{10},n_{12}-1,n_{14}+1)\\
  -16(n_{12}+1)\,&b({{\ell}},m,n_4{-}1,n_6,n_8,n_{10}{-}1,n_{12}{+}1,n_{14})\\
  -24(n_{10}+1)\,&b({{\ell}},m,n_4{-}1,n_6,n_8{-}1,n_{10}{+}1,n_{12},n_{14})\\
  -32(n_8+1)\,&b({{\ell}},m,n_4{-}1,n_6{-}1,n_8{+}1,n_{10},n_{12},n_{14})\\
  +84(n_4+1)\,&b({{\ell}},m,n_4{+}1,n_6{-}1,n_8,n_{10},n_{12},n_{14})\\
  -21m(m-1)\,&b({{\ell}},m{-}2,n_4,n_6,n_8{-}1,n_{10},n_{12},n_{14})\\
  +8m(m-1)\,&b({{\ell}},m{-}2,n_4{-}2,n_6,n_8,n_{10},n_{12},n_{14})\\
  +16m\,&b({{\ell}},m{-}1,n_4{-}1,n_6,n_8,n_{10},n_{12},n_{14})\\
  -35{{\ell}}({{\ell}}{-}1)\,&b({{\ell}}{-}2,m,n_4,n_6,n_8,n_{10},
                               n_{12}{-}1,n_{14})\\
  +4{{\ell}}({{\ell}}{-}1)\,&b({{\ell}}{-}2,m,n_4{-}1,n_6,n_8{-}1,n_{10},
                              n_{12},n_{14})\\
  +8{{\ell}}\,&b({{\ell}}{-}1,m{+}1,n_4{-}1,n_6,n_8,n_{10},n_{12},n_{14})=0.
\end{align*}
This relation applies  only for \(k>1\), and can be written as 
\begin{align*}% \label{p_2}
  P_2:\,&b(\ell,m,n_4,n_6,n_8,n_{10},n_{12},n_{14})\\
&=2\,(2-k)\,b(\ell,1{+}m,n_4,n_6,n_8,n_{10},n_{12},n_{14}) \\
&\ \ \ -6m\,b(1{+}\ell,m{-}1,n_4,n_6,n_8,n_{10},n_{12},n_{14})
        +\mbox{\lq\lq lower weight terms''},
\end{align*}
where the lower weight terms have coefficients which are quadratic 
or linear in 
\({\ell}\), \(m\), \(n_4\), \(n_6\), \(n_8\), \(n_{10}\), \(n_{12}\), \(n_{14}\),
times integers or rational numbers with denominators \(7\). 
Here the number \(4n_4+6n_6+8n_8+10n_{10}+12n_{12}+14n_{14}\) for
\begin{equation*}
b({\ell},m,n_4,n_6,n_8,n_{10},n_{12},n_{14})
\end{equation*}
is the \(\mu\)-{\it weight} of the term. 
For  \(P_2\), the left hand side and the first two terms on the right
hand side all have the \(\mu\)-weight 
\(W=4n_4+6n_6+8n_8+10n_{10}+12n_{12}+14n_{14}\).  
The next highest \(\mu\)-weight terms of  the \lq\lq lower weight terms''
are of \(\mu\)-weight \(W-2\), 
and the lowest weight terms are of \(\mu\)-weight \(W-12\). 
\par
Putting the same ansatz into \((L_4-H_4)\varphi=0\) 
we get another recurrence \(P_4\) with \(20\) terms, providing \(m>0\) and \(k>0\).  
We can write this as 
\begin{align*}
P_4 : \,&b(\ell,1+m,n_4,n_6,n_8,n_{10},n_{12},n_{14})\\
&= -7(k-1)\,b(1\,{+}\,\ell,m{-}1,n_4,n_6,n_8,n_{10},n_{12},n_{14})+\mbox{\lq\lq lower weight terms''}.
\end{align*}
Here the lower weight terms have the same property as \(P_2\). 
We have another relation from the equation \((L_6-H_6)\varphi=0\)
\begin{align*}
P_6 :\,b(&\ell,m+2,n_4,n_6,n_8,n_{10},n_{12},n_{14})\\
  &+2\,b(1+\ell,m,n_4,n_6,n_8,n_{10},n_{12},n_{14})
=\mbox{\lq\lq lower weight terms''}.
\end{align*}
We can write this in two different ways which will each come in useful
\begin{align*}
P_{6\mathrm{a}}&:\,b(l,m,n_4,n_6,n_8,n_{10},n_{12},n_{14})\\
&=-2\,b(l{+}1,m{-}2,n_4,n_6,n_8,n_{10},n_{12},n_{14})
+\mbox{\lq\lq lower weight terms''},\\
P_{6\mathrm{b}}&: 2\,b(l,m,n_4,n_6,n_8,n_{10},n_{12},n_{14})\\
&=-b(l{-}1,m{+}2,n_4,n_6,n_8,n_{10},n_{12},n_{14})
+\mbox{\lq\lq lower weight terms''}.
\end{align*}
Continuing, we have two further relations, \,from the equations 
\,\((L_8-H_8)\varphi=0\) \\
and \((L_{10}-H_{10})\varphi=0\), 
\begin{equation*}
\begin{aligned}
  P_8\ &: \ b({\ell}{+}1,m{+}1,n_4,n_6,n_8,n_{10},n_{12},n_{14})
  =\mbox{\lq\lq lower \(\mu\)-weight terms''}, \\
  P_{10}\ &: \ b({\ell}{+}2,m,n_4,n_6,n_8,n_{10},n_{12},n_{14})
  =\mbox{\lq\lq lower \(\mu\)-weight terms''}, \\
\end{aligned}
\end{equation*}
where the lower \(\mu\)-weight terms have the same properties as
\(P_2\) and \(P_4\).  The relations \(P_6\), \(P_8\), \(P_{10}\) have
a total of \(24\), \(24\), \(19\) terms respectively. %%%
As before, we need to normalise the expansion, so we choose \,
\(b(1,0,0,0,0,0,0,0)\) \(=1\).  
We need to find relations which either express coefficients in terms of ones with 
lower or equal \(\mu\)-weight.
\par
Clearly we must take care with our recurrence relation to avoid
infinite looping.  We find that the following choice of recurrence
scheme results in a sequence which decreases the \(\mu\)-weight
after no more than one extra step at any point in the recurrence :
\begin{equation*}
\begin{aligned}
b(&{\ell},m,n_4,n_6,n_8,n_{10},n_{12},n_{14})\\
&=\begin{cases}
\ 0   &\mbox{if \
  \(\mathrm{min}\{k,{\ell},m,n_4,n_6,n_8,n_{10},n_{12},n_{14}\}<0\)}, \\
\ 1   &\mbox{if \ \({\ell}=1\), \(m=n_4=n_6=n_8=n_{10}=
n_{12}=n_{14}=0\)}, \\
\ \mathrm{rhs}(P_2)             &\mbox{if \ \(k>1\)}, \\
\ \mathrm{rhs}(P_4)             &\mbox{if \ \(k=1\), \(m>0\)}, \\
\ \mathrm{rhs}(P_{\mathrm{6a}}) &\mbox{if \ \(k=1\), \(m=0\) \ ( and \({\ell}>0\))}, \\
\ \mathrm{rhs}(P_{\mathrm{6b}}) &\mbox{if \ \(k=0\), \(m>1\)},\\
\ \mathrm{rhs}(P_8)             &\mbox{if \ \(k=0\), \(m=1\) \ and \({\ell}>0\)},\\
\ \mathrm{rhs}(P_{10})          &\mbox{if \ \(k=0\), \(m=0\) \ and \({\ell}>1\)}.
\end{cases}
\end{aligned}
\end{equation*}
Note that the structure of this complicated linear recurrence relation
does {\it not} depend on the moduli \(\mu_i\).  
We have used this to calculate the terms on the Hurwitz series for the solution
up to weight 40 in \(u_i\) (weight \(34\) in the \(\mu_i\)).
As for the \((2,5)\)-curve, there is another possible recursion scheme:
\begin{equation*}
\begin{aligned}
b(&{\ell},m,n_4,n_6,n_8,n_{10},n_{12},n_{14})\\
&=\begin{cases}
\ 0  & \mbox{if \ \(\mathrm{min}\{k,{\ell},m,n_4,n_8,n_{10},n_{12},n_{14}\}<0\)}, \\
\ 1  & \mbox{if \ \ \({\ell}=1\), \(m=n_4=n_8=n_{10}=n_{12}=n_{14}=0\)}, \\
\ \mathrm{rhs}(P_{10})          & \mbox{if \ \({\ell}>1\)},\\
\ \mathrm{rhs}(P_{8})           & \mbox{if \ \({\ell}=1\), \(m>0\)},\\
\ \mathrm{rhs}(P_{\mathrm{6a}}) & \mbox{if \ \({\ell}=1\), \(m=0\) and \(k>0\)},\\
\ \mathrm{rhs}(P_4)             & \mbox{if \ \({\ell}=0\), \(m>0\) and \(k>0\)},\\
\ \mathrm{rhs}(P_{\mathrm{6b}}) & \mbox{if \ \({\ell}=0\), \(m>0\) and \(k=0\)},\\
\ \mathrm{rhs}(P_2)             & \mbox{if \ \({\ell}=0\), and  \(m= 0\)}.
\end{cases}
\end{aligned}
\end{equation*}
We have used this to calculate the
terms in the  series up to weight 40 in \(\{u_j\}\), 
or equivalently, weight 35 in the \(\{\mu_i\}\).
The first few terms of the  expansion are given as
follows (up to a constant multiple):
\begin{align*}
  \varphi(u)=\sigma(u_5,& u_3,u_1) =  16\frac{{u_1}^6}{6!}
                - 2\frac{{u_1}^3u_3}{3!} 
             - 2\frac{{u_3}^2}{2!} + {u_5 u_1} 
             + 64\mu_4\frac{{u_1}^{10}}{10!} 
             + 36 \mu_4 \frac{{u_1}^7u_3}{7!} \\
           & - 4 \mu_4\frac{{u_1}^4{u_3}^2}{4!2!} 
             - 2 \mu_4 \frac{u_1{u_3}^3}{3!} 
             + 2 \mu_4\frac{u_5{u_1}^5}{5!} 
             - 512 \mu_6\frac{{u_1}^{12}}{12!}
             + 64 \mu_6\frac{{u_1}^9u_3}{9!} \\
           & + 16 \mu_6\frac{{u_1}^6{u_3}^2}{6!2!}
             - 8 \mu_6\frac{{u_1}^3{u_3}^3}{3!^2} 
             - 8 \mu_6\frac{{u_3}^4}{4!}
             + 24 \mu_6\frac{u_5{u_1}^7}{7!}+\cdots. 
\end{align*}%
Further studies are required to establish whether there are other 
recursion schemes which can be used to generate the series,
and which recursions could be considered the most efficient in some sense. 
\newpage
%%%%%%%%%%%%%%%%%%%%%%%%%%%%%%%%%%%%%%%%%%%%%%%%%%%%%%%%%%%%%%%%%%%%
\subsection{The heat equations for the \texorpdfstring{\((3,4)\)}{Lg}-curve}\ %%%%%%%%%%%%%%%%%
%%%%%%%%%%%%%%%%%%%%%%%%%%%%%%%%%%%%%%%%%%%%%%%%%%%%%%%%%%%%%%%%%%%%
\label{section_34-curve}%%%%%%%%%%%%%%%%%%%%%%%%%%%%%%%%%%%%%%%%%%%%
%%%%%%%%%%%%%%%%%%%%%%%%%%%%%%%%%%%%%%%%%%%%%%%%%%%%%%%%%%%%%%%%%%%%
We take the trigonal genus three curve
\(\mathscr{C}=\mathscr{C}_{{\mu}}^{3,4}\) in the Weierstrass form
\[
y^3=(\mu_8+\mu_5 x+\mu_2 x^2)y + x^4+\mu_6 x^2+\mu_9 x +\mu_{12}.
\]
The matrix \(V\) is given by
\begin{equation*}
V=\underset{\substack{a\,\in\{0,3,4,6,7,10\}\\ b\,\in\{2,5,6,8,9,12\}}}{[V_{a,b}]}
=\left[ 
\begin {array}{cc}
    V^{(1,1)}   & V^{(1,2)}\\[3pt]
\tp{V^{(1,2)}}  & V^{(2,2)}
\end {array}
\right], 
\end{equation*}
where
\begin{equation*}
\begin{aligned}
V^{(1,1)}&
=\left[
\begin {array}{ccc} 
  2\,\mu_2  & 5\,\mu_5     & 6\,\mu_6 \\[3pt]
  5\,\mu_5  & \frac16\,{\mu_2}^4-4\,\mu_2\mu_6+8\,\mu_8
                    & -\frac12\,{\mu_2}^2\mu_5+9\,\mu_9 \\[3pt]
  6\,\mu_6  & -\frac12\,{\mu_2}^2\mu_5+9\,\mu_9
           & \frac23\,{\mu_2}^2\mu_6+\frac{10}3\,\mu_2\mu_8+\frac53\,{\mu_5}^2
\end {array}
\right] 
\end{aligned}
\end{equation*}
and
{\footnotesize
\begin{equation*}
  \begin{aligned}
  V^{(1,2)}&=
  \left[
  \begin {array}{cc}
  8\,\mu_8 & 9\,\mu_9 \\[5pt]
    \frac{1}{12}\,{\mu_2}^3\mu_5-\frac12\,\mu_2\mu_9-\frac32\,\mu_5\mu_6
           & \frac16\,{\mu_2}^3\mu_6{-}\frac13\,{\mu_2}^2\mu_8{-}
             \frac16\,\mu_2{\mu_5}^2{-}3\,{\mu_6}^2{+}12\,\mu_{12} \\[5pt]
    \frac43\,{\mu_2}^2\mu_8-\frac7{12}\,\mu_2{\mu_5}^2 +12\,\mu_{12}
   & \frac43\,{\mu_2}^2\mu_{9} -\frac76\,\mu_2\mu_5\mu_6+\frac{13}3\,\mu_5\mu_8
  \end {array}\right. \\
&\hskip 100pt\left.
   \begin {array}{c}
   12\,\mu_{12}                                                           \\[5pt]  
   \frac1{12}\,{\mu_2}^3\mu_9-\frac16\,\mu_2\mu_5\mu_8-\frac32\,\mu_6\mu_9\\[5pt]  
    2\,{\mu_2}^2\mu_{12}-\frac7{12}\,\mu_2\mu_5\mu_9+\frac83\,{\mu_8}^2            
        \end {array}\right], 
  \end{aligned}
    \end{equation*}}
and the remaining elements are
{\small %%%
\begin{align*}
V_{6,8} 
  &=\frac1{24}\,{\mu_2}^2{\mu_5}^2+6\,\mu_2\mu_{12}-\frac72\,\mu_5\mu_9
    +4\,\mu_6\mu_8,\\
V_{6,9} &= V_{5,4} = \frac1{12}\,{\mu_2}^2\mu_5\mu_6
            +\frac76\,\mu_2\mu_5\mu_8-\frac{5}{12}\,{\mu_5}^3
          -\frac32\,\mu_6\mu_9,\\
V_{6,10} &= V_{6,4}=
          \frac1{24}\,{\mu_2}^2\mu_5\mu_9 +\frac43\,\mu_2{\mu_8}^2
          -\frac5{12}\,{\mu_5}^2\mu_8+6\,\mu_6\mu_{12}-\frac94\,{\mu_9}^2,\\
V_{7,9} &=\frac16\,{\mu_2}^2{\mu_6}^2+2\,{\mu_2}^{2}\mu_{12} 
          +\frac53\,\mu_2\mu_5\mu_9 -\frac83\,\mu_2\mu_6\mu_8-\frac43\,{\mu_5}^2\mu_6
          +\frac83\,{\mu_8}^2,\\
V_{7,12} &=V_{6,5}=\frac1{12}\,{\mu_2}^2\mu_6\mu_9+3\,\mu_2\mu_5\mu_{12}
-\frac56\,\mu_2\mu_8\mu_9-\frac5{12}\,{\mu_5}^2\mu_9 -\frac12\,\mu_5\mu_6\mu_8,\\
V_{10,12} &=\frac1{24}\,{\mu_2}^2{\mu_9}^2+2\,\mu_2\mu_8\mu_{12} +{\mu_5}^2\mu_{12} 
-\frac{11}{6}\,\mu_5\mu_8\mu_9 +\frac43\,\mu_6{\mu_8}^2.
\end{align*}
}% END of small
The discriminant \(\varDelta\) of \(\mathscr{C}\)
is calculated by \texttt{Maple} along \ref{def_discriminant},  % by an algorithm provided by Sylvester, 
has \(670\) terms and is of weight \(72\). 
The result shows \(\varDelta=3^3\,{\cdot}\,4^2\,{\cdot}\,\det(V)\). % changed 2024.09.20 according to Maple calculation
Now, after a calculation by \texttt{Maple} (which shows \ref{hessian_formula_}), 
we have 
\begin{equation*}
{[\,{L}_0\,\ {L}_3\,\ {L}_4\,\ {L}_6\,\ {L}_7\,\ {L}_{10}]}\,\varDelta 
=[\,72\ \ 0\,\ -8{\mu_2}^2\,\ 12\mu_6\,\ -8\mu_2\mu_5\,\ -{\mu_5}^2-4\mu_2\mu_8\,]\varDelta.
\end{equation*}
As in the \((2,7)\)-case, 
we have fundamental relations for these \({L}_i\)
as a set of generators of certain Lie algebra. 
The symplectic basis of \(H_{\mathrm{dR}}^1(\mathscr{C}/\mathbb{Q}[{\mu}])\) in
this case is given by
\begin{equation*}
    \omega_{5}=\frac{dx}{3y^2}, \ 
    \omega_{2}=\frac{xdx}{3y^2}, \  
    \omega_{1}=\frac{ydx}{3y^2}, \ 
    \eta_{-5}=\frac{5x^2ydx}{3y^2}, \ 
    \eta_{-2}=\frac{2xydx}{3y^2}, \ 
    \eta_{-1}=\frac{x^2dx}{3y^2}.
\end{equation*}
\newpage
\noindent
The matrices \(\Gamma_j=\bigg[
\begin{array}{cc}
  -\beta_j &\ \alpha_j \\ 
  -\gamma_j &\ \tp{\beta_j}\end{array}
\bigg]\) 
are given as follows\footnote{These should not be confused
  with the symplectic basis of cycles \(\alpha_j\) and
  \(\beta_j\) in (\ref{per_matrix})} :
{\footnotesize %%%
\begin{align*}
\alpha_0&=O, \ \ 
\beta_0=\left[
\begin{array}{ccc}
5 &   &   \\
  & 2 &   \\
  &   & 1 
\end{array}
\right], \ \
\gamma_0=O, \\
\alpha_3&=
\left[
\begin{array}{ccc}
\ \ &   &   \\[3pt]
    &   & 1 \\[3pt]
    & 1 & 
\end{array}
\right], \quad 
\beta_3 =
\left[
\begin{array}{ccc}
                                      &            2        &                 \\[3pt]
 \frac1{12}\,{\mu_2}^3-\frac32\,\mu_6 &                     & -\frac12\,\mu_2 \\[3pt]
 -\frac16\,\mu_5\mu_2                 & -\frac13\,{\mu_2}^2 &                  
\end{array}
\right], \\
\gamma_3&=
\left[
\begin{array}{ccc}
 5\,\mu_5\mu_8-\frac13\,{\mu_2}^2 \mu_9+\frac1{12}\,{\mu_2}^4\mu_5
 -\frac53\,\mu_2\mu_5\mu_6    & -\mu_2\mu_8         & \mu_9  \\[3pt]
  -\mu_2\mu_8                  & \frac23\,\mu_5\mu_2  &
   \frac16\,{\mu_2}^3 -\mu_6 \\[3pt]
 \mu_9 & \frac16\,{\mu_2}^3-\mu_6 & \mu_5 
\end{array}
\right], \\
\alpha_4&
=\left[
\begin {array}{ccc}
\ \  &   &               \\[3pt] 
     & 1 &               \\[3pt]
     &   &\frac13\,\mu_2
\end{array}
\right], \ \ 
\beta_4=
\left[
\begin{array}{ccc}
\frac23\,{\mu_2}^2      &                & 1 \\[3pt]
-\frac7{12}\,\mu_5\mu_2 &                &   \\[3pt]
\frac53\,\mu_8          & \frac23\,\mu_5 & 
\end{array}
\right], \\
\gamma_4&=
\left[
\begin{array}{ccc}
 -\frac7{12}\,{\mu_2}^2{\mu_5}^2+9\,\mu_2\mu_{12}
  +\frac53\,\mu_5\mu_9+\frac{11}3\,\mu_6\mu_8 & \frac23\,\mu_2\mu_9 &
 -\frac13\,\mu_2\mu_8 \\[3pt]
  \frac23\,\mu_2\mu_9 &        \frac43\,\mu_2\mu_6 +\frac43\,\mu_8
  & -\frac12\,\mu_5\mu_2 \\[3pt]
-\frac13\,\mu_2\mu_8 & -\frac12\,\mu_5\mu_2 & \mu_6
\end{array}
\right], \\
\alpha_6 &=
\left[
\begin{array}{ccc}
    & \ \ & 1 \\[3pt]
    &     &   \\[3pt]
  1 &     &  
\end{array}
\right], \quad
\beta_6=
\left[
\begin{array}{ccc}
      \mu_6                                 &                      &                  \\[3pt]
 \frac1{24}\,{\mu_2}^2\mu_5 -\frac54\,\mu_9 &                      & -\frac14\,\mu_5  \\[3pt]
\frac13\,\mu_2\mu_8-\frac5{12}\,{\mu_5}^2   & -\frac16\,\mu_5\mu_2 &    
\end{array}
\right] , \\
\gamma_6 &=
\left[
\begin{array}{ccc}
  \left\{\ \substack{
  \frac1{24}\,{\mu_2}^3{\mu_5}^2 +4\,{\mu_2}^2\mu_{12}
  -\frac{29}{12}\,\mu_2\mu_5\mu_9\\
  +\frac43\,\mu_2\mu_6\mu_8-\frac5{12}\,{\mu_5}^2\mu_6 +6\,{\mu_8}^2}
 \  \right\}
& \frac12\,\mu_5\mu_8  &  3\,\mu_{12} \\[8pt]
  \frac12\,\mu_5\mu_8 & \frac83\,\mu_2\mu_8 -\frac13\,{\mu_5}^2 
& \frac1{12}\,{\mu_2}^2\mu_5 -\frac12\,\mu_9 \\[3pt]
    3\,\mu_{12} & \frac1{12}\,{\mu_2}^2\mu_5-\frac12 \,\mu_9 & 2\,\mu_8 
\end{array}
\right] ,\\
\alpha_7 &=
\left[
\begin{array}{ccc}
     & 1 &                \\[3pt]
   1 &   &                \\[3pt]
     &   & \frac13\,\mu_5
\end{array}
\right], \ \ 
\beta_7=
\left[
\begin{array}{ccc}
  \frac23\,\mu_5\mu_2                                                    &                                      &                 \\[3pt]
 \frac1{12}\,{\mu_2}^2\mu_6-\frac76\,\mu_2\mu_8 -\frac{5\,{\mu_5}^2}{12} &                                      & -\frac12\,\mu_6 \\[3pt]
 \frac13\,\mu_2\mu_9 -\frac12\,\mu_5\mu_6                                & -\frac13\,\mu_2\mu_6 +\frac23\,\mu_8 &   
\end{array}
\right] ,\\
\gamma_7 &=%%%%%%%%%%%%%%%%
\left[
\begin{array}{ccc}
\gamma_7^{[1,1]}                                 &  2\,\mu_2\mu_{12}+\frac23\,\mu_5\mu_9-\mu_6\mu_8                 &   \frac23\,\mu_5\mu_8                                              \\[3pt]
2\,\mu_2\mu_{12} +\frac23\,\mu_5\mu_9-\mu_6\mu_8 & \frac83\,\mu_2\mu_9 -\frac23\,\mu_5\mu_6                         &  -\frac16\,{\mu_5}^2  +\frac16\,{\mu_2}^2\mu_6-\frac13\,\mu_2\mu_8 \\[3pt]
  \frac23\,\mu_5\mu_8                            & -\frac16\,{\mu_5}^2 +\frac16\,{\mu_2}^2\mu_6-\frac13\,\mu_2\mu_8 &  2\,\mu_9                                                          
\end{array}
%  \right.\\
%  &\hskip 180pt \left.
%  \begin{array}{c}
%    \frac23\,\mu_5\mu_8                                              \\[3pt]
%   -\frac16\,{\mu_5}^2  +\frac16\,{\mu_2}^2\mu_6-\frac13\,\mu_2\mu_8 \\[3pt]
%   2\,\mu_9                                                          %%     
%  \end{array}
\right],\\
& \big(\,\gamma_7^{[1,1]} = -\tfrac5{12}\,\mu_2{\mu_5}^3
-\tfrac12\,\mu_5{\mu_6}^2 +5\,\mu_{12}\mu_5 +\tfrac{14}3\,\mu_8
\mu_9 +\tfrac1{12}\,{\mu_2}^3\mu_5\mu_6
-\tfrac16\,{\mu_2}^2\mu_5\mu_8 -\mu_2\mu_6\mu_9\,\big), \\
\alpha_{10}& =
\left[
\begin{array}{ccc}
  1 & \ \ &                 \\[3pt]
    &     &                 \\[3pt]
    &     & \frac13\,\mu_8
\end{array}
\right], \ \
\beta_{10}=
\left[
\begin{array}{ccc}
       -\frac13\,\mu_2\mu_8                                &                      &                 \\[3pt]
 \frac1{24}\,{\mu_2}^2\mu_9 -\frac5{12}\,\mu_5\mu_8        &                      & -\frac14\,\mu_9 \\[3pt]
 \mu_2\mu_{12}-\frac5{12}\,\mu_5\mu_9 +\frac13\,\mu_6\mu_8 & -\frac16\,\mu_2\mu_9 &   
\end{array}
\right] , \\
\gamma_{10}&=
\left[
\begin{array}{ccc}
  \gamma_{10}^{[1,1]}  & 2\,\mu_{12}\mu_5 -\frac56\,\mu_8\mu_9
  & \frac23\,{\mu_8}^2 \\[3pt]
2\,\mu_{12}\mu_5-\frac56\,\mu_8\mu_9 & 4\,\mu_2\mu_{12}-\frac13\,\mu_5\mu_9 
& -\frac16\,\mu_5\mu_8+\frac1{12}\,{\mu_2}^2\mu_9   \\[3pt]
  \frac23\,{\mu_8}^2  &  -\frac16\,\mu_5\mu_8+\frac1{12}\,{\mu_2}^2\mu_9
  & 3\,\mu_{12}
\end{array}
\right] ,
\\
& \Big(\,\gamma_{10}^{[1,1]}
=\tfrac1{24}\,{\mu_2}^3\mu_5\mu_9+{\mu_2}^2{\mu_8}^2 
-\tfrac5{12}\,\mu_2{\mu_5}^2\mu_8 +3\,\mu_2\mu_6\mu_{12} -\tfrac76\,\mu_2{\mu_9}^2\\ 
        &\hskip 150pt -\tfrac{5}{12}\,\mu_5\mu_6\mu_9 +\tfrac13\,{\mu_6}^2\mu_8
          +9\,\mu_8\mu_{12}\,\Big).
\end{align*}
}\relax % End of footnotesize
%%%%%%%%%%%%%%%%%%%%%%%%%%%%%%%%%%%%%%%%%%%%%%%%%%%%%%%%%%%%%%%%%%%%%%%%% 
\subsection{The sigma function for the \texorpdfstring{\((3,4)\)}{Lg}-curve }%%%%%%%%%%%%%%%%%%%
\label{section_34-sigma} %%%%%%%%%%%%%%%%%%%%%%%%%%%%%%%%%%%%%%%%%%%%%%%%
%%%%%%%%%%%%%%%%%%%%%%%%%%%%%%%%%%%%%%%%%%%%%%%%%%%%%%%%%%%%%%%%%%%%%%%%%
\noindent
Following \cite{bl_2005} and from the conditions 
\textbf{IC1}, \textbf{IC2} of (\ref{heat_eq_for_sigma}), 
but using the Hurwitz series form, the sigma function is of the form
{\small %%%
\begin{align*}
\sigma&(u_5,u_2,u_1)
=\sum_{\ell,m,n_2,n_5,n_6,n_8,n_9,n_{12}}\Big[\,b(\ell,m,n_2,n_5,n_6,n_8,n_9,n_{12})\,
{u_1}^5\left(\frac {u_5}{{u_1}^5} \right)^{\ell}\\
&\cdot\left(\frac{u_2}{{u_1}^2}\right)^{m}
\left(\mu_2{u_1}^2\right)^{n_2}
\left(\mu_5{u_1}^5\right)^{n_5}
\left(\mu_6{u_1}^6\right)^{n_6}
\left(\mu_8{u_1}^8\right)^{n_8}
\left(\mu_9{u_1}^9\right)^{n_9}
\left(\mu_{12}{u_1}^{12}\right)^{n_{12}}\\
&\hskip 40pt\big/\big(\,\ell!\,m!\,(5-5\ell-2m+2n_2+5n_5+6n_6+8n_8
+9n_9+12n_{12})!\,\big)\Big].
\end{align*}
}% END of small
for the \((3,4)\)-curve. 
If we define 
\[
k=5-5\ell-2m+2n_2+5n_5+6n_6+8n_8+9n_9+12n_{12},
\]
we can rewrite the above expression as
\begin{equation*}
\begin{aligned}
\sigma(u_5,u_2,u_1)
=\sum\,b(\ell,&m,n_2,n_5,n_6,n_8,n_9,n_{12})\\
&\cdot\frac{{\mu_2}^{n_2}
      {\mu_5}^{n_5}
      {\mu_6}^{n_6}
      {\mu_8}^{n_8}
      {\mu_9}^{n_9}
      {\mu_{12}}^{n_{12}}
    \,{u_5}^{\ell}
      {u_2}^m
      {u_1}^k}
      {\ell!\,m!\,k!},
\end{aligned}
\end{equation*}
where we require all the integer indices 
\(k\), \(\ell\), \(m\), \(n_2\), \(n_5\), \(n_6\), \(n_8\), \(n_9\), 
\(n_{12}\) to be
non-negative.  Note that the \(u\)-weight of this expression is
\(k_0=5+2n_2+5n_5+6n_6+8n_8+9n_9+10n_{12}\), 
which does not depend on \(\ell\) or \(m\). 
(Note also that \(k=k_0-5\ell-2m\).)   
For fixed \(n_2\), \(n_5\), \(n_8\), \(n_6\), \(n_9\), \(n_{12}\ge0\), 
\(k_0\ge0\) is fixed, and for non-negative $k$, we require 
\(\ell=0\), \(\dots\)\,, \(\lfloor k_0/5\rfloor\), 
\(m=0\), \(\dots\)\,, \(\lfloor (k_0-5\ell)/2 \rfloor\).  
In addition, we can use the condition that $\sigma$ is an odd function,
$\sigma(-u)=-\sigma(u)$; this tells us that if $k_0$ is even(odd) then
we should restrict ourselves to $m$ even(odd) respectively.
\par
If we insert this ansatz into the equation for \((L_0-H_0)\sigma=0\),
we get an expression which is identically zero, whatever the values
for the $b(\ell,m,n_2,n_5,n_6,n_8,n_9,n_{12})$.  If we insert the
ansatz into the equation for \((L_3-H_3)\sigma=0\), we get (after some
algebra) the recurrence relation shown below, involving 34 terms
(compare the equations on p.68 of \cite{bl_2005} for the genus 2
case). 
We can structure the relation by the {\em weight} of each $b$ 
coefficient (more precisely by the weight of the corresponding term in
the sigma expansion).
\par
Contrarily to the \((2,3)\)-, \((2,5)\)-, \((2,7)\)-curves, we could
not find any approach for the \((3,4)\)-curve to prove Hurwitz integrality
of the expansion of \(\sigma(u)\).
\par
We call the recurrence relation, generated from
\((L_3\,{-}\,H_3)\sigma=0\), \(R_3\):
\begin{align*}
24\,b({\ell},m+1,n_2,&n_5,n_6,n_8,n_9,n_{12})\\
&+48m\,b({\ell}+1,m-1,
n_2,n_5,n_6,n_8,n_9,n_{12}) \\
=-12(4-k)(3-k)&\,b({\ell},m,n_2,n_5-1,n_6,n_8,n_9,n_{12}) \\
-36(n_8+1)&\,b({\ell},m,n_2,n_5-1,n_6-1,n_8+1,n_9,n_{12}) \\
+4(n_9+1)&\,b({\ell},m,n_2-3,n_5,n_6-1,n_8,n_9+1,n_{12}) \\
+2(n_{12}+1)&\,b({\ell},m,n_2-3,n_5,n_6,n_8,n_9-1,n_{12}+1)\\
+4(n_5+1)&\,b({\ell},m,n_2-4,n_5+1,n_6,n_8,n_9,n_{12}) \\
-8(n_9+1)&\,b({\ell},m,n_2-2,n_5,n_6,n_8-1,n_9+1,n_{12}) \\
+2(n_8+1)&\,b({\ell},m,n_2-3,n_5-1,n_6,n_8+1,n_9,n_{12}) \\
-96(n_5+1)&\,b({\ell},m,n_2-1,n_5+1,n_6-1,n_8,n_9,n_{12}) \\
-12(n_8+1)&\,b({\ell},m,n_2-1,n_5,n_6,n_8+1,n_9-1,n_{12}) \\
-12(n_6+1)&\,b({\ell},m,n_2-2,n_5-1,n_6+1,n_8,n_9,n_{12}) \\
-4(n_{12}+1)&\,b({\ell},m,n_2-1,n_5-1,n_6,n_8-1,n_9,n_{12}+1) \\
-4(n_9+1)&\,b({\ell},m,n_2-1,n_5-2,n_6,n_8,n_9+1,n_{12}) \\
+120(n_2+1)&\,b({\ell},m,n_2+1,n_5-1,n_6,n_8,n_9,n_{12}) \\
-12(3-k)&\,\underline{b({\ell},m+1,n_2-1,n_5,n_6,n_8,n_9,n_{12})} \\
-24m(3-k)&\,b({\ell},m-1,n_2,n_5,n_6-1,n_8,n_9,n_{12}) \\
-8m(m-1)&\,b({\ell},m-2,n_2-1,n_5-1,n_6,n_8,n_9,n_{12}) \\
+24{\ell}(3-k)&\,b({\ell}-1,m,n_2,n_5,n_6,n_8,n_9-1,n_{12}) \\
-60{\ell}({\ell}-1)&\,b({\ell}-2,m,n_2,n_5-1,n_6,n_8-1,n_9,n_{12}) \\
+4m(3-k)&\,b({\ell},m-1,n_2-3,n_5,n_6,n_8,n_9,n_{12}) \\
-{\ell}({\ell}-1)&\,b({\ell}-2,m,n_2-4,n_5-1,n_6,n_8,n_9,n_{12}) \\
+4{\ell}({\ell}-1)&\,b({\ell}-2,m,n_2-2,n_5,n_6,n_8,n_9-1,n_{12}) \\
+20{\ell}({\ell}-1)&\,b({\ell}-2,m,n_2-1,n_5-1,n_6-1,n_8,n_9,n_{12}) \\
+24{\ell}m&\,b({\ell}-1,m-1,n_2-1,n_5,n_6,n_8-1,n_9,n_{12}) \\
+8m&\,b({\ell},m-1,n_2-2,n_5,n_6,n_8,n_9,n_{12}) \\
+4{\ell}&\,b({\ell}-1,m,n_2-1,n_5-1,n_6,n_8,n_9,n_{12}) \\
-2{\ell}&\,b({\ell}-1,m+1,n_2-3,n_5,n_6,n_8,n_9,n_{12}) \\
+36{\ell}&\,b({\ell}-1,m+1,n_2,n_5,n_6-1,n_8,n_9,n_{12}) \\
-36(n_{12}+1)&\,b({\ell},m,n_2,n_5,n_6-1,n_8,n_9-1,n_{12}+1) \\
-72(n_9+1)&\,b({\ell},m,n_2,n_5,n_6-2,n_8,n_9+1,n_{12}) \\
+288(n_9+1)&\,b({\ell},m,n_2,n_5,n_6,n_8,n_9+1,n_{12}-1)\\
+192(n_5+1)&\,b({\ell},m,n_2,n_5+1,n_6,n_8-1,n_9,n_{12}) \\
+216(n_6+1)&\,b({\ell},m,n_2,n_5,n_6+1,n_8,n_9-1,n_{12}). 
\end{align*}
Note the two expressions on the left hand side, which are the highest weight
terms, at weight $W=2n_2+5n_5+6n_6+8n_8+9n_9+12n_{12}$.  
The next highest weight term (underlined) is of weight $W-2$, 
and the lowest weight terms are of weight \(W-13\). 
\relax
\indent
Putting the ansatz into the equation for \((L_4-H_4)\sigma\) 
we get another recurrence with 27 terms, which we call \(R_4\)\,: 
\begin{align*}
-12\,b({\ell},m+2,n_2,&n_5,n_8,n_6,n_9,n_{12})\\
&+24(4-k)\,b({\ell}+1,m,n_2,n_5,n_6,n_8,n_9,n_{12}) \\
=4&\,\underline{b({\ell},m,n_2-1,n_5,n_6,n_8,n_9,n_{12})} \\
+16{\ell}m&\,b({\ell}-1,m-1,n_2-1,n_5,n_6,n_8,n_9-1,n_{12}) \\
-14{\ell}&\,b({\ell}-1,m+1,n_2-1,n_5-1,n_6,n_8,n_9,n_{12}) \\
+40{\ell}&\,b({\ell}-1,m,n_2,n_5,n_6,n_8-1,n_9,n_{12}) \\
+16m&\,b({\ell},m-1,n_2,n_5-1,n_6,n_8,n_9,n_{12}) \\
-7{\ell}({\ell}-1)&\,b({\ell}-2,m,n_2-2,n_5-2,n_6,n_8,n_9,n_{12}) \\
+108{\ell}({\ell}-1)&\,b({\ell}-2,m,n_2-1,n_5,n_6,n_8,n_9,n_{12}-1) \\
+44{\ell}({\ell}-1)&\,b({\ell}-2,m,n_2,n_5,n_8-1,n_6-1,n_9,n_{12}) \\
+12(5-k)(4-k)&\,b({\ell},m,n_2,n_5,n_6-1,n_8,n_9,n_{12}) \\
+8{\ell}(4-k)&\,b({\ell}-1,m,n_2-1,n_5,n_6,n_8-1,n_9,n_{12}) \\
+20{\ell}({\ell}-1)&\,b({\ell}-2,m,n_2,n_5-1,n_6,n_8,n_9-1,n_{12}) \\
-288(n_8+1)&\,b({\ell},m,n_2,n_5,n_6,n_8+1,n_9,n_{12}-1) \\
-64(n_{12}+1)&\,b({\ell},m,n_2,n_5,n_6,n_8-2,n_9,n_{12}+1) \\
-40(n_6+1)&\,b({\ell},m,n_2,n_5-2,n_6+1,n_8,n_9,n_{12}) \\
-216(1+n_5)&\,b({\ell},m,n_2,n_5+1,n_6,n_8,n_9-1,n_{12}) \\
-80(n_6+1)&\,b({\ell},m,n_2-1,n_5,n_6+1,n_8-1,n_9,n_{12}) \\
  -4(\ell-1-n_9-8&n_5-2n_6+k+2m-2n_2)\\
  \cdot&\,b({\ell},m,n_2-2,n_5,n_6,n_8,n_9,n_{12}) \\
-104(n_9+1)&\,b({\ell},m,n_2,n_5-1,n_6,n_8-1,n_9+1,n_{12}) \\
+14(n_8+1)&\,b({\ell},m,n_2-1,n_5-2,n_6,n_8+1,n_9,n_{12}) \\
+14(n_{12}+1)&\,b({\ell},m,n_2-1,n_5-1,n_6,n_8,n_9-1,n_{12}+1) \\
+28(n_9+1)&\,b({\ell},m,n_2-1,n_5-1,n_6-1,n_8,n_9+1,n_{12}) \\
-144(n_2+1)&\,b({\ell},m,n_2+1,n_5,n_6-1,n_8,n_9,n_{12}) \\
+16m(m-1)&\,b({\ell},m-2,n_2,n_5,n_6,n_8-1,n_9,n_{12}) \\
+16m(m-1)&\,b({\ell},m-2,n_2-1,n_5,n_6-1,n_8,n_9,n_{12}) \\
+12m(4-k)&\,b({\ell},m-1,n_2-1,n_5-1,n_6,n_8,n_9,n_{12}).
\end{align*}
As for \(R_3\), the two expressions on the left hand side, are the
highest weight terms, at weight \(W=2n_2+5n_5+6n_6+8n_8+9n_9+12n_{12}\).
The next highest weight terms are of weight \(W-2\), and the lowest
weight terms are of weight \(W-14\).
\par
We see that the two recurrence relations have the same terms in \(b\).
Hence we can take linear combinations to get two relations, 
each with only one leading term at weight \(W\)
 \begin{align*}
   S_{3,4}: \,b(\ell, m, n_2, n_5, n_8, n_6, n_9, n_{12}) 
& =\frac1{k-m+1} 
   \left(\text{lower weight terms}\right)\ \,(m\neq0),\\
  T_{3,4}: \,b(\ell, m, n_2, n_5, n_8, n_6, n_9, n_{12}) 
& =\frac1{k-m}\left(\text{lower weight terms}\right) \ \ (l\neq0). 
\end{align*}
These \(S_{3,4}\) and \(T_{3,4}\) connect the left hand side 
with terms of relative weight \(-2\)
and lower, down to \(-14\).  
In addition we have other relations from the equations
\((L_6-H_6)\sigma=0\), 
\((L_7-H_7)\sigma=0\), 
and \((L_{10}-H_{10})\sigma=0\) that
\begin{align*}
R_6: \,b(\ell, m, n_2, n_5, n_8, n_6, n_9, n_{12})  & = 
    \mbox{\lq\lq lower weight terms''},\\
  R_7: \ b(\ell, m, n_2, n_5, n_8, n_6, n_9, n_{12}) & = 
    \mbox{\lq\lq lower weight terms''},\\
  R_{10}: \ b(\ell, m, n_2, n_5, n_8, n_6, n_9, n_{12}) & =
    \mbox{\lq\lq lower weight terms''}, 
\end{align*}
respectively. 
Here the right hand sides are linear in the coefficients \(b\) 
with coefficients at most quadratic in \(k\), \(\ell\), \(m\), 
\(n_2\), \(n_5\), \(n_8\), \(n_6\), \(n_9\), \(n_{12}\) over the rationals
but each denominator is a divisor of \(24\). 
\par
\(R_6\), \(R_7\), \(R_{10}\) have a total of 37, 47, 42 terms
respectively and connect the left hand side with terms of relative
weight \(-5\), \(-5\), \(-8\) and lower, down to \(-16\), \(-17\),
\(-20\) respectively.
\par
Ideally we would like to proceed as follows.  Suppose we have already
calculated the \(b\) coefficients at weight \(W-2\).  Then we would
like to use one of the above to calculate each coefficient at weight
\(W\).  We could proceed in this manner to calculate coefficients at
successive weight levels to the required number of terms.  However
this approach needs some modification.  Recall that the weight does
not depend on \(\ell\) or \(m\).  Clearly if \(\ell>1\) we can use
\(R_6\), and if \(\ell=1\), \(m>0\), we can use \(R_7\).  
Similarly if \(\ell=1\), \(k>0\), we can use \(R_6\).  
A short calculation shows
that if \(\ell=1\), one of these two possibilities holds except in the
special case \(\ell=1\), \(m=n_2=n_5=n_8=n_6=n_9=n_{12}=0\) which is
covered later.  For the case \(\ell=0\) we cannot use \(R_6\),
\(R_7\), \(R_{12}\).  
If \,\(m\ne 0\) \,{and}\, \(m\ne (k+1)\)\, we can use \(S_{3,4}\).  
All the possibilities considered so far will reduce the weight by 2.  
There remain the cases \(m=0\) and \(m=(k+1)\) to deal with. \relax
\newline
\indent
The case \(\ell=0\), \(m=0\) is handled as follows.  
Take \(48(k-m)\,T_{3,4}\) : 
\begin{align*}
48&(m-k)\,b({\ell},m,n_2,n_5,n_6,n_8,n_9,n_{12})\\
&=8\,b({\ell}-1,m,n_2-1,n_5,n_6,n_8,n_9,n_{12})\\
&\ \ \ \ +12k\,b({\ell}-1,m+2,n_2,n_5,n_6,n_8,n_9,n_{12})+\cdots.
\end{align*}
Shifting by \(n_2\,{\rightarrow}\,n_2\,{+}\,1\), 
the first term in the right hand side is expressed as
\begin{align*}
b({\ell}-1,&m,n_2,n_5,n_6,n_8,n_9,n_{12})\\
&=-6(m-k-2)\,b({\ell},m,n_2+1,n_5,n_6,n_8,n_9,n_{12})\\
&\quad -\tfrac32(k+2)\,b({\ell}-1,m+2,n_2,n_5,n_6,n_8,n_9,n_{12})+\cdots.
\end{align*}
On the right hand side\ we now have two terms of non-negative relative
weight as above of relative weight \(+2\) which comes from 
the underlined term in \(R_3\) and of relative weight \(0\) 
which comes from the underlined term in \(R_4\).  
Putting \(\ell=1\), \(m=0\), we have, 
say \,\(T_{3,4}^{(0)}\), \,that
\begin{align*}
b(&0,0,n_2,n_5,n_6,n_8,n_9,n_{12})\\
&=-6\,(k-3+5\,\ell+2\,m)\,
b(1,0,n_2+1,n_5,n_6,n_8,n_9,n_{12})\\ 
&\qquad -\tfrac32\,(k-3+5\,\ell+2\,m)\,b(0,2,n_2,n_5,n_6,
n_8,n_9,n_{12})+\cdots.
\end{align*}
The first term in the right hand side has \(\ell=1\), \(m=0\), 
and   \(k=5+2n_2+\dots>0\).  
Hence we can apply  \(R_7\) to this term to give 
a term with maximum relative weight \(+2-5=-3\).  
The second term has \(\ell=0\), \(m=2\), 
and  \(k=1+2n_2+5n_5+\cdots\) \ so \(k+1>2\) and hence \(k+1\ne m\).  
For this term we can apply  \(S_{3,4}\) to produce a term 
of maximum  relative weight \(0-2=-2\).  
Hence both terms of weight  \(\geq 0\)  can be expressed 
as terms of relative weight \(\leq -2\), so our chain  eventually 
decreases in weight. \relax
\indent
The case \(\ell=0\), \(m=(k+1)\) is treated as follows.  
Take \(R_3\), shift by \(m\rightarrow m-1\), 
and set \(\ell=0\) to get
\begin{align*}
  R_3^{(0)}: \ &b(0,m,n_2,n_5,n_6,n_8,n_9,n_{12})\\
  &=-2(m-1)\,b(1,m-2,n_2,n_5,n_6,n_8,n_9,n_{12})
  +\mbox{\lq\lq lower weight terms''}.
\end{align*}
Now the first term on the right, \(b(1,m-2,n_2,n_5,n_6,n_8,n_9,n_{12})\), 
is of the same weight as the term on the left.  
Write this as
\begin{equation*}
b(1,m',n_2,n_5,n_6,n_8,n_9,n_{12}), 
\end{equation*}
with corresponding \(k\)-value  \(k'\).  
If \(\mathrm{min}(k', m')<0\) then this term is zero as discussed above.  
If \(k'=0\), \(m'=0\), it is easy to show that \(m'=n_2=n_5=n_8=n_6=n_9=n_{12}=0\), 
and this term \(b(1,0,0,0,0,0,0,0)\) cannot be reduced further.  
Otherwise one or both of \(k'=0\), \(m'\) is positive, so we can apply \(R_7\) or
\(R_6\) to reduce the term to terms of relative weight \(\leq -2\), so
our chain terminates or decreases in weight.  These choices, plus the
requirement discussed above that
\(b(\ell,m,n_2,n_5,n_6,n_8,n_9,n_{12})=0\) if any of the
\(\{k,\ell,m,n_2,n_5,n_6,n_8,n_9,n_{12}\}\) are negative, define all
the \(b(\ell,m,n_2,n_5,n_6,n_8,n_9,n_{12})\) in terms of the so-far undefined 
\,\(b(1,0,0,0,0,0,0,0)\).  
Therefore the solution of the system
\begin{equation*}
(L_j-H_j)\sigma(u)=0\ \ (j=0,\ 3,\ 4,\ 6,\ 7,\ 10)
\end{equation*}
is of dimension one. 
Choosing \(b(1,0,0,0,0,0,0,0)=1\), we summarise with 
\(k=5-5\ell-2m+2n_2+5n_5+8n_8+6n_6+9n_9+12n_{12}\) \  
as defined as above as follows:
\begin{equation*}
\begin{aligned}
b(\ell, &m, n_2,n_5, n_8, n_6, n_9, n_{12}) = \\
&=\begin{cases}
  \ 0      & \quad \mbox{if \ \(\mathrm{min}(k,\ell,m,n_2,n_5,n_6,n_8,n_9,n_{12})<0\)},  \\
\ 1  & \quad \mbox{if \ \(\ell=1\), \(m=n_2=n_5=n_8=n_6=n_9=n_{12}=0\)}, \\
\ \mathrm{rhs}(R_{10})         & \quad \mbox{if \ \(\ell>1\)},\\
\ \mathrm{rhs}(R_7)            & \quad \mbox{if \ \(\ell>0\), \(m>0\)},\\
\ \mathrm{rhs}(R_6)            & \quad \mbox{if \ \(\ell>0\), \(k>0\)},\\
\ \mathrm{rhs}(S_{3,4})        & \quad \mbox{if \ \(\ell=0\), \(m\ne 0\)
  and \(m\ne(k+1)\)},\\
\ \mathrm{rhs}(T_{3,4}^{(0)})  & \quad \mbox{if \ \(l=m=0\)},\\
\ \mathrm{rhs}(R_3^{(0)})      & \quad \mbox{if \ \(\ell=0\) and \(m=(k+1)\)}.
\end{cases}
\end{aligned}
\end{equation*}
We have used this to calculate the
terms in the sigma series up to weight 40 in \(\{u_j\}\), 
or equivalently, weight 35 in the \(\{\mu_i\}\).
The first few terms of the sigma expansion are given as
follows (up to a constant multiple):
\begin{align*}
\sigma({u_5}&,{u_2},{u_1}) =
{u_5}+\frac{6\,u_1^5}{5!}
-2\frac{u_1{u_2}^2}{2!} 
-2\mu_2\frac{u_1^3{u_2}^2}{2!\,3!}
+30\mu_2\frac{{u_1}^7}{7!}
-2{\mu_2}^2\frac{u_1{u_2}^4}{4!}\\
&-2{\mu_2}^2\frac{{u_1}^5{u_2}^2}{2!\,5!}
+126\,{\mu_2}^2\frac {{u_1}^9}{9!}
+24\mu_5\frac{{u_1}^8u_2}{8!}
-\mu_5\frac{u_5u_2{u_1}^3}{3!}
+8\mu_5\frac{{u_2}^5}{5!}\\
&-2\mu_6\frac{2\,u_5{u_2}^2{u_1}^2}{2!\,2!}
+6\mu_6\frac{u_5{u_1}^6}{6!}
 +24\mu_6\frac{{u_1}^7{u_2}^2}{2!\,7!}
-2{\mu_3}^3\frac {{u_1}^7{u_2}^2}{2!\,7!}\\
&-2{\mu_2}^3\frac{{u_1}^3{u_2}^4}{4!\,3!}
+432\mu_6\frac{{u_1}^{11}}{11!}
+510{\mu_2}^3\frac{{u_1}^{11}}{11!}\\
&-\mu_2\mu_5{u_1}^5\frac {u_5u_2}{5!}
-\mu_2\mu_5\frac{u_5{u_2}^3u_1}{3}
+288\mu_2\mu_5\frac{{u_1}^{10}u_2}{10!}+\cdots.
\end{align*} 
\newpage
\bibliographystyle{amsplain}
% \addcontentsline{toc}{section}{Bibliography}
\vskip -20pt
\bibliography{refers}
\end{document}